\newcites{latex}{References}
\newtheorem{Thm}{\underline{\bf Theorem}}
\newtheorem{Assmp}{\underline{\bf Assumptions}}
\newtheorem*{Proof*}{Proof}
\newtheorem{Lem}{\underline{\bf Lemma}}
\def\cC{\mathbb{C}}
\def\rR{\mathbb{R}}
\def\eE{\mathbb{E}}
\def\Z{{\cal Z}}
\def\diag{\hbox{diag}}
\def\Ind{\hbox{I}}
\def\wh{\widehat}
\def\wt{\widetilde}
\def\diag{\hbox{diag}}
\def\log{\hbox{log}}
\def\var{\hbox{var}}
\def\cov{\hbox{cov}}
\def\newlog{\hbox{newlog}}
\def\Bernoulli{\hbox{Bernoulli}}
\def\Dir{\hbox{Dir}}
\def\IG{\hbox{Inv-Ga}}
\def\Laplace{\hbox{Laplace}}
\def\LN{\hbox{LN}}
\def\MVN{\hbox{MVN}}
\def\MVLN{\hbox{MVLN}}
\def\Normal{\hbox{Normal}}
\def\TN{\hbox{TN}}
\def\Unif{\hbox{Unif}}
\def\P_25_ICML{{\it Proceedings of the 25th international conference on Machine learning}}
\def\bse{\begin{eqnarray*}}
\def\ese{\end{eqnarray*}}
\def\be{\begin{eqnarray}}
\def\ee{\end{eqnarray}}
\def\bq{\begin{equation}}
\def\eq{\end{equation}}
\def\wh{\widehat}
\def\trans{^{\rm T}}
\def\th{^{th}}
\def\b1e{{\mathbf e}}
\def\bB{{\mathbf B}}
\def\bC{{\mathbf C}}
\def\bD{{\mathbf D}}
\def\bI{{\mathbf I}}
\def\bL{{\mathbf L}}
\def\bP{{\mathbf P}}
\def\bR{{\mathbf R}}
\def\bS{{\mathbf S}}
\def\bt{{\mathbf t}}
\def\bT{{\mathbf T}}
\def\bu{{\mathbf u}}
\def\bU{{\mathbf U}}
\def\bV{{\mathbf V}}
\def\bW{{\mathbf W}}
\def\bx{{\mathbf x}}
\def\bX{{\mathbf X}}
\def\by{{\mathbf y}}
\def\bY{{\mathbf Y}}
\def\bZ{{\mathbf Z}}
\def\bS{{\mathbf S}}
\def\bzero{{\mathbf 0}}
\def\tmu{\widetilde{\mu}}
\newcommand{\bmu}{\mbox{\boldmath $\mu$}}
\newcommand{\bpi}{\mbox{\boldmath $\pi$}}
\newcommand{\bxi}{\mbox{\boldmath $\xi$}}
\newcommand{\bvartheta}{\mbox{\boldmath $\vartheta$}}
\newcommand{\bepsilon}{\mbox{\boldmath $\epsilon$}}
\newcommand{\btheta}{\mbox{\boldmath $\theta$}}
\newcommand{\bbeta}{\mbox{\boldmath $\beta$}}
\newcommand{\bzeta}{\mbox{\boldmath $\zeta$}}
\newcommand{\bsigma}{\mbox{\boldmath $\sigma$}}
\newcommand{\bSigma}{\mbox{\boldmath $\Sigma$}}
\newcommand{\bLambda}{\mbox{\boldmath $\Lambda$}}
\newcommand{\abs}[1]{\left\vert#1\right\vert}
\renewcommand\footnoterule{\kern-3pt \hrule \textwidth 2in \kern 2.6pt}
\def\boxit#1{\vbox{\hrule\hbox{\vrule\kern6pt \vbox{\kern6pt \textcolor{blue}{#1}\kern6pt}\kern6pt\vrule}\hrule}}
\def\authorfootnote#1{{\let\thefootnote\relax\footnotetext{#1}}}
\begin{document}
\thispagestyle{empty}
\baselineskip=28pt

\begin{center}
%{\LARGE{\bf Bayesian Semiparametric Copula Density Deconvolution for Zero Inflated Data with Applications in Nutritional Epidemiology}}
{\LARGE{\bf Bayesian Copula Density Deconvolution for Zero-Inflated Data in Nutritional Epidemiology}}
\end{center}
\baselineskip=12pt

\vskip 2mm
\begin{center}
Abhra Sarkar\\
abhra.sarkar@utexas.edu \\
Department of Statistics and Data Sciences,
The University of Texas at Austin\\
2317 Speedway D9800, Austin, TX 78712-1823, USA\\
\hskip 5mm \\
Debdeep Pati and Bani K. Mallick\\
debdeep@stat.tamu.edu and 
bmallick@stat.tamu.edu\\
Department of Statistics, Texas A\&M University\\ 
3143 TAMU, College Station, TX 77843-3143, USA\\
\hskip 5mm \\
Raymond J. Carroll\\
carroll@stat.tamu.edu\\
Department of Statistics, Texas A\&M University\\ 
3143 TAMU, College Station, TX 77843-3143, USA\\
School of Mathematical and Physical Sciences, University of Technology Sydney\\ 
Broadway NSW 2007, Australia\\
\end{center}

\vskip 8mm
\begin{center}
{\Large{\bf Abstract}} 
\end{center}
Estimating the marginal and joint densities of the long-term average intakes of different dietary components is an important problem in nutritional epidemiology. 
Since these variables cannot be directly measured, data are usually collected in the form of 24-hour recalls of the intakes, which show marked patterns of conditional heteroscedasticity. 
Significantly compounding the challenges, the recalls for episodically consumed dietary components also include exact zeros. 
The problem of estimating the density of the latent long-time intakes from their observed measurement error contaminated proxies 
is then a problem of deconvolution of densities with zero-inflated data. 
We propose a Bayesian semiparametric solution to the problem, building on a novel hierarchical latent variable framework that translates the problem to one involving continuous surrogates only. 
Crucial to accommodating important aspects of the problem, 
we then design a copula based approach to model the involved joint distributions, adopting different modeling strategies for the marginals of the different dietary components. 
%We provide theoretical results on model identifiability. 
We design efficient Markov chain Monte Carlo algorithms for posterior inference and illustrate the efficacy of the proposed method through simulation experiments. 
Applied to our motivating nutritional epidemiology problems, compared to other approaches, our method provides more realistic estimates of the consumption patterns of episodically consumed dietary components. 
\baselineskip=12pt

\vskip 8mm
\baselineskip=12pt
%\par\vfill
\noindent\underline{\bf Some Key Words}: Copula, Density deconvolution, %Dietary guidelines, Healthy Eating Index (HEI), 
Measurement error, Nutritional epidemiology, Zero inflated data.

\par\medskip\noindent
\underline{\bf Short/Running Title}: Deconvolution for Zero Inflated Data

\par\medskip\noindent
\underline{\bf Corresponding Author}: Abhra Sarkar (abhra.sarkar@utexas.edu) 

\clearpage\pagebreak\newpage
\pagenumbering{arabic}
\newlength{\gnat}
\setlength{\gnat}{16pt}
\baselineskip=\gnat

%\linenumbers

\section{Introduction}

{\bf Problem Statement:} Dietary habits are important for our general health and well-being, having been known to play important roles in the etiology of many chronic diseases. 
%The distribution of the dietary intakes can provide answers to important questions 
%such as what proportion of the population consume certain dietary components above, between or below certain amounts etc. 
%The last question is particularly important as it relates to the proportion of the population that are deficient in certain dietary components. 
Estimating the long-term average intakes of different dietary components $\bX$ and their marginal and joint distributions 
is thus a fundamentally important problem in nutritional epidemiology. 

The dietary component may be a nutrient, like sodium, vitamin A etc., or a food group, like milk, whole grains etc. 
In any case, by the very nature of the problem, $\bX$ can never be observed directly. 
Data are thus often collected in the form of 24-hour recalls of the intakes. 
Many of the dietary components of interest are daily consumed.
Examples include total grains, sodium, etc., the recalls for which are all continuous, comprising only strictly positive intakes.
%Compounding the challenge, the dietary components of interest may include both daily consumed components as well as episodically consumed components.
%Examples of daily consumed dietary components include total grains, meat and beans etc,
%the recalls for which are all continuous, comprising only strictly positive intakes.
Compounding the challenge, interest may additionally lie in episodically consumed components 
whose long-term average intake is assumed to be strictly positive 
but the recalls are semicontinuous,
comprising positive recalls for consumption days and exact zero recalls for non-consumption days.
Examples include milk, whole grains etc.

Since dietary patterns often vary with energy levels, measured in total caloric intake, 
adjustments with energy provide a way of standardizing the dietary assessments. 
The recalls for energy are always continuous. From a statistical viewpoint they can thus be treated just like the regular components, 
and hence, with some abuse, will be referred to as such. 

When the recalls are recorded within a relatively short span of time, 
it may be assumed that the participants' dietary patterns $\bX$ will not have changed significantly over this period. 
Treating the recalls $\bY$, like the ones shown in Table \ref{tab: EATS data structure}, to be surrogates for the latent $\bX$, contaminated by measurement errors $\bU$, 
the problem of estimating the joint and marginal distributions of $\bX$ from the recalls $\bY$ then translates to a problem of multivariate deconvolution of densities with exact zero surrogates for some of the components. 

Throughout we adopt the following generic notation for marginal, joint and conditional densities, respectively. 
For random vectors $\bS$ and $\bT$, we denote the marginal density of $\bS$, 
the joint density of $(\bS,\bT)$, 
and the conditional density of $\bS$ given $\bT$, 
by the generic notation $f_{\bS}, f_{\bS,\bT}$ and $f_{\bS\vert \bT}$, respectively.
Likewise, for univariate random variables $S$ and $T$, the corresponding densities are denoted by $f_{S},f_{S,T}$ and $f_{S\mid T}$, respectively. 
Additional summaries of the variables and notations used can be found in Table \ref{tab: var description} below. 

{\bf The EATS Data Set and Its Prominent Features:} 
The main motivation behind the research being reported here comes from the Eating at America's Table Study (EATS) \citep{Subar2001}, 
a large scale epidemiological study conducted by the National Cancer Institute 
in which $i=1,\dots,n = 965$ participants were interviewed $j=1,\dots,m_{i}=4$ times over the course of a year and their 24-hour dietary recalls were recorded. 
%By their very definition, these components are not consumed daily and hence

Data on many different dietary components were recorded in the EATS study, 
including episodic components milk and whole grains, whose recalls involved approximately $21\%$ and $37\%$ exact zeros, respectively.
Table \ref{tab: EATS data structure} shows the general structure of this data set for one regularly consumed and one episodically consumed dietary component.

\begin{table}[!ht]\footnotesize
\begin{center}
\begin{tabular}{|c|c c c c|c c c c|}
\hline
\multirow{2}{35pt}{Subject} & \multicolumn{8}{|c|}{24-hour recalls} \\ \cline{2-9}
   & \multicolumn{4}{|c|}{Episodic Component}  					&  \multicolumn{4}{|c|}{Regular Component}	\\ \hline\hline
1 & $Y_{e,1,1}$ 	& $Y_{e,1,2}$ 	& $Y_{e,1,3}$ 	& $Y_{e,1,4}$ 		& $Y_{r,1,1}$ 	& $Y_{r,1,2}$ 	& $Y_{r,1,3}$ 	& $Y_{r,1,4}$\\ \hline
2 & $0$ 		 	& $Y_{e,2,2}$ 	& $Y_{e,2,3}$ 	& $Y_{e,2,4}$ 		& $Y_{r,1,1}$ 	& $Y_{r,2,2}$ 	& $Y_{r,2,3}$ 	& $Y_{r,2,4}$\\ \hline
3 & $Y_{e,3,1}$ 	& $0$ 		& $Y_{e,3,3}$ 	& $Y_{e,3,4}$ 		& $Y_{r,3,1}$ 	& $Y_{r,3,2}$ 	& $Y_{r,3,3}$ 	& $Y_{r,3,4}$\\ \hline
4 & $0$ 			& $Y_{e,4,2}$ 	& $Y_{e,4,3}$ 	& $0$ 			& $Y_{r,4,1}$ 	& $Y_{r,4,2}$ 	& $Y_{r,4,3}$ 	& $Y_{r,4,4}$ \\ \hline
$\cdots$ & $\cdots$ 	& $\cdots$ 	& $\cdots$ 	& $\cdots$ 		& $\cdots$ 	& $\cdots$ 	& $\cdots$ 	& $\cdots$ \\ \hline
n & $Y_{e,n,1}$ 	& $Y_{e,n,2}$ 	& $0$ 		& $0$ 			& $Y_{r,n,1}$ 	& $Y_{r,n,2}$ 	& $Y_{r,n,3}$ 	& $Y_{r,n,4}$ \\
\hline
\end{tabular}
\caption{\baselineskip=10pt The general structure of the EATS data set showing the recalls for one episodically consumed and one regularly consumed dietary component. 
Here $Y_{\ell,i,j}$ is the reported intake for the $j\th$ recall of the $i\th$ individual for the $\ell\th$ dietary component. 
%For the episodic components, some values are exact zeros and the remaining are positive and continuous. 
%For the regular components, all values are positive and continuous. 
\vspace{-20pt}
}
\label{tab: EATS data structure}
\end{center}
\end{table}

%As seen in Table \ref{tab: EATS data structure}, exact zeros are important features of recalls for episodic components.  
Patterns of conditional heteroscedasticity are also generally very prominent in dietary recall data. 
See, for example, the right panels of Figure \ref{fig: EATS Sodium & Energy} 
which shows the plot of subject-specific means $\overline{Y}_{\ell,i}=\sum_{j=1}^{m_{i}}Y_{\ell,i,j}/4$ vs subject-specific variances $S_{Y,\ell,i}^{2}=\sum_{j=1}^{m_{i}}(Y_{\ell,i,j}-\overline{Y}_{\ell,i})^{2}/3$ 
for the 24-hour recalls of sodium and energy, 
which provide crude estimates of the underlying true intakes $X_{\ell,i}$ and the conditional measurement error variances $\var(U_{\ell,i,j} \vert X_{\ell,i})$, respectively, 
suggesting strongly that $\var(U \vert X)$ increases as $X$ increases.  
Similar observation can also be made for positive recalls of episodic components from the middle panels of Figure \ref{fig: EATS Milk & Whole Grains}.

\begin{figure}[!ht]
\begin{center}
\includegraphics[height=10cm, width=11cm, trim=0cm 0cm 10.5cm 0cm, clip=true]{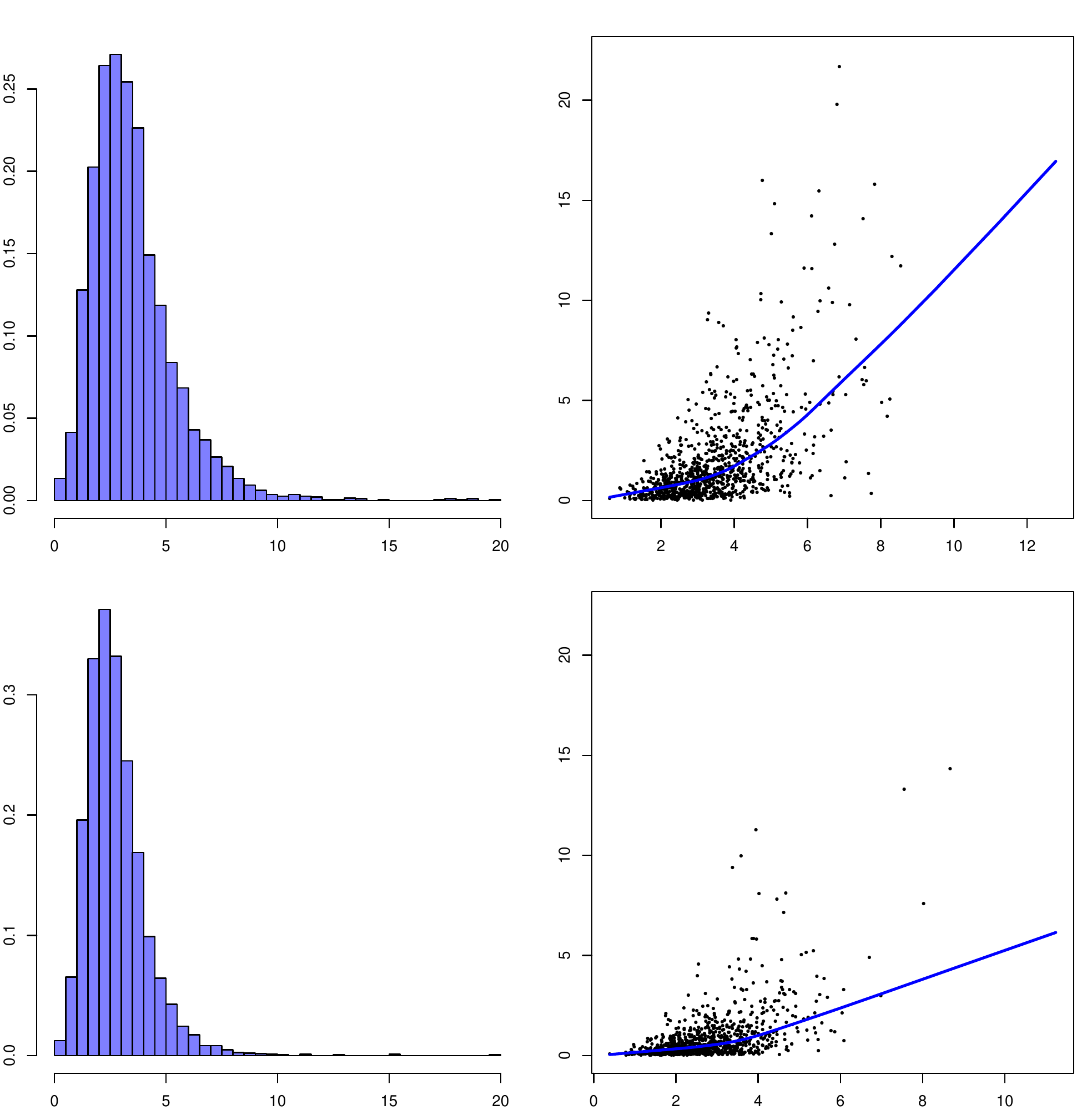}
\end{center}
\caption{\baselineskip=10pt Exploratory plots for sodium (top row) and energy (bottom row). 
Left panels: histogram of recalls $Y_{\ell,i,j}$; 
right panels: subject-specific means $\overline{Y}_{\ell,i}$ vs variances $S_{Y,\ell,i}^{2}$. 
}
\label{fig: EATS Sodium & Energy}
\end{figure}

As can be seen from the right and middle panels in Figures \ref{fig: EATS Sodium & Energy} and \ref{fig: EATS Milk & Whole Grains}, respectively, 
for both regular and episodic components, 
the variability of the positive recalls naturally decreases to zero as the average intake on consumption days decreases to zero. 
For all regularly consumed components, the histograms of the recalls are mildly right skewed bell shaped.
The histograms for the episodically consumed components are, however, reflected J-shaped - 
the frequencies of the bins start with their largest value at the left end and then rapidly decrease as we move to the right. 
These imply that, 
for regularly consumed components, the distributions of the true long-term average intakes smooth out near both ends, 
whereas, for episodically consumed components, the distributions of the true long-term average intakes have discontinuities at zero.    
The right panels of Figure \ref{fig: EATS Milk & Whole Grains} also show that, as expected, individuals consuming an episodic component in smaller amounts also consume it less often on average.

{\bf Existing Methods and Their Limitations:} 
The literature on univariate density deconvolution for continuous surrogates, 
in which context we denote the variable of interest by $X$ and the measurement errors by $U$, is massive. 
The early literature, however, focused on scenarios with restrictive assumptions, 
such as known measurement error distribution, homoscedasticity of the errors, their independence from $X$ etc, 
which are all highly unrealistic, 
especially in nutritional epidemiology applications like ours. 
Reviews of these early methods can be found in \cite{Carroll2006} and \cite{Buonaccorsi2010}. 
We cite below some relatively recent ideas that are directly relevant to our proposed solution.  

%In practical applications, including the aforementioned motivating nutritional epidemiology applications, the error distribution is almost never known. 
%Deconvolution in presence of replicated proxies but homoscedastic errors from unknown distributions has been addressed by \cite{LiVuong1998}, \cite{DiggleHall1993} and \cite{CarrollHall2004}, among others.

%Schennach (2004) and Hu and Schennach (2008) developed sieve based methods that can handle conditional heteroscedasticity.  
%As shown by the seminal work by \cite{Staudenmayer_etal:2008}, Bayesian frameworks can accommodate measurement errors through natural hierarchies, 
%thus providing very powerful tools for solving complex deconvolution problems under more realistic scenarios. 
%To address such problems, \cite{Staudenmayer_etal:2008} took to the Bayesian route. 
Bayesian frameworks can accommodate measurement errors through natural hierarchies, 
providing powerful tools for solving complex deconvolution problems, 
including scenarios when the measurement errors can be conditionally heteroscedastic. 
Taking such a route, \cite{Staudenmayer_etal:2008} assumed the measurement errors to be normally distributed but allowed the variability of $U$ to depend on $X$, 
utilizing mixtures of B-splines to estimate $f_{X}$ as well the conditional variability $\var(U \vert X)$.
%The other works, most relevant to the current paper, are the Bayesian approaches developed in \cite{Sarkar_etal:2014, sarkar2018bayesian}. 
 \cite{Sarkar_etal:2014} relaxed the assumption of normality of $U$, 
employing flexible mixtures of normals \citep{Escobar_West:1995, fruhwirth2006finite} to model both $f_{X}$ and $f_{U\vert X}$. 
\cite{sarkar2018bayesian} extended the methods to multivariate settings, 
modeling $f_{\bX}$ and $f_{\bU\vert \bX}$ using mixtures of multivariate normals. 
%providing a flexible framework for estimating the joint distribution of average daily intakes of regularly consumed components. 

\begin{figure}[!ht]
\begin{center}
\includegraphics[height=10cm, width=17cm, trim=0cm 0cm 0cm 0cm, clip=true]{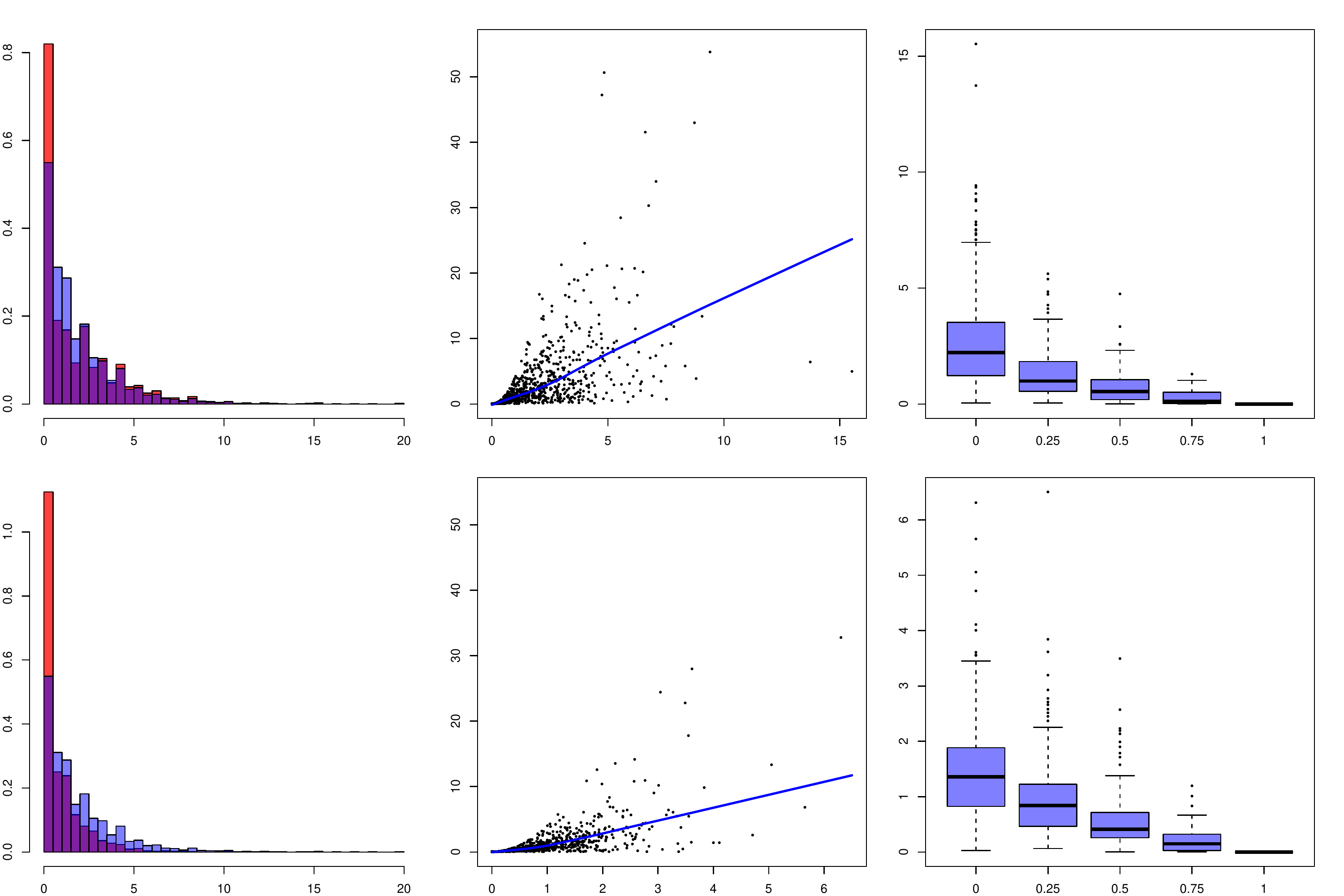}
\end{center}
\caption{\baselineskip=10pt Exploratory plots for milk (top row) and whole grains (bottom row). 
Left panels: histogram of recalls $Y_{\ell,i,j}$ (red) and histogram of strictly positive recalls $Y_{\ell,i,j}(>0)$ (blue) superimposed on each other; 
middle panels: subject-specific means $\overline{Y}_{\ell,i}$ vs subject-specific variances $S_{Y,\ell,i}^{2}$ when multiple strictly positive recalls are available; 
right panels: box plots of proportion of zero recalls vs corresponding subject-specific means $\overline{Y}_{\ell,i}$.
}
\label{fig: EATS Milk & Whole Grains}
\end{figure}

While \cite{Staudenmayer_etal:2008} and \cite{Sarkar_etal:2014, sarkar2018bayesian} provided progressively flexible frameworks for 
univariate and multivariate deconvolution with continuously measured surrogates, 
they can not directly handle multivariate zero-inflated dietary recall data. 
There are several restrictive aspects of their approaches that also do not allow them to be straightforwardly extended to deconvolution problems with zero-inflated surrogates, 
%also make them insufficiently flexible for deconvolution problems with zero-inflated surrogates, 
as we outline shortly while describing our proposed approach.  

%First and foremost, these methods were not designed to accommodate zero-inflated semi-continuous data. 
%Their brute-force application to such problems will result in highly unrealistic density estimates.

%Under our proposed framework, 
%the deconvolution problem reduces to that of modeling the joint density of the multivariate latent consumptions and the joint conditional density of the higher dimensional  errors and pseudo-errors given the latent consumptions. 
%
The problem of estimating long-term nutritional intakes of a single episodic dietary component 
from zero-inflated recall data has previously been considered in \cite{Tooze2002, Tooze2006, Kipnis_etal:2009, Zhang2011a}. 
The work was extended to multivariate settings with both episodic and regular components in \cite{Zhang2011b}. 
%These approaches all made clever use of additional latent variables to transform the zero recalls to latent continuous surrogates. 
%The ideas we develop here build on these previous works but also crucially differ from these approaches on a number of aspects. 
These approaches all worked with component-wise Box-Cox transformed \citep{box1964analysis} positive recalls which were then assumed to decompose into a subject specific random effect component and an error or pseudo-error component.  
Assumed independent and homoscedastic, these components were then both modeled using single component multivariate normal distributions.  
Estimates of the long-term consumption day intakes were then obtained via individual transformations back to the original scale. 
Long-term episodic consumptions were finally defined combining these estimates with probabilities of reporting non-consumptions. 
As shown in \cite{Sarkar_etal:2014}, Box-Cox transformations for surrogate observations have severe limitations, 
including almost never being able to produce transformed surrogates that conform to normality, homoscedasticity, and independence. 
%Single component multivariate normal models are thus insufficiently flexible for the densities even after transformations.
Transformation-retransformation based methods are thus highly restrictive, even for univariate regularly consumed components. 
%The latent variable frameworks developed in \cite{Zhang2011a, Zhang2011b} also allow the possibility of positive recalls even when the true consumption approaches zero. 

Despite the limitations, to our knowledge, \cite{Zhang2011b} is the only available method that can handle multivariate zero-inflated dietary recall data. 
It is thus also our main and only competitor.

{\bf Outline of Our Proposed Method:}
In this article, we develop a Bayesian semiparametric density deconvolution approach specifically designed to address problems with zero-inflated surrogates, 
%taking inspirations from previous strategies but also incorporating many novelties, 
carefully accommodating all prominent features of the EATS data set described above. 
%Our proposed methodology accommodates important features of zero-inflated dietary recall data, 
%building on an augmented latent variable framework 
We build on an augmented latent variable framework 
which introduces, for each recall of the episodically consumed component, one or two latent continuous proxies, 
depending on whether the recall was positive or exact zero, 
effectively translating a deconvolution problem with zero-inflated data to one with all continuous surrogates, albeit some latent ones. 
This requires modeling an additional pseudo-error distribution for each episodically consumed component,
but returns, as potentially useful by-products, 
estimates of the probabilities of reporting zero recalls for the episodically consumed dietary components. 
As the right panels of Figure \ref{fig: EATS Milk & Whole Grains} suggest, 
%on average, 
individuals who consume an episodic component less often (in other words, report more zero recalls) 
naturally also consume the component in smaller amounts in the long run.
The probabilities of reporting zero consumptions are thus informative about the true long-term consumption amounts and conversely. 
Our proposed latent variable framework appropriately recognizes these features.  
%returning, as useful by-products, 
%estimates of the probabilities of reporting non-consumption of episodically consumed dietary components. 

Even though the multivariate latent consumptions $\bX$ and the associated multivariate errors and pseudo-errors $\bU$ 
become all strictly continuous in our augmented latent variable framework, 
the approach of \cite{sarkar2018bayesian} to model their distributions using mixtures of multivariate normals is still fraught with serious practical drawbacks 
as it does not allow much flexibility in modeling the univariate marginals $f_{X_{\ell}}$ and $f_{U_{\ell}\vert X_{\ell}}$, 
%which is crucially needed in modeling the widely varying marginals of the episodically and the regularly consumed dietary components, 
especially the marginals of the episodic components which have discontinuities at zero.  
The issue becomes more critical when inference is based on samples drawn from the posterior using Markov chain Monte Carlo (MCMC) algorithms. 
The latent $\bX_{i}$'s are also sampled in the process 
and the specific parametric form of the assumed multivariate mixture kernel may influence this step in ways 
that result in density estimates closely resembling its parametric form 
even when the shape of the true density departs from it. 

As opposed to \cite{sarkar2018bayesian} who focused on modeling the joint distributions $f_{\bX}$ and $f_{\bU\mid\bX}$ first and then deriving the marginals from those estimates, 
we take the opposite approach of modeling the marginals $f_{X_{\ell}}$ and $f_{U_{\ell}\vert X_{\ell}}$ first 
and then build the joint distributions $f_{\bX}$ and $f_{\bU\mid\bX}$ by modeling the dependence structures separately using Gaussian copulas. 
This approach allows us adopt different strategies for modeling the different components of $f_{\bX}$ and $f_{\bU \mid \bX}$ 
which proved crucial in accommodating the important features of our motivating data sets. 
Following \cite{Sarkar_etal:2014}, we use flexible mixtures of mean restricted normals 
and mixtures of B-splines to model $f_{U_{\ell}\vert X_{\ell}}$'s and the associated conditional heteroscedasticity functions.
Mixtures of normal kernels, as in \cite{Sarkar_etal:2014}, are, however, not suitable for modeling $f_{X_{\ell}}$'s. 
We use normalized mixtures of B-splines and mixtures of truncated normal kernels instead which are well suited to model densities with bounded supports and discontinuities at the boundaries. 

The literature on copula models in measurement error free scenarios is vast. See, for example, \cite{nelsen2007introduction, joe2015dependence, shemyakin2017introduction} and the references therein. 
We are, however, unaware of any published work in the context of measurement error problems. 
%But the only adaptation we could find in the context of measurement error problems is an unpublished technical report  by \cite{LinCarriquiry2016} 
%implementing a Gaussian copula model for a much simpler bivariate deconvolution problem with strictly continuous surrogates 
%under the restrictive assumptions of independently distributed errors, 
%with the marginals estimated by deconvoluting kernel based methods \citep{StefanskiCarroll1990}
%and the single copula correlation parameter estimated separately afterward, maximizing a pseudo-likelihood. 
%In addition to the restrictive assumptions, deconvolution kernels have previously shown to be vastly outperformed by mixture model based approaches \citep{Staudenmayer_etal:2008, Sarkar_etal:2014}. 

In contrast to \cite{Zhang2011b}, we model the densities of the latent consumptions and the error and pseudo-errors more directly 
using flexible models that can accommodate widely varying shapes with discontinuous boundaries as well as conditional heteroscedasticity. 
In our latent variable framework, the probability of reporting zero recalls depends directly on the latent true consumption day intake, hence informing each other. 
%We review these issues again in greater detail at the end of Section \ref{sec: models}. 
Applied to our motivating nutritional epidemiology problems, our method thus provides more realistic estimates of the intakes of the episodically consumed dietary components. 
Additional detailed comparisons of our method with previous approaches for zero-inflated data are presented in Section \ref{sec: mvt copula comparison with NCI} in the supplementary material. 

Compared to all previously existing density deconvolution methods, 
including traditional methods for strictly continuous data as well as methods designed specifically for zero-inflated data, 
our proposed approach is thus fundamentally novel %on many different aspects 
while also being broadly applicable to both scenarios.

{\bf Outline of the Article:} The rest of the article is organized as follows. 
%Section \ref{sec: copula basics} provides a brief overview of copula functions. 
Section \ref{sec: models} details the proposed Bayesian hierarchical framework.  
%Model identifiability issues are discussed in Section \ref{sec: model identifiability}. 
%Section \ref{sec: choice of hyper-parameters} discusses the choice of hyper-parameters.
Simulation studies comparing the proposed method to its main competitor are presented in Section \ref{sec: simulation studies}. 
Section \ref{sec: applications} presents results of our proposed method applied to the motivating nutritional epidemiology problems. 
Section \ref{sec: discussion} concludes with a discussion.
%An appendix includes additional results. 
%
A brief review of copula, a detailed comparison of our method with previous approaches to zero-inflated data, 
a Markov chain Monte Carlo (MCMC) algorithm to sample from the posterior and some additional results are included in the supplementary material. 
%An unnumbered section concludes the article with a description of the supplementary material. 

%\newpage
\section{Deconvolution Models} \label{sec: models}

\subsection{Latent Variable Framework} \label{sec: mvt copula latent variable framework}

Our goal is to estimate the marginal and joint consumption patterns of $q+p$ dietary components 
of which the first $q$ are episodically consumed and the latter $p$ are regularly consumed, including energy.
There are a total of $n$ subjects with $m_{i}$ 24-hour recalls recorded for the $i\th$ subject.
%Adopting a slightly different notation than that used for Table \ref{tab: EATS data structure}, 
We let $\bY_{i,j} = (Y_{1,i,j},\dots,Y_{2q+p,i,j})\trans$ denote the observed data for the $j\th$ recall of the $i\th$ individual. 
For $\ell=1\dots,q$, $Y_{\ell,i,j}$ is the indicator of whether the $\ell\th$ episodic component is reported to have been consumed. 
For $\ell=q+1,\dots,2q$, $Y_{\ell,i,j}$ is the reported intake of the $\ell\th$ episodically consumed component, 
and for $\ell=2q+1,\dots,2q+p$, $Y_{\ell,i,j}$ is the reported intake of the $\ell\th$ regularly consumed component.
Let $\bW_{i,j} = (W_{1,i,j},\dots,W_{2q+p,i,j})\trans$ denote a vector with all continuous components that are related to the observed data $\bY_{i,j}$ by the relationships
\vspace{-6ex}\\
\be
Y_{\ell,i,j} &=& \Ind(W_{\ell,i,j}>0),~~~~~\hbox{for}~\ell=1,\dots,q, \nonumber\\%\label{eq: episodic Y1}\\
Y_{\ell,i,j} &=& Y_{\ell-q,i,j} W_{\ell,i,j},~~~~~\hbox{for}~\ell=q+1,\dots,2q,   \label{eq: episodic Y2}\\
Y_{\ell,i,j} &=& W_{\ell,i,j},~~~~~~~~~~~~~\hbox{for}~\ell=2q+1,\dots,2q+p.   \nonumber%\label{eq: regular Y3}
\ee
\vspace{-4ex}

For $\ell=1,\dots,q$, $W_{\ell,i,j}$ indicates whether the $\ell\th$ episodic component is reported to have been consumed in the $j\th$ recall of the $i\th$ individual 
and is always latent except that we know whether it is positive or negative. 
That is, for $\ell=1,\dots,q$, $W_{\ell,i,j}$ is always latent with $W_{\ell,i,j}<0$ if $Y_{\ell,i,j}=0$ and $Y_{q+\ell,i,j}=0$, 
and $W_{\ell,i,j} \geq 0$ if $Y_{\ell,i,j}=1$ and $Y_{q+\ell,i,j}>0$. 

For $\ell=q+1,\dots,2q$, 
$W_{\ell,i,j}$ is latent if the $\ell\th$ episodic component is reported to have not been consumed in the $j\th$ recall of the $i\th$ individual 
and is observed and equals the reported consumed positive amount $Y_{\ell,i,j}$ otherwise.  
That is, for $\ell=q+1,\dots,2q$, $W_{\ell,i,j}$ is latent if $Y_{\ell-q,i,j}=0$ and $Y_{\ell,i,j}=0$ and is observed with $W_{\ell,i,j}=Y_{\ell,i,j}$ if $Y_{\ell-q,i,j}=1$ and $Y_{\ell,i,j}>0$. 

For $\ell=2q+1,\dots,2q+p$, $W_{\ell,i,j}$ denotes the reported intake of the $\ell\th$ regular component and is always observable. 
That is, for $\ell=2q+1,\dots,2q+q$, $W_{\ell,i,j}=Y_{\ell,i,j}>0$. 

We let $\bX_{i}=(X_{1,i},\dots,X_{q+p,i})\trans$ denote the latent daily average long-term intakes of the $i\th$ individual, consumption and non-consumption days combined.  
We now let $\bX_{i}^{+}=(X_{1,i}^{+},\dots,X_{q+p,i}^{+})\trans$ denote the latent daily average long-term intakes of the $i\th$ individual on consumption days only.  
We then define $\wt\bX_{i}=(\wt{X}_{1,i},\dots,\wt{X}_{2q+p,i})\trans$ as  
\vspace{-4ex}\\
\bse
\wt{X}_{\ell,i} &=& h_{\ell}(X_{\ell,i}),~~~~~~~~~~~\hbox{for}~\ell=1,\dots,q, \label{eq: episodic X1}\\
\wt{X}_{\ell,i} &=& X_{\ell-q,i}^{+},~~~~~~~~~~~~~\hbox{for}~\ell=q+1,\dots,2q,   \label{eq: episodic X2}\\
\wt{X}_{\ell,i} &=& X_{\ell-q,i},~~~~~~~~~~~~~\hbox{for}~\ell=2q+1,\dots,2q+p.   \label{eq: regular X3}
\ese
\vspace{-4ex}\\
Here $h_{\ell}(\cdot)$ is an unknown function to be estimated from data. 
The reasons behind defining $\wt\bX_{i}$ in this manner will be clear shortly. 

For $i=1,\dots,n, j=1,\dots,m_{i}$, we let $\bU_{i,j}=(U_{1,i,j},\dots,U_{2q+p,i,j})\trans$ and consider the model
\vspace{-4ex}\\
\bse
\bW_{i,j} &=& \wt\bX_{i} + \bU_{i,j},~~~~~~~~\eE(\bU_{i,j} \mid \wt\bX_{i}) = \bzero.  \label{eq: episodic component latent and observed proxies}
\ese
\vspace{-4ex}\\
For $\ell=1,\dots,q$, $W_{\ell,i,j}$ is always latent and the associated $U_{\ell,i,j}$ represents a pseudo-error that account for their within person daily variations. 
For $\ell=q+1,\dots,2q$, $U_{\ell,i,j}$ denotes the measurement error contaminating $W_{\ell,i,j}$ when it is observed and pseudo-errors when they are latent.  
Finally, for $\ell=2q+1,\dots,2q+p$, $U_{\ell,i,j}$ denotes the measurement error contaminating $W_{\ell,i,j}$ which are always observed.  

According to our model, for $\ell=1,\dots,q$, the probability of reporting a positive consumption on the $\ell\th$ episodic component, denoted henceforth as $P_{\ell}({X}_{\ell,i})$, 
is obtained as $\Pr(Y_{\ell,i,j} = 1 \mid {X}_{\ell,i}) = \Pr(W_{\ell,i,j} >0 \mid {X}_{\ell,i}) =  \Pr\{U_{\ell,i,j} > -h_{\ell}({X}_{\ell,i}) \mid {X}_{\ell,i}\}$.
For $\ell=1,\dots,q$, we also have $\eE(Y_{\ell+q,i,j} \mid Y_{\ell,i,j}=1,\wt{X}_{\ell+q,i}) = \eE(W_{\ell+q,i,j} \mid Y_{\ell,i,j}=1,\wt{X}_{\ell+q,i}) = \eE(W_{\ell+q,i,j} \mid \wt{X}_{\ell+q,i}) = \wt{X}_{\ell+q,i}=X_{\ell,i}^{+}$. 
The positive recalls $Y_{\ell,i,j}$'s and the $W_{\ell+q,i,j}$'s, latent or observed, are thus unbiased for the latent average long-term intakes of the episodic components on consumption days only.   
For $\ell=1,\dots,q$, the expectation $\eE(Y_{\ell+q,i,j} \mid X_{\ell,i}, X_{\ell,i}^{+}) = \Pr(W_{\ell,i,j}>0 \mid {X}_{\ell,i}) \eE(W_{\ell+q,i,j} \mid X_{\ell,i}^{+}) = P_{\ell} (X_{\ell,i}) X_{\ell,i}^{+}$ 
then defines the overall long-term average intake, consumption and non-consumption days combined. 
By definition, this is also $X_{\ell,i}$, giving us the relationship $X_{\ell,i} = P_{\ell} (X_{\ell,i}) X_{\ell,i}^{+}$. 

For regularly consumed components $\ell=q+1,\dots,q+p$, of course, $\wt{X}_{\ell+q,i} = X_{\ell,i}^{+} = X_{\ell,i}$. 
The recalls in these cases are all observed and are unbiased for the latent long-term intakes as $\eE(Y_{\ell+q,i,j} \mid \wt{X}_{\ell+q,i}) = \eE(W_{\ell+q,i,j} \mid \wt{X}_{\ell+q,i}) = \wt{X}_{\ell+q,i}$.
 
Written in terms of the long-term average intakes $X_{\ell,i}$, the model thus becomes 
\vspace{-4ex}\\
\be
W_{\ell,i,j} &=& h_{\ell}(X_{\ell,i})+U_{\ell,i,j},~~~~~~~~~~~~~~~~~~~~\hbox{for}~\ell=1,\dots,q,		\nonumber\\% \label{eq: episodic W1}\\
W_{\ell,i,j} &=& X_{\ell-q,i}/P_{\ell-q}(X_{\ell-q,i})+U_{\ell,i,j},~~~~~\hbox{for}~\ell=q+1,\dots,2q,		 	\label{eq: episodic W2}\\
W_{\ell,i,j} &=& X_{\ell-q,i}+U_{\ell,i,j},~~~~~~~~~~~~~~~~~~~~~~\hbox{for}~\ell=2q+1,\dots,2q+p.  	\nonumber %\label{eq: regular W3}
\ee
\vspace{-4ex}\\
This formulation now allows the problem to be reduced to that of modeling the components $f_{\bX}$, $f_{\bU \vert \wt\bX}$ and $P_{\ell}(X_{\ell})$ in a Bayesian hierarchical framework. %which we now address. 
It also simplifies the estimation of the distribution energy-adjusted intakes. 
We address this latter problem in Section \ref{sec: mvt copula density of energy adjusted intakes}. 

The complex nature of our problem warranted the introduction of many different variables representing the many random variables of our model. 
For easy reference, these variables and a few others to be introduced shortly are listed in Table \ref{tab: var description}.

\vspace{0.5cm}
\begin{table}[!htb]
\begin{center}
\footnotesize
\begin{tabular}{|p{20mm}|p{125mm}|}
\hline
Notation 			&	Description \\ \cline{1-2}
$q$ 				&	Number of episodically consumed components. \\ \hline
$p$ 				&	Number of regularly consumed components. \\ \hline
$Y_{\ell,i,j}$		&	Observed recall of the $\ell\th$ dietary component for the $i\th$ individual on the $j\th$ sampling occasion 
					- binary for $\ell=1,\dots,q$, zero if the component was not consumed, one otherwise;  
					zero or positive continuous for $\ell=q+1,\dots,2q$, representing the reported intakes, zero when the component was not consumed, positive continuous otherwise; 
					positive continuous for $\ell=2q+1,\dots,2q+p$, representing the reported intakes. \\ \hline
$W_{\ell,i,j}$		&	Proxy recall of the $\ell\th$ dietary component for the $i\th$ individual on the $j\th$ sampling occasion - always continuous; 
					latent for $\ell=1,\dots,q$, negative if $Y_{\ell,i,j}=0$, positive if $Y_{\ell,i,j}=1$;  
					latent or observed for $\ell=q+1,\dots,2q$, latent when the component was not consumed, observed and equals $Y_{\ell,i,j}$ when a positive recall was recorded; 
					positive  for $\ell=2q+1,\dots,2q+p$, equaling $Y_{\ell,i,j}$, the reported positive intake. \\ \hline
$X_{\ell,i}$ 		&	Long-term daily average intake of the $\ell\th$ dietary component for the $i\th$ individual, consumption and non-consumption days combined. 
					Strictly positive and continuous.
					For $\ell=1,\dots,q+p$, the observed recalls $Y_{q+\ell,i,j}$ are unbiased for $X_{\ell,i}$. \\ \hline
$X_{\ell,i}^{+}$		&	Long-term daily average intake of the $\ell\th$ dietary component for the $i\th$ individual, on consumption days only. 
					Strictly positive and continuous.
					For $\ell=1,\dots,q+p$, the proxy recalls $W_{q+\ell,i,j}$'s are unbiased for the $X_{\ell,i}^{+}$'s. \\ \hline
$P_{\ell}(X_{\ell,i})$	&	Probability of reporting positive consumption on the $\ell\th$ dietary component by the $i\th$ individual on any sampling occasion. \\ \hline
$\wt{X}_{\ell,i}$		&	Functions of $X_{\ell,i}, X_{\ell,i}^{+}$ and $P_{\ell}(X_{\ell,i})$ such that $W_{\ell,i,j}$ is unbiased for $\wt{X}_{\ell,i}$. \\ \hline
$U_{\ell,i,j}$		&	Measurement errors or pseudo-errors contaminating $\wt{X}_{\ell,i}$ to generate $W_{\ell,i,j}$. 
					The $U_{\ell,i,j}$'s are all unbiased for zero.
					For $\ell=q+1,\dots,2q+p$, variability of $U_{\ell,i,j}$ depends on the associated $\wt{X}_{\ell,i}$. \\ \hline
$s_{\ell}^{2}(\wt{X}_{\ell,i})$	&	Variance function explaining how the conditional variability of $U_{\ell,i,j}$ depends on the associated $\wt{X}_{\ell,i}$ for $\ell=q+1,\dots,2q+p$. \\ \hline
$\epsilon_{\ell,i,j}$	&	Scaled measurement error or pseudo-error obtained by scaling $U_{\ell,i,j}$ by $s_{\ell}(\wt{X}_{\ell,i})$. 
					The $\epsilon_{\ell,i,j}$'s are unbiased for zero, homoscedastic and independent of $\wt{X}_{\ell,i}$. \\ \hline
$Z_{\ell,i}$		&	Long-term daily average normalized intake of the $\ell\th$ dietary component for the $i\th$ individual, normalized by energy. \\
\hline
\end{tabular}
\caption{\baselineskip=10pt 
Variables representing the data and other random variables in our model.
}
\label{tab: var description}
\end{center}
\end{table}

\begin{figure}[!ht]
\centering
\includegraphics[height=5.75cm, width=7.5cm, trim=2cm 1cm 1cm 1cm, clip=true]{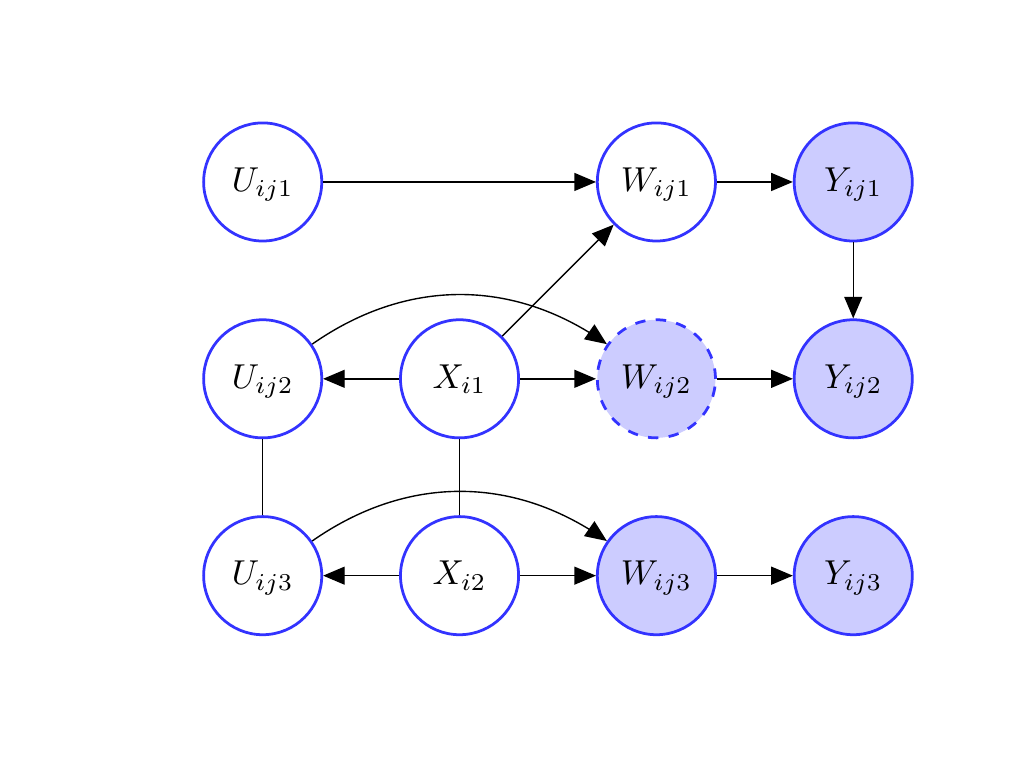}
\label{fig: Graphical model}
\caption{\baselineskip=10pt Graphical model depicting the dependency structure of the generative deconvolution model described in Section \ref{sec: models} 
for one episodically consumed component $X_{1}$ and one regularly consumed component $X_{2}$.
The unfilled and shaded nodes with solid boundaries signify latent and observable variables, respectively. The filled node with dashed boundary may be observed on some of occasions and latent on others.
}
\label{fig: graph our model}
\end{figure}

\subsection{Modeling the Density $f_{\bX}$} \label{sec: mvt copula density of interest}
In this article, $f_{\bX}$ is specified using a Gaussian copula density model 
\vspace{-4ex}\\
\bse
\textstyle f_{\bX}(\bX) = |\bR_{\bX}|^{-\frac{1}{2}} \exp\left\{-\frac{1}{2}\bY_{\bX}\trans(\bR_{\bX}^{-1}- \bI_{q+p})\bY_{\bX}\right\}  \prod_{\ell=1}^{q+p}f_{X,\ell}(X_{\ell}),  \label{eq: copula mixture for f_X} 
\ese
\vspace{-4ex}\\
with $F_{X,\ell}(X_{\ell}) = \Phi(Y_{X,\ell})$ for all $\ell$ and $\bR_{\bX}$ is the correlation matrix of $\bX$.

In initial attempts, we modeled the marginal densities $f_{X,\ell}$ as flexible mixtures of truncated normal kernels $\TN(\cdot \mid \mu,\sigma^{2},[A,B])$ with location $\mu$ and scale $\sigma$ and range restricted to the interval $[A,B]$.
%While easier computational tractability makes normal kernels a more popular choice, 
%truncated normal kernels are better suited to accommodate discontinuities at the boundaries of their support. 
In multivariate applications such as ours, where the components represent similar variables and have highly overlapping supports,  
we can greatly reduce dimension and borrow information across different dietary components, 
by allowing the component specific parameters of the mixture models to be shared among the variables. 
We thus modeled the marginal densities as 
\vspace{-4ex}\\
\bse
&& f_{X,\ell}(X_{\ell}) = \textstyle\sum_{k=1}^{K_{X}}\pi_{X,\ell,k}~\TN(X_{\ell} \mid \mu_{X,k},\sigma_{X,k}^{2},[A_{\ell},B_{\ell}]), \label{eq: tunc norm mixture for f_X}\\
&& \hspace{-1cm} \bpi_{X,\ell} \sim \Dir(\alpha_{X,\ell}/K_{X},\dots,\alpha_{X,\ell}/K_{X}), ~~\mu_{X,k}  \sim \Normal(\mu_{X,0},\sigma_{X,0}^2), ~~ \sigma_{X,k}^{2} \sim \IG(a_{\sigma_{X,0}^{2}},b_{\sigma_{X,0}^{2}}).
\ese
\vspace{-4ex}\\
The models for different components $\ell$ thus share the same atoms $(\mu_{X,k},\sigma_{X,k}^{2})$ but with varying probability weights $\pi_{X,\ell,k}$. 

Despite being specifically tailored to capture boundary discontinuities, in numerical experiments, we found the model to often %fail in capturing the true shape at the boundaries but 
produce steeply decaying and highly peaked estimates with underestimated (local) variance in these regions. 
After further investigations, we could attribute the issue to smoothness properties of such models, 
characterized by the variance components $\sigma_{X,k}^{2}$ 
which are estimated `locally' utilizing only the data points associated with the corresponding mixture components. 
%The smoothness level in one region is thus not informed by data points in other regions. 
%This may be desirable in most other settings, including for recall data for regular components. 
%But for zero-inflated recalls on episodic consumptions, 
For the episodic components, 
the scarcity of informative observations near the left boundaries often allows the sampled latent $X_{\ell,i}$'s to cluster away from these boundaries, 
resulting in the associated $\sigma_{X,k}^{2}$'s to be underestimated and hence the estimated densities to be peaked away from the boundaries. 
%The problem goes away if we can set an informative lower bound to the variance parameters. 
Setting informative lower bounds to the variance parameters solves the problem. 
%Deconvolution being such a hard problem, with two very differently shaped distributions producing very similarly shaped densities when convolved with errors, 
Determining such bounds for the latent variables from their contaminated recalls, however, proved to be difficult.

\begin{figure}[!ht]
\begin{center}
\includegraphics[width=16cm, trim=1cm 1.5cm 0cm 0cm, clip=true]{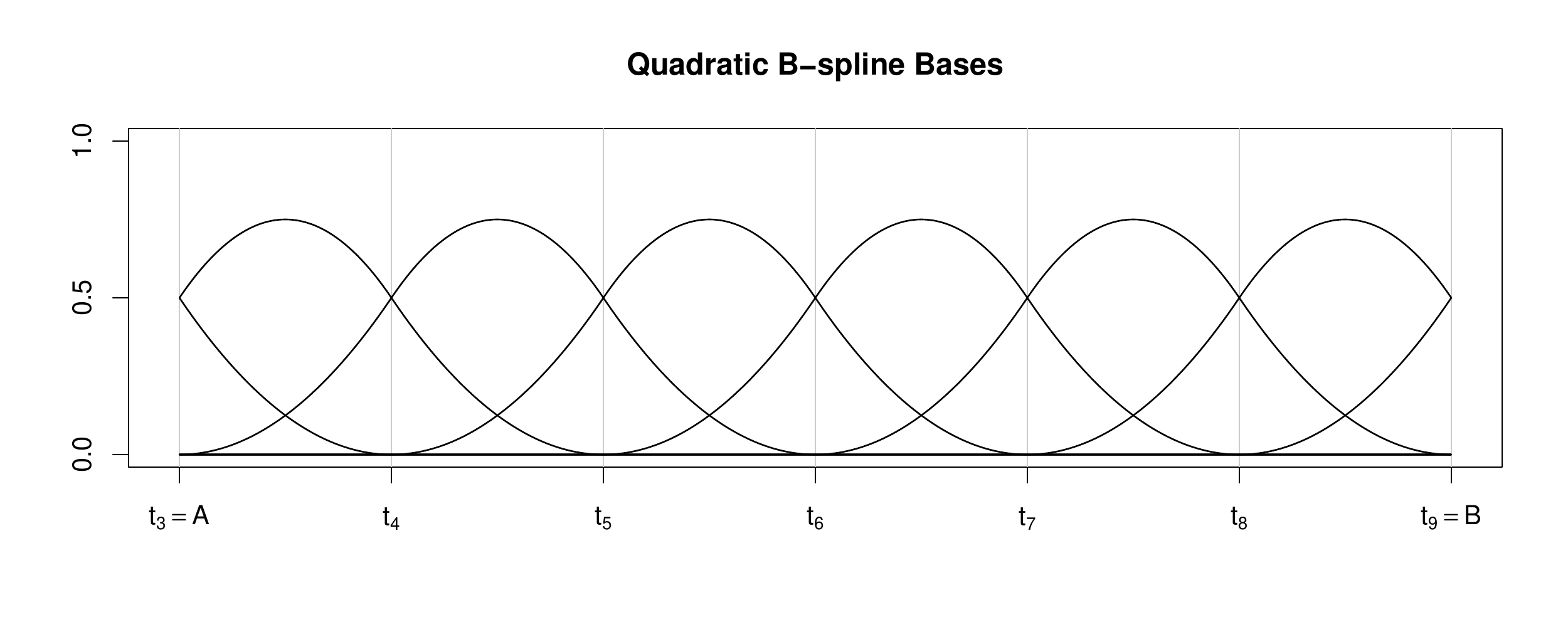}
\end{center}
\caption{\baselineskip=10pt Plot of 9 quadratic $(d=2)$ B-splines on $[A,B]$ defined using $11$ knot points that divide $[A,B]$ into $K=6$ equal subintervals.}
\label{fig: Quadratic B-splines}
\end{figure}

For episodic components, we thus needed models that can accommodate local variations in shape but would also allow the smoothness to be learned from regions where more informative data points are available. 
To achieve this, we employed flexible penalized normalized mixtures of B-splines with smoothness inducing priors on the coefficients 
to model the densities of the episodic components.
For the $\ell\th$ component, we partition the interval $[A_{\ell},B_{\ell}]$ of interest into $L_{\ell}$ subintervals using knot points
$A_{\ell} = t_{\ell,1}=\dots=t_{\ell,d+1} < t_{\ell,d+2} < t_{\ell,d+3} < \dots < t_{\ell,d+L_{k}} < t_{\ell,d+L_{\ell}+1} = \dots = t_{\ell,2d+L_{\ell}+1}=B_{\ell}$. 
Using these knot points, $J_{\ell}=(d+L_{\ell})$ B-spline bases of degree $d$, denoted by $\bB_{d,\ell,J_{\ell}}=\{b_{d,\ell,1},b_{d,\ell,2},\dots,b_{d,\ell,J_{\ell}}\}$, can be defined through a recursion relation \cite[page 90]{de_Boor:2000}. 
See Figure~\ref{fig: Quadratic B-splines} and Section \ref{sec: Quadratic B-splines} in the supplementary material. 
B-splines are nearly orthogonal and locally supported. 
For equidistant knot points with $\delta_{\ell} = (t_{\ell,J}-t_{\ell,J-1})$, the areas under these curves can be easily computed as 
\vspace{-4ex}\\
\bse
\delta_{\ell,j}=\int_{A_{\ell}}^{B_{\ell}} b_{2,\ell,j}(X)dX = \left\{\begin{array}{ll}
\delta_{\ell}/6 & \text{for}~j=1,J_{\ell},\\
5\delta_{\ell}/6 & \text{for}~j=2,J_{\ell}-1,\\
\delta_{\ell} & \text{for}~j=3,\dots,J_{\ell}-2.
\end{array}\right.
\ese
\vspace{-4ex}\\
Mixtures of B-splines can therefore be easily normalized. 
A flexible model for the density functions is then obtained as 
\vspace{-4ex}\\
\bse
&f_{X,\ell}(X_{\ell}) %= \textstyle\sum_{j=1}^{J_{\ell}} b_{d,\ell,j}(X_{\ell}) \exp(\xi_{\ell,j})  \left\{\sum_{m=1}^{J_{\ell}}\delta_{\ell,m}\exp(\xi_{\ell,m}) \right\}^{-1}
= \bB_{d,\ell,J_{\ell}}(X_{\ell}) \exp(\bxi_{\ell}) \left\{\sum_{m=1}^{J_{\ell}}\delta_{\ell,m}\exp(\xi_{\ell,m}) \right\}^{-1}, \label{eq: normalized mixture of Bsplines} \\
& (\bxi_{\ell}\mid J_{\ell}, \sigma_{\xi,\ell}^{2}) \propto (2\pi\sigma_{\xi,\ell}^{2})^{-J_{\ell}/2} \exp\{-\bxi_{\ell}\trans \bP_{\ell}\bxi_{\ell}/(2\sigma_{\xi,\ell}^{2})\},~~~~ \sigma_{\xi,\ell}^{2} \sim \IG(a_{\xi},b_{\xi}).
\ese
\vspace{-4ex}\\
Here $\bxi_{\ell} = \{\xi_{\ell,1},\xi_{\ell,2},\dots,\xi_{\ell,J_{\ell}}\}\trans$; 
$\exp(\bxi_{\ell}) = \{\exp(\xi_{\ell,1}), \exp(\xi_{\ell,2}),\dots,\exp(\xi_{\ell,J_{\ell}})\}\trans$;
and $\bP_{\ell}=\bD_{\ell}\trans \bD_{\ell}$, where $\bD_{\ell}$ is a $(J_{\ell}+2) \times J_{\ell}$ matrix such that $\bD_{\ell}\bxi_{\ell}$ computes the second order differences in $\bxi_{\ell}$.
The prior $p_{0}(\bxi_{\ell}\mid\sigma_{\xi,\ell}^{2})$ induces smoothness in the coefficients because it penalizes $\sum_{j=1}^{J_{\ell}}(\Delta^2 \xi_{\ell,j})^2 = \bxi_{\ell}\trans P_{\ell}\bxi_{\ell}$, the sum of
squares of the second order differences in $\bxi_{\ell}$ \citep{Eilers_Marx:1996}.
The parameters $\sigma_{\xi,\ell}^{2}$ play the role of smoothing parameters -
the smaller the value of $\sigma_{\xi,\ell}^{2}$, the stronger the penalty and the smoother the associated variance function. 
The inverse-Gamma hyper-priors on $\sigma_{\xi,\ell}^{2}$ allow the data to influence the posterior smoothness and make the approach data adaptive.
Importantly, the smoothness is now informed by data points across the entire range, resulting in vast improvements in the density estimates near the left boundaries.

\iffalse
The model above is similar to the density models developed in \cite{Staudenmayer_etal:2008} but also differs from theirs in some important aspects. 
\cite{Staudenmayer_etal:2008} used mixtures of normalized splines, ignoring the boundary splines that do not integrate to $\delta_{\ell}$. 
These boundary kernels are, however, important in capturing discontinuities near the end regions and hence are included in our model. 
Unfortunately, normalization of these splines then leads to highly elevated kernels at the boundaries, often resulting in unrealistically high density estimates in these areas. 
We overcome the problem by normalizing the overall mixture and not the component splines themselves. 
\fi

For regularly consumed components with strictly positive recalls, we found mixtures of truncated normals to slightly outperform normalized mixtures of B-splines. 
This is also consistent with findings reported in \cite{Sarkar_etal:2014}. %where they compared mixtures of normal kernels with the original B-spline mixtures of \cite{Staudenmayer_etal:2008}.  
For regularly consumed components, we thus still use mixtures of truncated normals with shared atoms. 
%Regularly consumed dietary components do not typically have discontinuities at the boundaries but smooth out to zero, as seen in Figure \ref{fig: EATS Sodium & Energy}, so the truncations are no longer strictly needed. 
With densities smoothed out to zeros at the boundaries, truncations are not strictly needed for regularly consumed dietary components. 
We still retain the truncations to make our approach broadly applicable to other potential applications where boundary discontinuities may be present even when the recalls are all continuous.  

Next, we consider the problem of modeling $\bR_{\bX}$. 
The problem of modeling correlation matrices has garnered some attention in the literature \citep{barnard2000modeling, liechty2004bayesian, pourahmadi2015distribution, tsay2017modelling}. 
%The problem of modeling correlation matrices has garnered some attention in the statistics literature. 
%See, for instance, \citep{}. 
%Our proposed formulation is novel and may thus be of independent interest. 
%We propose a novel model for correlation matrices based on spherical coordinate representation of Cholesky factorizations 
%that guarantees the resulting matrix to always be a valid correction matrix 
%while allowing the involved parameters to be treated separately of each other which greatly simplifies posterior computation. 
Here, we adapt the model from \cite{Zhang2011b} based on spherical coordinate representation of Cholesky factorizations 
that allows the involved parameters to be treated separately of each other, simplifying posterior computation 
while guaranteeing the resulting matrix to always be a valid correction matrix. 
We prove in \ref{appendix: supplementary results} that the converse is also true. 
That is, any correlation matrix can be represented in this form which establishes its nonparametric nature. 
%Our model for $\bR_{\bX}$ may thus be of some independent interest.  
We drop the subscript $\bX$ for the rest of this subsection to keep the notation clean. 

Let $\bV^{(q+p)\times (q+p)}$ be a lower triangular matrix such that $\bR=\bV\bV\trans$. 
The form of $\bV$ is 
\vspace{-4ex}\\
\bse
\bV &=& 
\left(\begin{array}{c c c c}
v_{1,1} & 0 & \dots & 0\\
v_{2,1} & v_{2,2} & \dots & 0\\
\vdots & \vdots& \vdots& \vdots \\
v_{q+p,1} & v_{q+p,2} & \dots & v_{q+p,q+p}
\end{array} \right).
\ese
\vspace{-2ex}\\
We have $r_{\ell,\ell'} = \sum_{k=1}^{\ell}v_{\ell,k}v_{\ell',k}$ for all $\ell \leq \ell'$. 
The restriction that $\bR$ is a correlation matrix then implies $\sum_{k=1}^{\ell}v_{\ell,k}^{2} = 1$ for all $\ell=1,\dots,(q+p)$. 
The restrictions are satisfied by the following parameterization
\vspace{-4ex}\\
\bse
&& v_{1,1}=1, \\
&& v_{2,1}=b_{1}, ~ v_{2,2}=\sqrt{1-	b_{1}^{2}},\\
&& v_{3,1} =b_{2}\sin\theta_{1}, ~v_{3,2}=b_{2}\cos\theta_{1},~v_{3,3}=\sqrt{1-b_{2}^{2}},\\
%&& v_{4,1} =b_{3}\sin\theta_{2}, ~v_{4,2}=b_{3}\cos\theta_{2}\sin\theta_{3},~v_{4,3}=b_{3}\cos\theta_{2}\cos\theta_{3},~v_{4,4}=\sqrt{1-b_{3}^{2}},\\
%&& v_{5,1} =b_{4}\sin\theta_{4}, ~v_{5,2}=b_{4}\cos\theta_{4}\sin\theta_{5},~v_{5,3}=b_{4}\cos\theta_{4}\cos\theta_{5}\sin\theta_{6},\\
%&&~~~~~v_{5,4}=b_{4}\cos\theta_{4}\cos\theta_{5}\cos\theta_{6},~v_{5,5}=\sqrt{1-b_{4}^{2}},\\
%&& v_{6,1} =b_{5}\sin\theta_{7}, ~v_{6,2}=b_{5}\cos\theta_{7}\sin\theta_{8},~v_{6,3}=b_{5}\cos\theta_{7}\cos\theta_{8}\sin\theta_{9},~v_{6,4}=b_{5}\cos\theta_{7}\cos\theta_{8}\cos\theta_{9}\sin\theta_{10},\\
%&&~~~~~v_{6,5}=b_{5}\cos\theta_{7}\cos\theta_{8}\cos\theta_{9}\cos\theta_{10},~v_{6,6}=\sqrt{1-b_{5}^{2}},\\
&& v_{\ell,1}=b_{\ell-1}\sin\theta_{i_{1}(\ell)},\\
%&& v_{\ell,2}=b_{\ell-2}\cos\theta_{i_{1}(\ell)}\sin\theta_{i_{1}(\ell)+1},\dots,\\
%&& v_{\ell,\ell-2}=b_{\ell-2}\cos\theta_{i_{1}(\ell)}\cos\theta_{i_{1}(\ell)+1}\dots \cos\theta_{i_{2}(\ell)-1}\sin\theta_{i_{2}(\ell)},\\
&& v_{\ell,k}=b_{\ell-1}\cos\theta_{i_{1}(\ell)}\cos\theta_{i_{1}(\ell)+1}\dots \cos\theta_{i_{1}(\ell)+k-2}\sin\theta_{i_{1}(\ell)+k-1}, \\
&&\hspace{9cm} \hbox{for}~k=2,\dots,(\ell-2),\\
&& v_{\ell,\ell-1}=b_{\ell-1}\cos\theta_{i_{1}(\ell)}\cos\theta_{i_{1}(\ell)+1}\dots \cos\theta_{i_{2}(\ell)-1}\cos\theta_{i_{2}(\ell)},~~~~v_{\ell,\ell}=\sqrt{1-b_{\ell-1}^{2}},
\ese
\vspace{-4ex}\\
where $\ell=4,\dots,(q+p)$,  
$i_{1}(\ell) = 1+\{1+\dots+(\ell-3)\} %= 1+(\ell-3)(\ell-2)/2 
= (\ell^{2}-5\ell+8)/2$ and $i_{2}(\ell)=i_{1}(\ell)+(\ell-3) %=\{1+\dots+(\ell-2)\} = (\ell-2)(\ell-1)/2 
= (\ell^{2}-3\ell+2)/2$,  
$\abs{b_{t}}\leq 1$, $t=1,\dots,(q+p-1)$, $\abs{\theta_{s}}\leq \pi$, $s=1,\dots,i_{2}(q+p)$. 
The total number of parameters is $\{1+2+\dots+(q+p-1)\}=(q+p)(q+p-1)/2$.
We have $\abs{\bR}=\abs{\bV}^{2} = \prod_{\ell=2}^{q+p}v_{\ell,\ell}^{2} = \prod_{\ell=1}^{q+p-1}(1-b_{\ell}^{2})$. 
The model for $\bR$ is completed by assigning uniform priors on $b_{t}$'s and $\theta_{s}$'s
\vspace{-4ex}\\
\bse
b_{t} \sim \Unif(-1,1), ~~~~~ \theta_{s} \sim \Unif(-\pi,\pi). 
\ese
Here $\Unif(a,b)$ denotes a uniform distribution with support $(a,b)$.

\subsection{Modeling the Density $f_{\bU\mid \wt\bX}$}  \label{sec: mvt copula density of errors}
The reported intakes of the regularly consumed components exhibit strong conditional heteroscedasticity, so do the reported intakes of the episodic components, when consumed.
%See Figure \ref{}. 
To accommodate conditional heteroscedasticity, we let 
\vspace{-4ex}\\
\bse
&\bU_{i,j} = \bS(\wt\bX_{i})\bepsilon_{i,j},~~~\hbox{with}~~~\eE(\bepsilon_{i,j}) = \bzero,\\
&\text{and}~~~\bS(\wt\bX_{i}) = \diag\{1,\dots,1,s_{q+1}(\wt{X}_{q+1,i}),\dots,s_{2q+p}(\wt{X}_{2q+p,i})\}.  	\label{eq: multiplicative structure}
\ese
\vspace{-4ex}\\
The above model implies that $\cov(\bU_{i,j}\mid \wt\bX_{i}) = \bS(\wt\bX_{i})~ \cov(\bepsilon_{i,j}) ~ \bS(\wt\bX_{i})$ and marginally $\var(U_{\ell,i,j}\mid \wt{X}_{\ell,i}) = s_{\ell}^{2}(\wt{X}_{\ell,i})\var(\epsilon_{\ell,i,j})$. 
Other features of the distribution of $\bU$ including its shape and correlation structure are derived from $f_{\bepsilon}$. 
The multiplicative structural assumption arises naturally for conditionally heteroscedastic multivariate measurement errors \citep{sarkar2018bayesian}. 
%with conditional heteroscedasticity in $U_{ij}$ captured by the corresponding component $X_{i}$  
The model also automatically accommodates multiplicative measurement errors %$\bW_{i,j} = \bX_{i} \circ \wt\bU_{i,j}$ with $\eE(\wt\bU_{i,j})=\bone$ 
via a simple reformulation. %$\bU_{i,j} = \bS(\bX_{i})\bepsilon_{i,j}$ with $\bS(\bX_{i})=\diag\{\wt{X}_{1,i},\dots,\wt{X}_{2q+p,i}\}$.

As in Section \ref{sec: mvt copula density of interest}, we use a Gaussian copula density model to specify the density $f_{\bepsilon}$ 
but the model now has to satisfy mean zero constraints.
Specifically, we let   
\vspace{-4ex}\\
\bse
\textstyle f_{\bepsilon}(\bepsilon) =\prod_{\ell=1}^{q}f_{\epsilon,\ell}(\epsilon_{\ell}) \times  |\bR_{\bepsilon}|^{-\frac{1}{2}} \exp\left\{-\frac{1}{2}\bY_{\bepsilon}\trans(\bR_{\bepsilon}^{-1}- \bI_{p})\bY_{\bepsilon}\right\}  \prod_{\ell=q+1}^{2q+p}f_{\epsilon,\ell}(\epsilon_{\ell}),  \label{eq: mixture model for f_X}  \\
%\textstyle f_{\bepsilon}(\bepsilon) = |\bR_{\bepsilon}|^{-\frac{1}{2}} \exp\left\{-\frac{1}{2}\bY_{\bepsilon}\trans(\bR_{\bepsilon}^{-1}- \bI_{p})\bY_{\bepsilon}\right\}  \prod_{\ell=1}^{2q+p}f_{\epsilon,\ell}(\epsilon_{\ell}),  \label{eq: mixture model for f_X}  \\
\text{subject to}~ \textstyle\eE_{f_{\epsilon,\ell}}(\epsilon_{\ell}) = 0, ~~~\hbox{for}~\ell=1,\dots,2q+p. 
\ese
\vspace{-4ex}\\
Here, $F_{\epsilon,\ell}(\epsilon_{\ell}) = \Phi(Y_{\epsilon,\ell})$ for all $\ell$. 
The first $q$ components of $\bepsilon$ are independent of each other and also independent of the rest of the $q+p$ components. 
The latter $q+p$ components may be correlated with correlation matrix $\bR_{\bepsilon}$.

%We take two different strategies to model the marginal densities $f_{\epsilon,\ell}(\epsilon)$ for $\ell=1,\dots,q$, 
%and $f_{\epsilon,\ell}(\epsilon)$ for $\ell=q+1,\dots,2q+p$. 
%The decision was based on extensive experiments with different possible alternatives. 

The copula approach again allows us to use different models for the distributions of the pseudo-errors $f_{\epsilon,\ell}(\epsilon), \ell=1,\dots,q$, 
and the distributions of the actual scaled measurement errors $f_{\epsilon,\ell}(\epsilon), \ell=q+1,\dots,2q+p$.

Here, we only model the correlation between different scaled error components $\epsilon_{\ell,i,j},\epsilon_{\ell^{\prime},i,j}$ for $\ell \neq \ell^{\prime}$ 
but ignore the correlation between different sampling occasions $\epsilon_{\ell,i,j},\epsilon_{\ell,i,j^{\prime}}$ for $j \neq j^{\prime}$. 
The correlation between $W_{\ell,i,j}, W_{\ell,i,j^{\prime}}$ for $j \neq j^{\prime}$ is thus explained entirely by their shared component $\wt{X}_{\ell,i}$.   
In post model fit correlation analysis with estimated scaled `residuals', presented in Figure \ref{fig: EATS6 residual Tables} in the Supplementary Material, 
we found no real evidence %against our assumption. 
that the errors $\epsilon_{\ell,i,j}, \epsilon_{\ell,i,j^{\prime}}$ are significantly correlated for $j \neq j^{\prime}$.

For $\ell=1,\dots,q$, we model the marginal densities $f_{\epsilon,\ell}$ as $f_{\epsilon,\ell}(\epsilon_{\ell}) = \Normal(\epsilon_{\ell} \mid 0,1)$. 
This implies a probit model for the probabilities of consumptions $P_{\ell}({X}_{\ell}) = \Pr\{U_{\ell} > -h_{\ell}(X_{\ell})\} = \Phi\{h (X_{\ell})\}$. 
Flexibility of this probability model thus depends on the choice of $h_{\ell}(X_{\ell})$. 
We discuss this issue in Section \ref{sec: mvt copula prob of consumption}. 

For $\ell=q+1,\dots,2q+p$, we model the marginal densities $f_{\epsilon,\ell}(\epsilon)$ 
using an adapation of the moment restricted model in \cite{Sarkar_etal:2014} but with shared atoms as 
\vspace{-4ex}\\
\bse
&& f_{\epsilon,\ell}(\epsilon_{\ell}) = \textstyle\sum_{k=1}^{K_{\epsilon}}\pi_{\epsilon,\ell,k}~f_{c\epsilon}(\epsilon_{\ell} \mid p_{\epsilon,k},\tmu_{\epsilon,k},\sigma_{\epsilon,k,1}^{2},\sigma_{\epsilon,k,2}^{2}), \label{eq: Model_distribution_of_errors}  ~~~~~  \bpi_{\epsilon,\ell} \sim \Dir(\alpha_{\epsilon,\ell}/K_{\epsilon},\dots,\alpha_{\epsilon,\ell}/K_{\epsilon}),\\
&& (p_{\epsilon,k},\tmu_{\epsilon,k},\sigma_{\epsilon,k,1}^{2},\sigma_{\epsilon,k,2}^{2}) \sim \hbox{Unif}(0,1)~\hbox{Normal}(0,\sigma_{\epsilon,\tmu}^{2})~\hbox{IG}(a_{\epsilon},b_{\epsilon})~\hbox{IG}(a_{\epsilon},b_{\epsilon}), ~~
\ese
\vspace{-4ex}\\
where $f_{c\epsilon}(\epsilon\mid p,\wt\mu,\sigma_{1}^{2},\sigma_{2}^{2}) = \{p~\Normal(\epsilon \mid \mu_{1},\sigma_{1}^{2})+(1-p)~\Normal(\epsilon \mid \mu_{2},\sigma_{2}^{2})\}$, with
$\mu_{1}=c_{1}\tmu,\mu_{2}=c_{2}\tmu$,
 $c_{1}=(1-p)/\{p^2+(1-p)^{2}\}^{1/2}$ and $c_{2} = -p/\{p^2+(1-p)^{2}\}^{1/2}$.
The zero mean constraint on the errors is satisfied, since $~p\mu_{1}+(1-p)\mu_{2} = \{pc_{1}+(1-p)c_{2}\}\tmu = 0$.
Normal densities are included as special cases with $(p,\tmu) = (0.5,0)$ or $(0,0)$ or $(1,0)$.
Symmetric component densities are included as special cases when $p=0.5$ or $\tmu=0$.
Specification of the prior for $f_{\epsilon}$ is completed assuming non-informative priors for $(p,\tmu,\sigma_{1}^{2},\sigma_{2}^{2})$.
Here $\hbox{Unif}(\ell,u)$ denotes a uniform distribution on the interval $[\ell,u]$.

As in the case of $\bR_{\bX}$, we assume $\bR_{\bepsilon}^{(q+p)\times(q+p)}=((r_{\bepsilon,\ell,\ell'}))=\bV_{\bepsilon}\bV_{\bepsilon}\trans$ and parameterize the elements of $\bV_{\bepsilon}$ using spherical coordinates. 
We assign uniform priors on $b_{\bepsilon,t}, t=1,\dots,(q+p-1)$ and $\theta_{\bepsilon,s}, s=1,\dots,i_{2}(q+p)$ 
\vspace{-4ex}\\
\bse
b_{\bepsilon,t} \sim \Unif(-1,1), ~~~~~ \theta_{\bepsilon,s} \sim \Unif(-\pi,\pi). 
\ese

Finally, for $\ell=q+1,\dots,2q+p$, we model the variance functions $v_{\ell}(\wt{X}_{\ell})=s_{\ell}^{2}(\wt{X}_{\ell})$ by flexible penalized mixtures of B-splines with smoothness inducing priors on the coefficients as in \cite{Staudenmayer_etal:2008} as 
\vspace{-4ex}\\
\bse
&& v_{\ell}(\wt{X}_{\ell}) = s_{\ell}^{2}(\wt{X}_{\ell}) = \textstyle\sum_{j=1}^{J_{\ell}} b_{d,\ell,j}(\wt{X}_{\ell}) \exp(\vartheta_{\ell,j}) = \bB_{d,\ell,J_{\ell}}(\wt{X}_{\ell}) \exp(\bvartheta_{\ell}), \label{eq: models for variance functions} \\
&& (\bvartheta_{\ell}\mid J_{\ell}, \sigma_{\vartheta,\ell}^{2}) \propto (2\pi\sigma_{\vartheta,\ell}^{2})^{-J_{\ell}/2} \exp\{-\bvartheta_{\ell}\trans \bP_{\ell}\bvartheta_{\ell}/(2\sigma_{\vartheta,\ell}^{2})\},~~~~ \sigma_{\vartheta,\ell}^{2} \sim \IG(a_{\vartheta},b_{\vartheta}).
\ese
\vspace{-4ex}\\
As before, the parameters $\sigma_{\vartheta,\ell}^{2}$ play the role of smoothing parameter, and the inverse-Gamma hyper-priors allow them to be learned from the data themselves.

%Under the proposed model, the density of the long-term average intake $X_{\ell}^{\star}$ of an episodic component is obtained using (\ref{}) with 
%\vspace{-4ex}\\
%\bse
% P_{\ell} (X_{\ell}) + X_{\ell} \frac{d P_{\ell} (X_{\ell})}{dX_{\ell}}
%&=& \sum_{k=1}^{K_{\epsilon}} \pi_{\epsilon,\ell,k} \left[\Phi\{\sigma_{\epsilon,k}^{-1}(\gamma_{\ell,0} + \log X_{\ell})\} + \sigma_{\epsilon,k}^{-1} \phi\{\sigma_{\epsilon,k}^{-1}(\gamma_{\ell,0} + \log X_{\ell})\}\right]. 
%\ese

\subsection{Modeling the Consumption Probabilities $P_{\ell}(X_{\ell})$} \label{sec: mvt copula prob of consumption}
We recall that, according to our model, the probability of reporting positive consumptions by an individual with long-term average intake $X_{\ell}$ is given by 
\vspace{-4ex}\\
\bse
P_{\ell}({X}_{\ell}) = \Pr\{U_{\ell} > -h_{\ell}(X_{\ell}) \mid X_{\ell}\} = \Phi\{h (X_{\ell})\}. 
\ese
\vspace{-4ex}\\
We model $h_{\ell}(X_{\ell})$ using flexible mixtures of B-splines again as 
\vspace{-4ex}\\
\bse
&& h_{\ell}({X}_{\ell}) = \textstyle\sum_{j=1}^{J_{\ell}} b_{d,\ell,j}({X}_{\ell}) \beta_{\ell,j} = \bB_{d,\ell,J_{\ell}}({X}_{\ell}) \bbeta_{\ell}, \\
&& (\bbeta_{\ell}\mid J_{\ell}, \sigma_{\beta,\ell}^{2},\bmu_{\beta,\ell},\bSigma_{\beta,\ell}) \propto (2\pi\sigma_{\beta,\ell}^{2})^{-J_{\ell}/2} \exp\{-\bbeta_{\ell}\trans \bP_{\ell}\bbeta_{\ell}/(2\sigma_{\beta,\ell}^{2})\}  ~ \MVN_{J_{\ell}}(\bbeta_{\ell} \mid \bmu_{\beta,\ell,0},\bSigma_{\beta,\ell,0}), \\  %\exp\{-(\bbeta_{\ell}-\bbeta_{0})\trans \bSigma_{\beta,\ell}^{-1}(\bbeta_{\ell}-\bbeta_{0})/2\},\\
&& \sigma_{\beta,\ell}^{2} \sim \IG(a_{\beta},b_{\beta}).
\ese
\vspace{-4ex}\\
The flexibility of $h_{\ell}(X_{\ell})$ compensates for the parametric nature of the probit link, making the model $P_{\ell}(X_{\ell})$ robust.

The right panels of Figure \ref{fig: EATS Milk & Whole Grains} suggest that  
as $X_{\ell}$ increases, 
the probability of reporting a positive consumption also increases on average. 
We model this flexibly as $\Phi\{h_{\ell}(X_{\ell})\}$. 
It is certainly possible that two individuals have (nearly) the same long-term average intakes, 
even though one of them consumes less often than the other but consumes larger amounts.
One could hope that additional subject-specific random effects terms would help capture this heterogeneity. 
It is, however, not clear that such models would be identifiable in the first place.
To see this, consider adding random effects $R_{\ell,i}$ to model (\ref{eq: episodic W2}). 
Letting $h_{\ell}(X_{\ell,i})=X_{\ell,i}$ for simplicity, 
we then obtain $W_{\ell,i,j} = X_{\ell,i}+R_{\ell,i}+U_{\ell,i,j}, \ell=1,\dots,q$.
With only the standard zero mean assumption on the distribution of the random effects, 
it is impossible to separately nonparametrically identify the distributions of $X_{\ell,i}$ and $R_{\ell,i}$ in this model.

\subsection{Modeling Energy-Adjusted Intakes} \label{sec: mvt copula density of energy adjusted intakes}
We now consider the problem of modeling the distribution of energy-adjusted long-term intakes. 
We now denote $\bX = (X_{1},\dots,X_{q+p})\trans = (X_{1},\dots,X_{J})\trans$ 
with $J=q+p$ and $X_{J}=X_{q+p}$ representing the energy intake. 
We are interested in the distribution of the intakes normalized by energy, that is, the distribution of $\bZ = (X_{1}/X_{J},\dots,X_{J-1}/X_{J})$.
%We have $\bX = (Z_{1} X_{J},\dots,Z_{J-1} X_{J}, X_{J})$. % and hence the Jacobian is $Z_{J}^{J}$. 
The joint distribution of $\bZ$ is then straightforwardly obtained as 
\vspace{-4ex}\\
\bse
f_{\bZ}(\bZ) = \int X_{J}^{J} f_{\bX}(Z_{1} X_{J},\dots,Z_{J-1} X_{J}, X_{J}) dX_{J}.
\ese
\vspace{-4ex}\\
The marginal distribution of any $Z_{\ell}$ is likewise obtained as 
\vspace{-4ex}\\
\bse
f_{Z,\ell}(Z_{\ell}) = \int X_{J} f_{X_{\ell},X_{J}}(Z_{\ell} X_{J}, X_{J}) dX_{J}.
\ese
\vspace{-4ex}\\
These are integrals of single variables and can thus be easily numerically evaluated.

\subsection{Model Flexibility}
For most practical purposes, including our motivating applications, 
our models for the densities of interest $f_{X,\ell}$, the densities of the scaled errors $f_{\ell,\epsilon}$, the variance functions $s_{\ell}^{2}$, 
and the probabilities of consumptions $P_{\ell}(X_{\ell})$ are all highly flexible whenever sufficiently large numbers of B-spline bases and mixture components are allowed. 
Adapting similar results from \cite{sarkar2018bayesian}, formal statements and proofs establishing theoretical flexibility of these model components 
can be easily formulated using known results for B-splines and mixture models. 
Our model for the correlation matrices $\bR$ is also nonparametric. A formal proof is provided in the Appendix. 
The only real parametric component of our model is thus the Gaussian copula. 
Extending the model to other elliptical classes, like the multivariate t, would be conceptually straightforward. 
It is, however, often difficult to distinguish between such classes even in much simpler low dimensional measurement error free scenarios \citep{dos2008copula}. 
The problem only gets an order of magnitude more difficult when the variables whose densities are being modeled using copulas are all latent. 
%Except in extreme value problems, Gaussian couplas tend to work well. 
Since the number of parameters in elliptical copulas increases only quadratically with dimension, they also scale well to higher dimensions. 
It is thus also not clear if other stylized copula classes could be any useful in nutritional epidemiology data sets like ours. 
Exploration of these issues will be pursued elsewhere.

\subsection{Model identifiability}
In the following, we investigate identifiability of our model. %We recall the hierarchical structure as follows. 
For notational simplicity, we drop the subscript $i$ and consider for  $j=1,\dots,m$, $\bY_{j}=(Y_{1,j},\dots,Y_{2q+p,j})\trans$, 
and similarly $\bW_{j}, \bU_{j}, \wt\bX$ and $\bX$. 
Then our proposed hierarchical model can be written as 
\vspace{-4ex}\\
\bse
\bY_{j} &=& \psi(\bW_{j}),~~~~~~\bW_{j} = \wt\bX  + \bU_{j},~~~~~~\eE(\bU_{j} \mid \wt\bX) = \bzero,~~~~~~\wt\bX = \phi(\bX),
\label{eq:overall_model}
\ese
\vspace{-4ex}\\
where the functions $\psi(\cdot): \rR^{2q+ p} \to \rR^{2q+ p}$ and $\phi(\cdot): \rR^{2q+ p} \to \rR^{2q+ p}$ 
are easily identified from models (\ref{eq: episodic Y2}) and (\ref{eq: episodic W2}). 
Specifically, $\phi(\cdot)$ is given by 
\vspace{-4ex}\\
\be
\wt{X}_{\ell} &=& h_{\ell}(X_{\ell}),~~~~~~~~~~~~~~~~~~~~~~~~~~~\hbox{for}~\ell=1,\dots,q, \nonumber\\
\wt{X}_{\ell} &=& X_{\ell-q}/P_{\ell-q}(X_{\ell-q}),~~~~~~~~~~~~~\hbox{for}~\ell=q+1,\dots,2q,   \label{eq: phi defn}\\
\wt{X}_{\ell} &=& X_{\ell-q},~~~~~~~~~~~~~~~~~~~~~~~~~~~~~\hbox{for}~\ell=2q+1,\dots,2q+p, \nonumber
\ee
\vspace{-4ex}\\
where, for $\ell = 1, \ldots, q$, $P_{\ell}(X_{\ell}) = P(W_{\ell,j} > 0 \vert X_{\ell}) = \Phi\{h_{\ell}({X}_{\ell})\}$ 
for some arbitrary functions $h_{\ell}(\cdot): \rR \to \rR$. 
%1 - F_{\bU_{\ell,j} \mid  \bX_{\ell}} \{- h_{\ell}({X}_{\ell})\}$. 

We state the basic assumptions needed for identifiability and our main result on identifiability below. 
The proof is deferred to \ref{appendix: proof of identifiability}. 

\begin{Assmp}
{\bf (A1)} The number of replicates $m \geq 3$. 
{\bf (A2)}  $\bU_{j} \mid \wt\bX \stackrel{d}{=} \bS(\wt\bX)\bepsilon_{j},~\bepsilon_{j} \sim f_{\bepsilon},j=1,2,3$, where $f_{\bepsilon}$ has a Fourier transform that is non-vanishing everywhere. 
\end{Assmp}
Observe that {\bf (A2)} includes the homoscedastic case, that is, when $s_\ell(X_\ell)$ is a constant function of $X_\ell$.  
\begin{Thm} \label{thm: identifiability}
Under {(\bf A1)}-{(\bf A2)}, given the observed density $f_{\bY_{1}, \bY_{2}, \bY_{3}}$, the equation 
\vspace{-4ex}\\
\bse
f_{\bY_{1}, \bY_{2}, \bY_{3}}(\bY_{1}, \bY_{2}, \bY_{3}) = \int f_{\bY_{1}\mid \wt\bX}(\bY_{1}\mid \wt\bX) f_{\bY_{2}\mid \wt\bX}(\bY_{2}\mid \wt\bX) f_{\bY_{3}\mid \wt\bX}(\bY_{3}\mid \wt\bX) f_{\wt\bX}(\wt\bX) d\wt\bX
\ese
\vspace{-4ex}\\
admits a unique solution for $f_{\bY_{j}\mid \wt\bX}(\bY_{j}\mid \wt\bX)$  for $j=1,\ldots,3$ and $f_{\wt\bX}(\wt\bX)$. 
Furthermore, if $\bX$ and $\wt\bX$ are related by \eqref{eq: phi defn}, 
then $f_{\bX}(\bX)$ is uniquely identified from $f_{\bY_{j}\mid \wt\bX}(\bY_{j}\mid \wt\bX)$  for $j=1,\ldots,3$ and $f_{\wt\bX}(\wt\bX)$. 
\end{Thm}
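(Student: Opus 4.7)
The plan is to split the argument into two conceptually separate stages: first, recover the conditional laws $f_{\bY_j\mid \wt\bX}$ ($j=1,2,3$) and the mixing law $f_{\wt\bX}$ from the observable joint $f_{\bY_1,\bY_2,\bY_3}$; second, recover $f_{\bX}$ from $f_{\wt\bX}$ via the explicit inverse of the deterministic map $\phi$ in (\ref{eq: phi defn}).

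For stage one, I would first establish conditional independence. Under (A2), $\bU_1,\bU_2,\bU_3$ are i.i.d.\ given $\wt\bX$, hence so are $\bW_j = \wt\bX + \bU_j$; since $\bY_j = \psi(\bW_j)$ is a fixed measurable map, $\bY_1,\bY_2,\bY_3$ are also conditionally independent given $\wt\bX$, which validates the integral decomposition in the theorem statement. I would then invoke a nonparametric identification result for three-measurement models with conditionally independent replicates, in the spirit of Hu and Schennach (2008). In our additive setting its hypotheses reduce to bounded completeness of the family $\{f_{\bY_j\mid \wt\bX}(\cdot\mid\wt\bX)\}_{\wt\bX}$. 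I would verify completeness first at the level of $\bW_j$: the characteristic function of $\bU_j\mid\wt\bX$ equals $\widehat{f}_{\bepsilon}(\bS(\wt\bX)^{\top} t)$, which is non-vanishing everywhere because $\widehat{f}_{\bepsilon}$ is by (A2) and $\bS(\wt\bX)$ is invertible. Lemma~1 of \cite{sarkar2018bayesian} then gives bounded completeness of $f_{\bW_j\mid\wt\bX}$, and this transfers to $f_{\bY_j\mid\wt\bX}$ because $\psi$ retains all information in $\bW_j$ needed for the test functions appearing in that lemma: the sign of $W_\ell$ is observed through $Y_\ell$ for $\ell \le q$, the magnitude of $W_{q+\ell}$ is observed directly whenever $Y_\ell = 1$, and $W_\ell = Y_\ell$ for $\ell > 2q$.

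For stage two, the map $\phi$ is globally invertible. Regular components give $X_{\ell-q} = \wt{X}_\ell$ for $\ell = 2q+1,\ldots,2q+p$ at once. For each episodic component $\ell\in\{1,\ldots,q\}$, the two relevant rows of (\ref{eq: phi defn}) read $\wt{X}_\ell = h_\ell(X_\ell)$ and $\wt{X}_{q+\ell} = X_\ell/\Phi\{h_\ell(X_\ell)\} = X_\ell/\Phi(\wt{X}_\ell)$, which yields the closed-form inversion $X_\ell = \wt{X}_{q+\ell}\,\Phi(\wt{X}_\ell)$. Hence $\bX$ is a measurable deterministic function of $\wt\bX$, and $f_{\bX}$ is uniquely pinned down as the pushforward of $f_{\wt\bX}$ under this map. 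As a by-product, $h_\ell$ (and therefore $P_\ell$) is recovered by reading off $h_\ell(x) = \wt{X}_\ell$ on the level set $\{X_\ell = x\}$.

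The step I anticipate will take most care is the passage from bounded completeness of $f_{\bW_j\mid\wt\bX}$ to that of $f_{\bY_j\mid\wt\bX}$, because $\bY_j$ is a mixed discrete-continuous object (binary indicators, zero-inflated coordinates, and strictly positive coordinates). Making the ``no information loss'' claim fully watertight requires setting up $f_{\bY_j\mid\wt\bX}$ as a density with respect to an appropriate product of counting and Lebesgue measures on the stratified support, and checking that the test-function class used for completeness pulls back through $\psi$ to a class on $\bW_j$ that is rich enough to activate Lemma~1 of \cite{sarkar2018bayesian}. The heteroscedastic scaling $\bS(\wt\bX)$ adds no extra difficulty beyond the invertibility observation already noted.
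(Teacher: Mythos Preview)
Your approach is essentially the same as the paper's: both invoke the Hu--Schennach operator framework and verify the requisite injectivity/completeness via the non-vanishing characteristic function of $f_{\bepsilon}$ from (A2). The paper spells out the operator eigendecomposition $T_{\bY_{1};\bY_{2}\mid\bY_{3}}\,T^{-1}_{\bY_{2}\mid\bY_{3}} = T_{\bY_{2}\mid\wt\bX}\,D_{\bY_{1};\wt\bX}\,T^{-1}_{\bY_{2}\mid\wt\bX}$ and proves injectivity of $T_{\bY_{1}\mid\wt\bX}$ by Wiener's theorem plus a Hahn--Banach duality argument, whereas you cite Lemma~1 of \cite{sarkar2018bayesian}; these are the same mechanism in different packaging.

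Two remarks on where you and the paper diverge in detail. First, your stage-two inversion $X_{\ell} = \wt{X}_{q+\ell}\,\Phi(\wt{X}_{\ell})$ is more explicit and complete than the paper's argument, which only records that the distributions of $h_{\ell}(X_{\ell})=\wt{X}_{\ell}$ and $P_{\ell}(X_{\ell})=\Phi\{h_{\ell}(X_{\ell})\}$ are identified and leaves the final pushforward to $f_{\bX}$ implicit; your closed form makes the conclusion immediate. Second, the step you correctly flag as delicate---passing completeness from $f_{\bW_{j}\mid\wt\bX}$ to $f_{\bY_{j}\mid\wt\bX}$---is in fact also elided in the paper: the paper's invertibility computation writes $T_{\bY_{1}\mid\wt\bX}(g)(\bY_{1}) = \int f_{\bU\mid\wt\bX}(\bY_{1}-\wt\bX\mid\wt\bX)\,g(\wt\bX)\,d\wt\bX$, which is literally the formula for $T_{\bW_{1}\mid\wt\bX}$ rather than $T_{\bY_{1}\mid\wt\bX}$. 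Since $\psi$ genuinely discards information (for $\ell=q+1,\ldots,2q$, $W_{\ell}$ is unobserved whenever $Y_{\ell-q}=0$), your instinct that this transfer requires a careful dominated-measure setup is well founded; neither treatment makes it fully rigorous, so you are not missing a trick the paper supplies.
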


In practice, for identifiability, we require $m_{i} \geq 3$ recalls for at least some values of $i$. 
As long as this condition is satisfied, missing values in recall data can be simply ignored.  
For our motivating EATS data set, we have $m_{i} = 4$ for all $i$ with no missing recalls. 
So the conditions are easily satisfied.

%\newpage
\section{Simulation Studies} \label{sec: simulation studies}

Our final model components described in Section \ref{sec: models} were decided 
after extensive numerical experiments with many different choices for these components and their many combinations 
to obtain the best empirical performances in a wide variety of scenarios. 
Such experiments included 
taking reflections of $X_{\ell,i}$'s to fix boundary issues; 
adaptations of the method of \cite{sarkar2018bayesian} to model the joint densities; 
mixtures of truncated normals as well as mixtures of half-normal distributions and their few variations for modeling the densities of episodic and regular components; 
%a few other strategies for fixing the boundary issues, 
mixtures of normalized B-splines, as originally proposed in \cite{Staudenmayer_etal:2008}, for modeling the densities of episodic components; 
mixtures of splines vs mixtures of truncated normals to model the densities of regular components; 
simple parametric as well as more flexible polynomial models for the functions $h_{\ell}(X_{\ell})$; %used to characterize the probabilities $P_{\ell}(X_{\ell})$ of reporting non-consumptions, 
these choices of $h_{\ell}(X_{\ell})$ with and without mixtures of mean restricted normals for modeling the distributions of the associated pseudo-errors etc. 
To keep things concise, we focus here on comparisons with our main competitor, the method of \cite{Zhang2011b}, only. 
Simulation scenarios to perform these comparisons were designed as follows. 

We chose $(q+p) = 3$ dimensional $\bX$ with 
(a) all regular ($q=0, p =3$),
(b) all episodic ($q=3, p =0$), and 
(c) mixed ($q=2, p =1$) components. 
Our proposed method scales very well to much higher dimensional problems, 
but with $3$ total components, %the computation of the ISEs on multidimensional grids remains manageable and 
the results can be conveniently graphically summarized. 
 
To generate the true $X_{\ell,i}$'s for $\ell=1,\dots,q+p$, we (a) first sampled $\bX_{i}^{\triangle} \sim \MVN_{p+q}(\bzero,\bR_{\bX})$, 
(b) then, set $\bX_{i}^{\triangle\triangle}=\Phi(\bX_{i}^{\triangle})$, (c) finally, set $X_{\ell,i} = F_{TN,mix}^{-1}(X_{\ell,i}^{\triangle\triangle} \mid \bpi_{X,\ell},\bmu_{X,\ell},\bsigma_{X,\ell}^{2}, X_{\ell,L},X_{\ell,U})$, 
where $F_{TN,mix}(X \mid \bpi,\bmu,\bsigma^{2},X_{L},X_{U}) = \sum_{k=1}^{K}\pi_{k}F_{TN}(X \mid \mu_{k},\sigma_{k}^{2},X_{L},X_{U})$. 
This way, the marginal distributions are mixtures of truncated normal distributions and hence can take widely varying shapes while the correlation between different components is $\bR_{\bX}$. 
See Figure \ref{fig: SimStudy1}. 
We set 
\vspace{-4ex}\\
\bse
%k_{\bX}=3,~
& 
\hskip -0.5cm ~\bR_{\bX} = \left(\begin{array}{c c c}
1 & 0.7 & 0.7^2 \\
 & 1 & 0.7 \\
 &  & 1 
\end{array} \right),
~\bpi_{X,\ell} =  \left(\begin{array}{c}
0.25 \\
0.50 \\
0.25 
\end{array}\right) ~\text{for all}~\ell,
~\bmu_{\bX} =  \left(\begin{array}{c}
\bmu_{X,1}^{\trans} \\
\bmu_{X,2}^{\trans} \\
\bmu_{X,3}^{\trans}
\end{array} \right) 
= 
\left(\begin{array}{c c c}
-0.5 & 0.75 & 2 \\
0 & 3 & 0 \\
2 & 2 & 2 
\end{array} \right),\\
& X_{\ell,L}=0,~X_{\ell,U}=6 ~\text{for all}~\ell, 
~\text{and}~
~\sigma_{X,\ell,k}^{2}=0.75^2~\text{for all}~\ell,k. 
\ese
\vspace{-4ex}

We used a similar procedure to simulate the true scaled errors $\epsilon_{\ell,i,j}$'s, $\ell=q+1,\dots,2q+p$. 
We (a) first sampled $\bepsilon_{i,j}^{\triangle} \sim \MVN_{p+q}(\bzero,\bR_{\bepsilon})$, 
(b) then, set $\bepsilon_{i,j}^{\triangle\triangle}=\Phi(\bepsilon_{i}^{\triangle})$, (c) finally, set $\epsilon_{\ell,i,j} = F_{\epsilon,\ell,mix,scaled}^{-1}(\epsilon_{\ell,i,j}^{\triangle\triangle} \mid \bpi_{\epsilon,\ell}, \btheta_{\epsilon,\ell})$. 
Here, for $\ell=q+1,\dots,2q+p-1$, $F_{\epsilon,\ell,mix,scaled}$ is a scaled version of 
$F_{\epsilon,\ell,mix}(\epsilon \mid \bpi_{\epsilon,\ell}, \btheta_{\epsilon,\ell}) = \sum_{k=1}^{K_{\epsilon,\ell}}\pi_{\epsilon,\ell,k}F_{c\epsilon}(\epsilon \mid p_{\epsilon,\ell,k},\wt\mu_{\epsilon,\ell,k},\sigma_{\epsilon,\ell,k,1}^{2},\sigma_{\epsilon,\ell,k,2}^{2})$, scaled to have variance $1$, 
with $\btheta_{\epsilon,\ell}=\{(p_{\epsilon,\ell,k},\wt\mu_{\epsilon,\ell,k},\sigma_{\epsilon,\ell,k,1}^{2},\sigma_{\epsilon,\ell,k,2}^{2})\}_{k=1}^{K_{\epsilon,\ell}}$. 
And, for $\ell=2q+p$, $F_{\epsilon,\ell,mix,scaled}$ is a scaled version of $F_{\epsilon,\ell,mix}(\epsilon \mid \bpi_{\epsilon,\ell}, \btheta_{\epsilon,\ell}) = \sum_{k=1}^{K_{\epsilon,\ell}}\pi_{\epsilon,\ell,k}F_{\tiny{\Laplace}}(\epsilon \mid m_{\epsilon,\ell,k}, b_{\epsilon,\ell,k})$ 
with $\btheta_{\epsilon,\ell}=\{(m_{\epsilon,\ell,k},b_{\epsilon,\ell,k})\}_{k=1}^{K_{\epsilon,\ell}}$, 
adjusted to have mean zero and variance $1$. 
%This way, the marginal distributions are mean restricted mixtures of normals 
%and can take widely varying shapes while the marginal correlation between different components is $\bR_{\bepsilon}$. 
See Figure \ref{fig: SimStudy1}. 
In this case, we set
\vspace{-4ex}\\
\bse
%k_{\bepsilon}=3,~
& 
~\bR_{\bepsilon} = \left(\begin{array}{c c c}
1 & 0.5 & 0.5^2 \\
 & 1 & 0.5 \\
 &  & 1 
\end{array} \right),
~\bpi_{\epsilon,\ell} =  \left(\begin{array}{c}
0.25 \\
0.50 \\
0.25 
\end{array}\right) ~\text{for all}~\ell,\\ 
&~\btheta_{\bepsilon} =  \left(\begin{array}{c}
\btheta_{\epsilon,1}^{\trans} \\
\btheta_{\epsilon,2}^{\trans} \\
\btheta_{\epsilon,3}^{\trans}
\end{array} \right) 
= 
\left(\begin{array}{c c c}
(0.4,2,2,1) & (0.4,2,2,1) & (0.4,2,2,1) \\
(0.5,0,0.25,0.25) & (0.5,0,0.25,0.25) & (0.5,0,5,5) \\
(0,2) & (0,2) & (0,2) 
\end{array} \right).
\ese
%\vspace{-4ex}\\
The representations with $K_{\epsilon,\ell}=3$ components above are more than what are really needed to describe the particular assumed truths - 
we are effectively using a single component mixture of two-component scaled normals for $f_{\epsilon,q+1}$ producing a bimodal error distribution, 
a two component $(0.75,0.25)$ mixture of two-component scaled normals for $f_{\epsilon,q+2}$ producing a unimodal but heavier tailed error distribution, and finally 
a single component scaled Laplace for $f_{\epsilon,q+3}$ producing a unimodal ordinary smooth error distribution. 
See Figure \ref{fig: SimStudy1}.
As is, however, clear from the figure, such 3-component models are capable of generating a very wide variety of shapes, including multimodality heavy-tails etc., for the error distributions. 
We used such representations to perform small scale simulations to check our model's flexibility and efficiency. 
The results, not presented here for brevity, were comparable to the ones reported for the aforementioned choices. 

%Some of the shapes simulated above are somewhat pathological but are useful in illustrating the flexibility of our method. 

Combining the values of $\wt{X}_{\ell,i}=X_{\ell-q,i}$ and $\epsilon_{\ell,i,j}$ for $\ell=q+1,\dots,2q+p$ as generated above, 
we then simulated $W_{\ell,i,j}$ as $W_{\ell,i,j}=X_{\ell,i}+U_{\ell,i,j}$ where $U_{\ell,i,j}=s_{\ell}(\wt{X}_{\ell,i})\epsilon_{\ell,i,j}$ with $s_{\ell}(\wt{X}_{\ell}) = \wt{X}_{\ell}/3$ for each $\ell$.

To obtain zero consumption reportings for the episodic variables, we next simulated $U_{\ell,i,j} = \epsilon_{\ell,i,j}$'s for $\ell=1,\dots,q$ from $\MVN_{q}(\bzero,\bI_{q})$ and then 
set $W_{\ell,i,j}=\gamma_{\ell,0}+\gamma_{\ell,1}\newlog(X_{\ell,i})+U_{\ell,i,j}$ with $\gamma_{1,0}=1.5$ and $\gamma_{\ell,0}=1$ for $\ell=2,\dots,q$ and $\gamma_{\ell,1}=1$ for all $\ell=1,\dots,q$, where the function $\newlog$ is obtained by a Taylor series expansion of the natural $\log$ function up to the fourth order. 

Finally, we generated the `observed' data $\bY_{i,j}$ as 
$Y_{\ell,i,j} = \Ind(W_{\ell,i,j}>0)$ for $\ell=1,\dots,q$;
$Y_{\ell,i,j} = Y_{\ell-q,i,j} W_{\ell,i,j}$ for $\ell=q+1,\dots,2q$; and 
$Y_{\ell,i,j} = W_{\ell,i,j}$ for $\ell=2q+1,\dots,2q+p$.
This resulted in approximately {$20\%$, $35\%$ and $17\%$} zero recalls, respectively, when these components are designed to be episodic.

\vspace{0.25cm}
\begin{table}[!ht]
\begin{center}
\footnotesize
\begin{tabular}{|c|c|c|c|c|c|}
\hline
\multirow{2}{24pt}{No of ECs} &  \multirow{2}{24pt}{No of RCs} 	& \multirow{2}{28pt}{Sample Size}	& \multicolumn{3}{|c|}{Median ISE $\times 1000$} 	\\ \cline{4-6}
					&						& 		& {Sarkar, et al. (2018)}  		&{Zhang, et al. (2011)} 		& Our Method	\\ \hline \hline
\multirow{2}{*}{0}		& \multirow{2}{*}{3}			& 500	& 48.52 (35.90, 46.43, 1.62)	& 97.00 (16.13, 40.19, 2.47)	& 4.79 (1.52, 3.31, 0.47)  	\\
					&						& 1000	& 38.50 (29.31, 32.50, 1.35)	& 96.04 (14.82, 35.67, 2.87)	& 2.75 (0.73, 1.76, 0.40)	\\\cline{1-6}
\multirow{2}{*}{2}		& \multirow{2}{*}{1}			& 500	& $\times$				& 96.65 (17.67, 50.09, 3.68)	& 5.83 (1.83, 7.06, 0.46)	\\
					&						& 1000	& $\times$				& 96.02 (17.06, 48.39, 3.14)	& 2.79 (1.01, 3.06, 0.43)	\\\cline{1-6}
\multirow{2}{*}{3}		& \multirow{2}{*}{0}			& 500	& $\times$ 				& 98.11 (17.63, 49.41, 4.49)	& 13.04 (1.78, 7.05, 13.03)	\\
					&						& 1000	& $\times$				& 97.13 (17.19, 44.23, 3.22)	& 8.12 (1.63, 4.05, 6.28)	\\\cline{1-6}
\hline
\end{tabular}
\caption{\baselineskip=10pt 
Median integrated squared error (MISE) performance 
of density deconvolution models described in Section \ref{sec: models} of this article 
compared with the methods of \cite{sarkar2018bayesian} and \cite{Zhang2011b}. 
See Section \ref{sec: simulation studies} for additional details.
Here, EC and RC are abbreviations for episodic and regular components, respectively.
We have reported here the MISEs for estimating the three-dimensional joint densities as well as the three univariate marginals (in parenthesis). 
}
\label{tab: MISEs 1}
\end{center}
\end{table}
\vspace{-10pt}

The integrated squared error (ISE) of estimation of $f_{\bX}$ by $\wh{f}_{\bX}$ is defined as $ISE = \int \{f_{\bX}(\bX)-\widehat{f}_{\bX}(\bX)\}^{2}d\bX$.
A Monte Carlo estimate of ISE is given by
%$ISE_{est} = \sum_{m=1}^{M}\{f_{\bX}(\bX_{m})-\widehat{f}_{\bX}^{(b)}(\bX_{m})\}^{2} \bDelta_{m}$,
%where $\{\bX_{m}\}_{m=1}^{M}$ are grid points in the $(q+p)$-dimensional hypercube with edges $[0,X_{\ell,U}]$, 
%and $\bDelta_{m}$ is the volume of the $m\th$ cell in the grid. 
$ISE_{est} = \sum_{m=1}^{M}\{f_{\bX}(\bX_{m})-\widehat{f}_{\bX}(\bX_{m})\}^{2}/p_{0}(\bX_{m})$,
where $\{\bX_{m}\}_{m=1}^{M}$ are random samples from the density $p_{0}$.
%and $M=10^6$ samples generated from each of the two densities (a) $p_{0} = f_{\bX}$, the true density of $\bX$, 
%and (b) $p_{0}$ that is uniform on the hypercube with edges $\min_{k}\{\bmu_{\bX,k}-3\bone_{p}\}$ and $\max_{k}\{\bmu_{\bX,k}+3\bone_{p}\}$.
%
We used the true densities $f_{\bX}$ for $p_{0}$ and the true values of the $\bX_{i}$'s for the $\bX_{m}$'s. 
Table \ref{tab: MISEs 1} reports the median ISEs (MISEs) for estimating the trivariate joint densities and the univariate marginals  
obtained by our method, compared with the method of \cite{sarkar2018bayesian} and the method of \cite{Zhang2011b}. %and \cite{sarkar2018bayesian}. 
The MISEs reported here are all based on $B=100$ simulated data sets.
%The method of \cite{Zhang2011b}, reviewed in greater detail in the supplementary material, focused primarily on estimating the marginal intakes. 
%The MISEs for the joint densities for this method are thus not reported. 
As Table \ref{tab: MISEs 1} shows, our method vastly outperforms both methods in all cases. 

The multivariate density deconvolution method of \cite{sarkar2018bayesian} can only handle strictly continuous proxies for the latent $\bX$. 
The method is thus not applicable to episodic components with exact zero recalls. 
Simulations for this method are thus also restricted to cases where the components are all regularly consumed. 
The method of \cite{sarkar2018bayesian} derives the marginals from mixture models for joint distributions. 
Even with all continuous recalls, such a strategy is insufficiently flexible when the marginals have widely varying shapes as in our simulation scenarios. 
  
The method of \cite{Zhang2011b} accommodates exact zero recalls but, 
as discussed in the introduction and detailed in Section \ref{sec: mvt copula comparison with NCI} in the Supplementary Material, 
makes many restrictive and unrealistic model assumptions, resulting in highly inefficient density estimates. 
Interestingly, the MISEs for the method of \cite{Zhang2011b} remained practically unchanged even when the sample sizes were doubled. 
The MISEs for the method of \cite{Zhang2011b} mainly comprise the bias resulting from their highly restrictive model assumptions. 
As was also noted in \cite{Sarkar_etal:2014}, the bias in the estimates produced by restrictive deconvolution methods %may not decrease but 
often actually increase, sometimes quite significantly, with an increase in the sample size
as more data points not conforming to the model assumptions are included in the analysis.  

Figures \ref{fig: SimStudy1}, \ref{fig: SimStudy2}, \ref{fig: SimStudy3} in the main paper and Figures \ref{fig: SimStudy4}, \ref{fig: SimStudy5} in the supplementary materials 
show various estimates obtained by our method and the method of \cite{Zhang2011b} 
for the $2$ episodic and $1$ regular component case for the data sets that produced the $25$ percentile ISEs for these models. 
Figures \ref{fig: SimStudy1} shows the true and estimated univariate marginal densities. 
The estimates produced by the method of \cite{Zhang2011b} capture the general overall shapes of the true densities but are clearly far from the truths. 
In particular, they decay and dip near zero, especially markedly in case of the first episodic component. 
Although they do not smooth out to zero but show discontinuities near zero, 
the model did not actually capture these discontinuities - they are just artifacts of our final adjustments to restrict their supports to $\rR^{+}$. 
%the model did not actually capture these discontinuities - the discontinuities are just artifacts of the final adjustments to restrict their supports to $\rR^{+}$. 
The estimates obtained by our method, on the other hand, provide excellent fits to the truths. 
Figure \ref{fig: SimStudy2} shows the true and the estimated probabilities of consumptions. 
The close agreement between the true and the estimated probability curves is remarkable especially in light of the fact that 
the surrogates $W_{\ell,i,j}$, introduced to model these probabilities as well as the associated predictor values $X_{\ell,i}$ were all latent for $\ell=1,\dots,q$. 
Figure \ref{fig: SimStudy3} shows the true and the estimated univariate marginals of normalized intakes of the first two episodic components normalized by the regular component. 
Finally, Figures \ref{fig: SimStudy4} and \ref{fig: SimStudy5} presented in the supplementary material 
show the true and estimated bivariate marginals produced by the two methods. 
%for the two episodic and one regular component case. %for the data sets that produced the $25$ percentile ISEs for these models. 
%Our method provides excellent fits to the truths whereas the estimates produced by the method of \cite{Zhang2011b} are much poorer in comparison. 
Our method has again produced excellent estimates of the true bivariate marginals, whereas the estimates produced by the method of \cite{Zhang2011b} are much poorer in comparison.

\vspace{12pt} 
\begin{figure}[!ht]
\centering
\includegraphics[height=15cm, width=16cm, trim=0cm 0cm 0cm 0cm, clip=true]{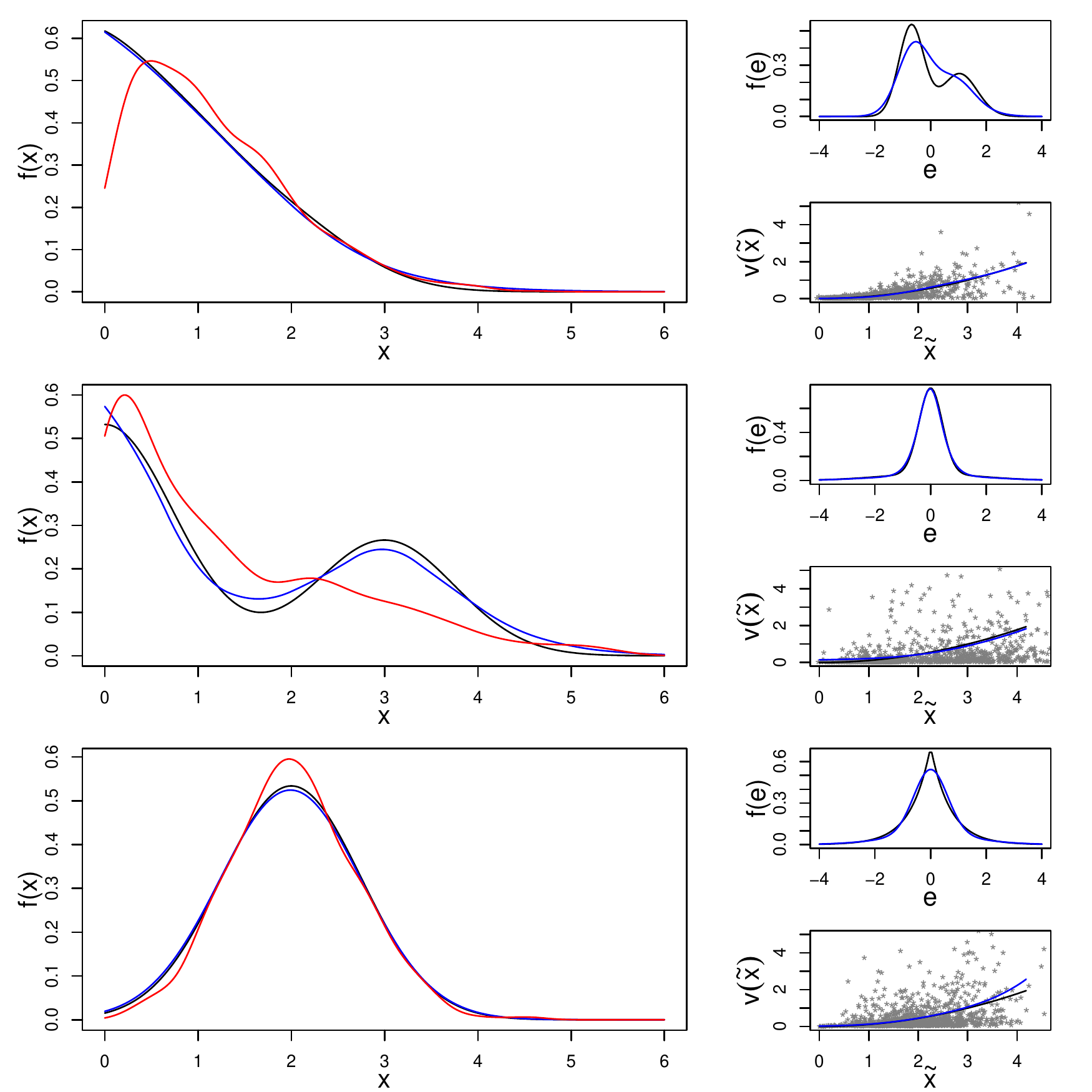}
\caption{\baselineskip=10pt 
Results for simulated data sets with sample size $n = 1000$, $q=2$ episodic components and $p=1$ regular components, each subject having $m_{i}=3$ replicates, for the data sets corresponding to the 25th percentile $3$-dimensional ISEs. 
From top to bottom, the left panels show the estimated densities $f_{X,\ell}(X_{\ell})$ of the two episodic components and the one regular component, respectively, 
obtained by our method (in blue) and the method of \cite{Zhang2011b} (in red).  
The right panels show the estimated distributions of the scaled errors $f_{\epsilon,q+\ell}(\epsilon_{q+\ell})$ and the estimated variance functions $v_{\ell}(\wt{X}_{\ell}) = s_{\ell}^{2}(\wt{X}_{\ell})$, estimated by our method. 
In all panels, the black lines represent the truth.
}
\label{fig: SimStudy1}
\end{figure}

\begin{figure}[!ht]
\centering
\includegraphics[height=6cm, width=16cm, trim=0cm 0cm 0cm 0cm, clip=true]{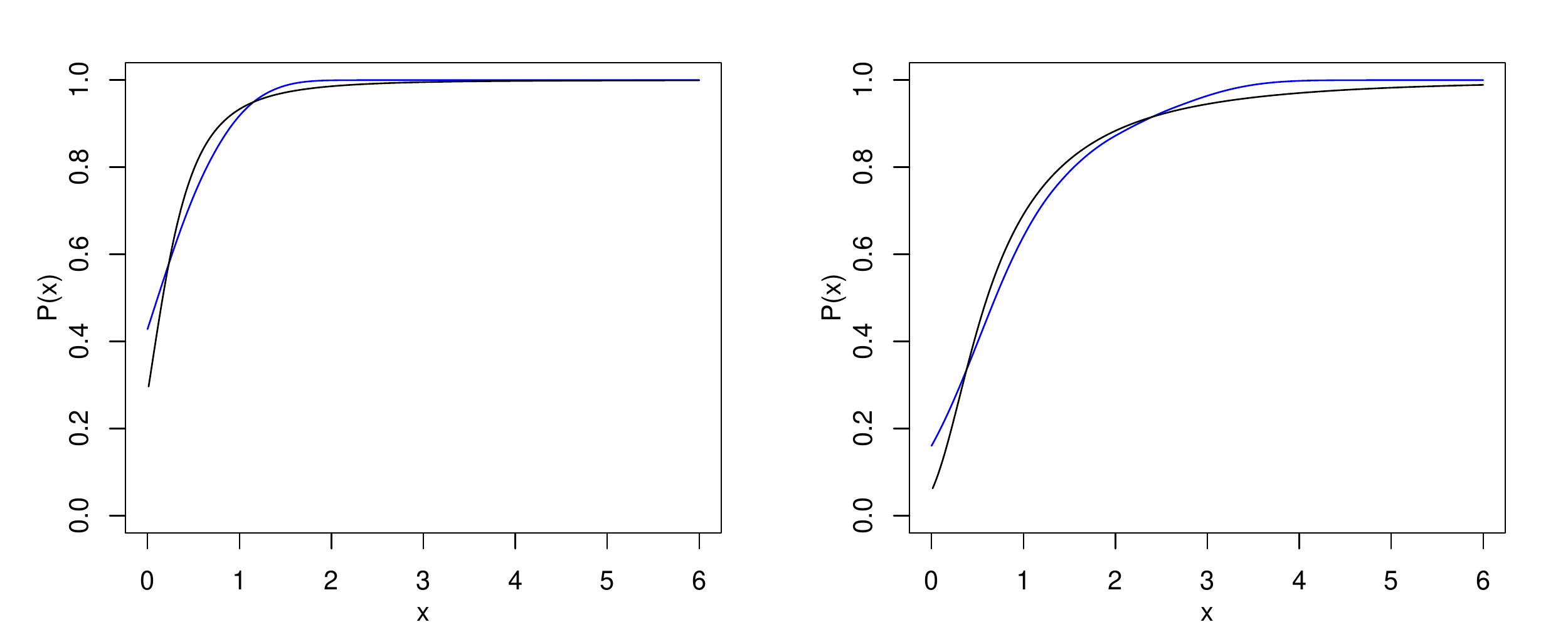}
\caption{\baselineskip=10pt 
Results for simulated data sets with sample size $n = 1000$, $q=2$ episodic components and $p=1$ regular components, each subject having $m_{i}=3$ replicates, for the data set corresponding to the 25th percentile $3$-dimensional ISE. 
The estimated (in blue) probabilities of reporting positive consumption $P_{\ell}(X_{\ell})$ for the two episodic components, estimated by our method.
In all panels, the black lines represent the truth.
}
\label{fig: SimStudy2}
\end{figure}

\begin{figure}[!ht]
\centering
\includegraphics[height=5cm, width=16cm, trim=0cm 0cm 0cm 0cm, clip=true]{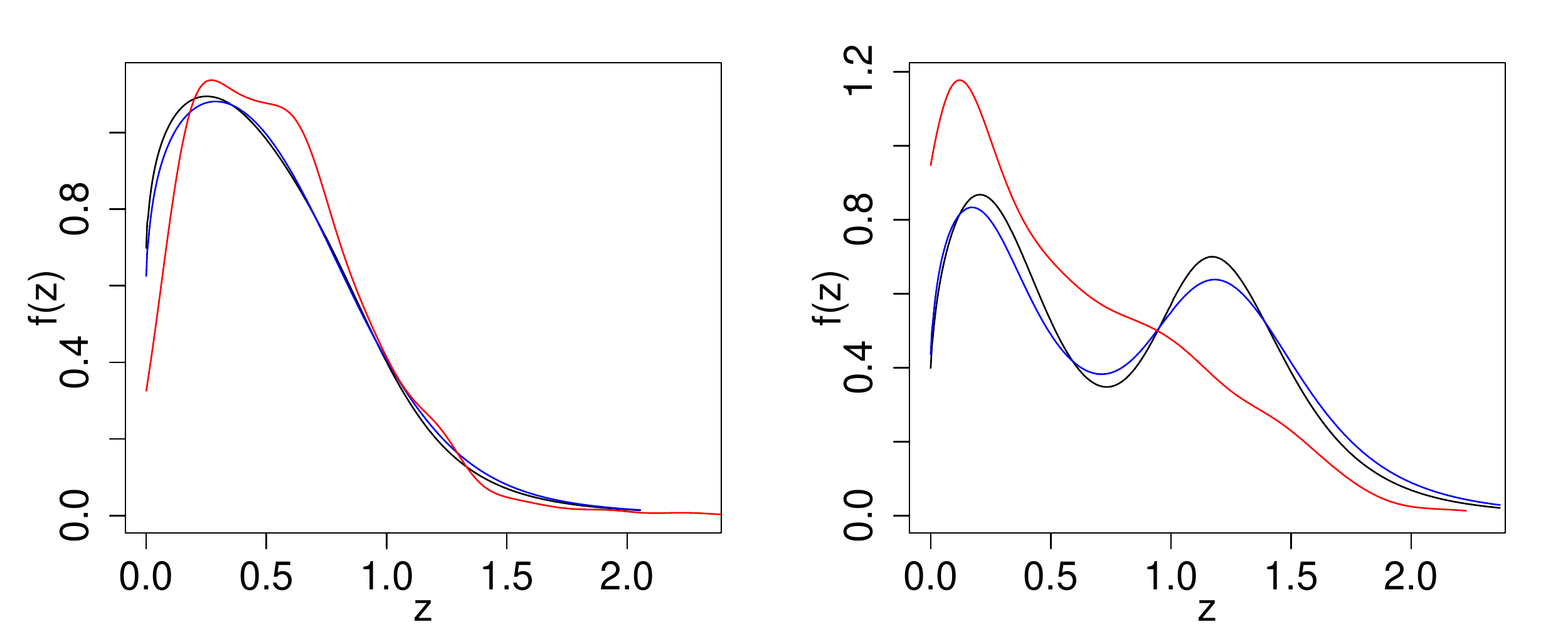}
\caption{\baselineskip=10pt 
Results for simulated data sets with sample size $n = 1000$, $q=2$ episodic components and $p=1$ regular components, each subject having $m_{i}=3$ replicates, for the data sets corresponding to the 25th percentile $3$-dimensional ISEs. 
The estimated distributions of the two episodic components, normalized by the regular component, 
estimated by our method (in blue) and by the method of \cite{Zhang2011b} (in red).
In all panels, the black lines represent the truth.
}
\label{fig: SimStudy3}
\end{figure}

As we have seen from Figure \ref{fig: EATS Milk & Whole Grains}, in real data sets the distributions of episodically consumed components are typically extremely right-skewed with discontinuities at zero. 
The case when all the components are designed to be episodic, including the third variable whose distribution is symmetric unimodal, 
was thus rather artificial 
but is still helpful in providing some insight.   
The marginal ISEs for the third component in this case were consistently significantly larger than the ISEs for the first two components for our method 
even though the shape of its distribution was much simpler compared to the extreme right skewed distributions of the first two components. 
This can be attributed to the fact that, since the third component also had high probabilities of reporting non-consumptions near the left boundary but the center of the distribution was away from the left boundary, 
almost all its recalls for small true intakes near the left boundary were zero recalls. 
This made estimating the relatively simple third distribution and the associated variance function more difficult than estimating these functions for the first two components 
which, even with similar probabilities of reporting non-consumptions, still had a good number of no-zero recalls available near the left boundaries. 
This is the only case when \cite{Zhang2011b} outperformed us, taking advantage of the simple unimodal bell shape of the true distribution which conforms closely to the method's parametric assumptions. 
However, as also discussed in the beginning of this paragraph, this is a highly unrealistic case. 
In practice, the true distributions of episodic components are never unimodal bell-shaped but are always reflected J-shaped, 
in which case our method vastly dominates. 
%
%The one-dimensional MISEs for the method of \cite{Zhang2011b} also showed similar behavior. 
%The three-dimensional MISEs were, however, stable. 
%Except both being based on the estimated values of $X_{\ell,i}$'s and $\bX_{i}$'s, respectively, 
The univariate and the multivariate density estimates obtained by method of \cite{Zhang2011b} are based on the estimated values of $X_{\ell,i}$'s and $\bX_{i}$'s, respectively, 
but are otherwise not related but independently derived. 
The univariate details thus get masked in the three-dimensional estimates which, being based on single component multivariate normal models, remain relatively stable in all cases 
even though they are consistently heavily biased. 
This example illustrates the importance of assessing the estimation of the univariate marginals in multivariate deconvolution problems, 
reiterating the suitability of copula based approaches in applications like ours where the univariate marginals could be widely different.
%The example also reiterates why a copula based approach is highly suited to applications like ours where the univariate marginals could be widely different.  

Additional small scale simulations, where we closely mimicked the parametric assumptions of \cite{Zhang2011b}, 
are presented in Section \ref{sec: additional sims} in the Supplementary Material.

\section{Applications in Nutritional Epidemiology} \label{sec: applications}

In this section, we discuss the results of our method applied to the EATS data set. 
Specifically, we consider the problem of estimating the distributions of long-term average daily intakes of two episodic components - milk and whole grains, 
and two regular components -  sodium and energy. 
The surrogates for milk and whole grains, we recall, had approximately $21\%$ and $37\%$ exact zeros.

\begin{figure}[!ht]
\centering
\includegraphics[height=18.5cm, width=16cm, trim=0cm 0cm 0cm 0cm, clip=true]{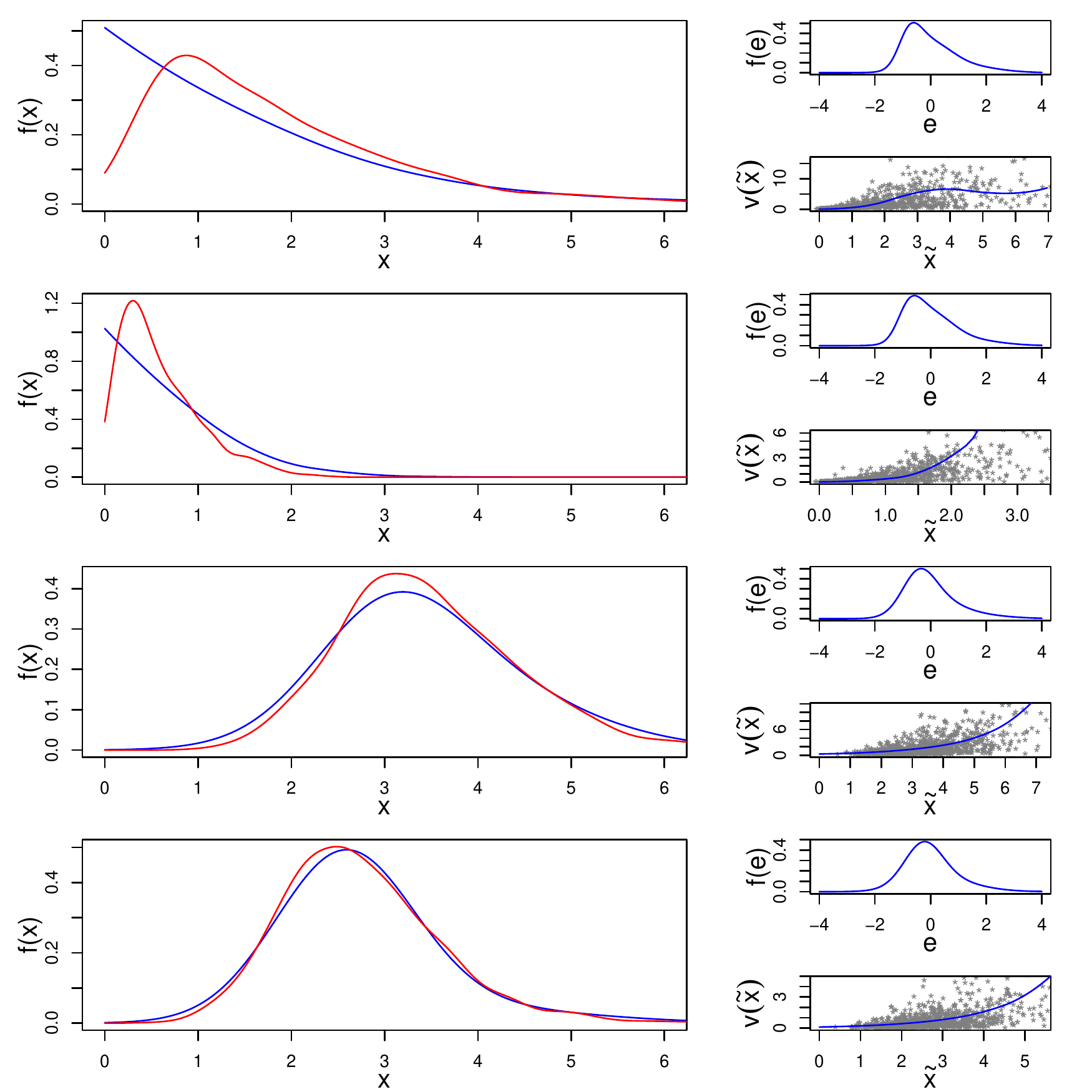}
\caption{\baselineskip=10pt 
Results for the EATS data sets with sample size $n = 965$, $q=2$ episodic components, milk and whole grains, and $p=2$ regular components, sodium and energy, each subject having $m_{i}=4$ replicates. 
From top to bottom, the left panels show the estimated densities $f_{X,\ell}(X_{\ell})$ of milk and whole grains, sodium, and energy, respectively, 
obtained by our method (in blue) and the method of \cite{Zhang2011b} (in red).  
The right panels show the associated distributions of the scaled errors $f_{\epsilon,q+\ell}(\epsilon_{q+\ell})$ and the associated variance functions $v_{\ell}(\wt{X}_{\ell}) = s_{\ell}^{2}(\wt{X}_{\ell})$, estimated by our method. 
}
\label{fig: EATS1}
\end{figure}

\begin{figure}[!ht]
\centering
\includegraphics[height=6cm, width=16cm, trim=0cm 0cm 0cm 0cm, clip=true]{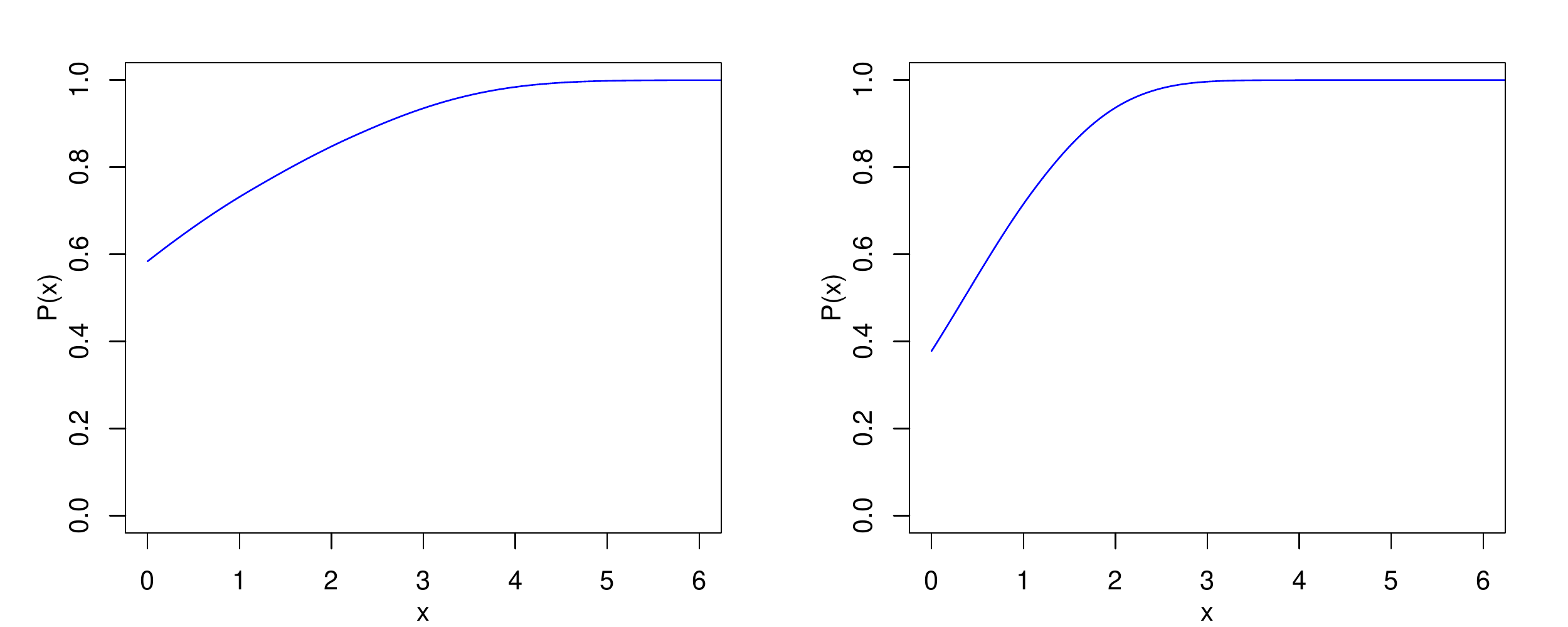}
\caption{\baselineskip=10pt 
Results for the EATS data sets with sample size $n = 965$, $q=2$ episodic components, milk and whole grains, and $p=2$ regular components, sodium and energy, each subject having $m_{i}=4$ replicates. 
The estimated probabilities of reporting positive consumption $P_{\ell}(X_{\ell})$ for the episodic components milk (left panel) and whole grains (right panel), estimated by our method.
}
\label{fig: EATS2}
\end{figure}

\begin{figure}[!ht]
\centering
\includegraphics[height=5cm, width=16cm, trim=0cm 0cm 0cm 0cm, clip=true]{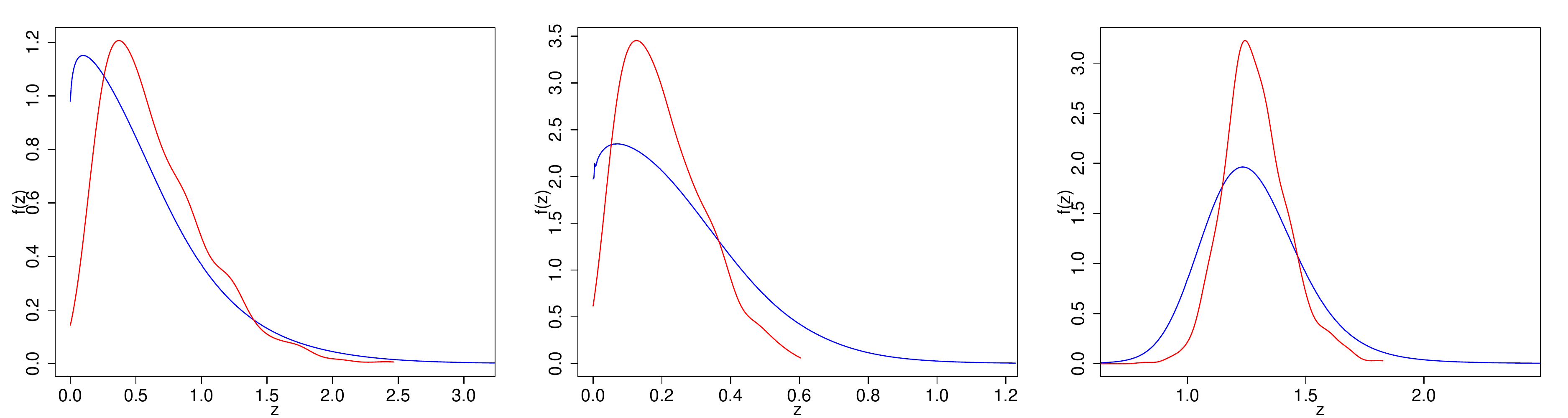}
\caption{\baselineskip=10pt 
Results for the EATS data sets with sample size $n = 965$, $q=2$ episodic components, milk and whole grains, and $p=2$ regular components, sodium and energy, each subject having $m_{i}=4$ replicates. 
From left to right, the estimated distributions of normalized intakes of milk, whole grains and sodium, normalized by total energy, 
estimated by our method (in blue) and by the method of \cite{Zhang2011b} (in red).
}
\label{fig: EATS3}
\end{figure}

Figure \ref{fig: EATS1} shows the estimated marginal densities $f_{X,\ell}$ obtained by our method and the method of \cite{Zhang2011b}. 
For sodium and energy, there is general agreement between the estimates obtained by our method and the method of \cite{Zhang2011b}. 
For the episodic components milk and whole grains, on the other hand, the estimated densities look very different, especially near the left boundary. 
Our method shows these densities to continually increase as we approach zero from right, as is expected from Figure \ref{fig: EATS Milk & Whole Grains}. 
Consistent with Figure \ref{fig: EATS Milk & Whole Grains}, compared to milk, the distribution of whole grains is also more concentrated near zero.
The estimates produced by \cite{Zhang2011b}, on the other hand, dip near zero, as was also observed in simulation scenarios. 

The right panels in Figure \ref{fig: EATS1} show the estimates of the densities of scaled measurement errors $f_{\epsilon,q+\ell}(\epsilon_{q+\ell})$ and the estimates of the variance functions $s_{\ell}^{2}(\wt{X}_{\ell})$. 
The estimated $f_{\epsilon,q+\ell}$'s are positively skewed for all components. 
And, as expected from Figures \ref{fig: EATS Sodium & Energy} and \ref{fig: EATS Milk & Whole Grains}, 
the estimated $s_{\ell}^{2}$'s show strong patterns of conditional heteroscedasticity for all components. 
For the episodic components, our method also provides estimates of the probabilities of reporting positive consumptions which are shown in Figure \ref{fig: EATS2}. 
The recalls for whole grains have more zeros than the recalls for milk. 
Its distribution is also more concentrated near zero. 
The probability of reporting positive consumptions for whole grains thus increases more rapidly as its true daily average intake increases. 

Figure \ref{fig: EATS3} shows the distributions of normalized intakes obtained by our method and the method of \cite{Zhang2011b}. 
The estimates look very different, including the one for the regular component sodium. 
Our method provides more realistic estimates of the distribution of normalized intakes that are more concentrated near zero but are more widely spread. 

Figures \ref{fig: EATS4} in the supplementary material 
shows the estimated bivariate marginals for produced by our method and the method of \cite{Zhang2011b}. 
Figure \ref{fig: EATS5 Z Tables} in the Supplementary Material additionally illustrates how the redundant mixture components become empty after reaching steady states in our MCMC based implementation. 
Figure \ref{fig: EATS5 Z Tables}  also shows how in practice the mixture component specific parameters get shared across different dimensions in our models with shared parameters for the marginal densities.

%\newpage
\section{Discussion} \label{sec: discussion}
\hspace*{7mm}{\bf Summary:}
In this article, we considered the problem of multivariate density deconvolution 
when replicated proxies are available but, complicating the challenges, the proxies also include exact zeros for some of the components. 
The problem is important in nutritional epidemiology for estimating long-term intakes of episodically consumed dietary components. 
We developed a novel copula based deconvolution approach that focuses on the marginals first and then models the dependence among the components to build the joint densities, 
allowing us to adopt different modeling strategies for different marginal distributions which proved crucial in accommodating important features of our motivating data sets. 
In contrast to previous approaches of modeling episodically consumed dietary components, 
our novel Bayesian hierarchical modeling framework allows us to model the distributions of interest more directly, 
resulting in vast improvements in empirical performances 
while also providing estimates of quantities of secondary interest, 
including probabilities of reporting non-consumptions, measurement errors' conditional variability etc.  
%In numerical experiments, we vastly outperformed previous approaches to the problem. 

{\bf Other potential applications:}
Applications of the multivariate deconvolution approach developed here are not limited to zero-inflated data only but also naturally include data with strictly continuous recalls, as was shown in the simulations.
%, thus providing an alternative to \cite{sarkar2018bayesian}. 
Advanced multivariate deconvolution methods are also needed to correct for measurement errors in regression settings when multiple error contaminated predictors are needed to be included in the model. 

{\bf Methodological extensions:}
Other methodological extensions and subjects of ongoing research include 
inclusion of associated exactly measured covariates like age, sex etc. that can potentially influence the consumption patterns, 
establishing theoretical convergence guarantees for the posterior,  
accommodation of dietary components which, unlike regular or episodic components, are never consumed by a percentage of the population, 
accommodation of subject specific survey weights,  
exploration of non-Gaussian copula classes, 
inclusion of additional information provided by food frequency questionnaires 
etc.

{\bf HEI index:}
Aside being of independent interest, episodic dietary components also contribute to %make important components to 
the Healthy Eating Index (HEI, https://www.cnpp.usda.gov/healthyeatingindex), a performance measure developed by the US Department of Agriculture (USDA) to assess and promote healthy diets 
\citep{Guenther2008a,krebs2018update}. 		% Guenther2008b, guenther2013update
%The HEI-2015 is the most current version of the HEI in terms of conformance with the key recommendations of the 2015-2020 Dietary Guidelines for Americans (https://health.gov/dietaryguidelines/2015/). 
The index is based on $13$ energy adjusted dietary components, as many as $8$ of which are episodic, %contributing to 13 distinct component scores and an overall summary score, 
and is currently calculated using the NCI method discussed in Section \ref{sec: mvt copula comparison with NCI}. 
%See Table \ref{tab: HEI} for a list of these components and the standards for scoring. 
%Overall, a higher total HEI score indicates a diet better aligned with the dietary recommendations.
The methodology developed in this article provides a much more sophisticated framework for modeling the HEI index 
and makes up an important component of our ongoing research.

\baselineskip=17pt
\section*{Supplementary Material}
The supplementary material presents a brief review of copula and explicit formula of quadratic B-splines for easy reference. 
The supplementary material also provides a detailed comparison of our method with previous approaches to zero-inflated data. 
The supplementary material additionally details the choice of hyper-parameters and the MCMC algorithm used to sample from the posterior, 
presents some additional figures, and the results of some additional numerical experiments. 
R programs implementing the deconvolution methods developed in this article are included in the supplementary material. 
The EATS data analyzed in Section \ref{sec: applications} can be accessed from National Cancer Institute by arranging a Material Transfer Agreement.
A simulated data set, simulated according to one of the designs described in Section \ref{sec: simulation studies}, 
and a `readme' file providing additional details are also included in the supplementary material.  

\section*{Acknowledgments}
Pati's research was supported in part by NSF grant DMS1613156.
Mallick's research was supported by grant R01CA194391 from the National Cancer Institute and grant CCF-1934904 from the NSF.
Carroll's research was supported in part by grant U01-CA057030 from the National Cancer Institute.

We thank the University of Texas Advanced Computing Center (TACC) for providing computing resources that contributed to the research reported here.

\newcommand{\Appendix}{\appendix\def\thesection{Appendix~\Alph{section}}\def\thesubsection{\Alph{section}.\arabic{subsection}}}
\section*{Appendix}
\begin{appendix}
\Appendix
\renewcommand{\theequation}{A.\arabic{equation}}
\setcounter{equation}{0}
\baselineskip=17pt

\section{Supplementary Results}\label{appendix: supplementary results}
The following result establishes that our model for the correlation matrices from Section \ref{sec: mvt copula density of interest} is sufficiently flexible. 
\begin{Lem}
Any correlation matrix admits a parametrization proposed in Section \ref{sec: mvt copula density of interest}. 
\end{Lem}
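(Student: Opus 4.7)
The plan is to start with an arbitrary correlation matrix $\bR$ of dimension $(q+p)\times(q+p)$ and explicitly construct parameters $b_1,\ldots,b_{q+p-1}$ and $\theta_1,\ldots,\theta_{i_2(q+p)}$ in the proposed form. Since $\bR$ is positive semi-definite with unit diagonal, it admits a Cholesky factorization $\bR = \bV\bV\trans$ with $\bV$ lower triangular and non-negative diagonal entries. The constraint that $\bR$ has unit diagonal translates to each row $(v_{\ell,1},\ldots,v_{\ell,\ell})$ being a unit vector in $\rR^{\ell}$ with non-negative last coordinate $v_{\ell,\ell}\geq 0$. Clearly $v_{1,1}=1$ is forced, and for $\ell=2$ setting $b_1 = v_{2,1}\in[-1,1]$ gives $v_{2,2}=\sqrt{1-b_1^2}$.

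For each $\ell\geq 3$, I would first choose $b_{\ell-1}\in[-1,1]$ so that $v_{\ell,\ell}=\sqrt{1-b_{\ell-1}^2}$; this is possible since $v_{\ell,\ell}\in[0,1]$. Provided $b_{\ell-1}\neq 0$, the vector $\bu_\ell := (v_{\ell,1},\ldots,v_{\ell,\ell-1})/b_{\ell-1}$ lies on the unit sphere in $\rR^{\ell-1}$. It is a standard fact that every unit vector in $\rR^{n}$ can be expressed in recursive spherical form $(\sin\phi_1,\cos\phi_1\sin\phi_2,\ldots,\cos\phi_1\cdots\cos\phi_{n-2}\sin\phi_{n-1},\cos\phi_1\cdots\cos\phi_{n-1})$ with each $\phi_k\in[-\pi,\pi]$. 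This can be proved by induction on $n$: define $\phi_1$ by $\sin\phi_1 = u_1$ (permissible since $|u_1|\leq 1$), choose its sign so that $\cos\phi_1$ matches the sign of $\sqrt{u_2^2+\cdots+u_n^2}$ needed downstream, and then apply the inductive hypothesis to the unit vector $(u_2,\ldots,u_n)/\cos\phi_1\in\rR^{n-1}$ when $\cos\phi_1\neq 0$. Applying this representation to $\bu_\ell$ with $n=\ell-1$ and identifying $\phi_k$ with $\theta_{i_1(\ell)+k-1}$ yields the required angles $\theta_{i_1(\ell)},\ldots,\theta_{i_2(\ell)}$; the count works out since $i_2(\ell)-i_1(\ell)+1 = \ell-2$, which equals the number of spherical angles needed to parametrize the unit sphere in $\rR^{\ell-1}$.

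Degenerate cases are handled without difficulty. If $b_{\ell-1}=0$, then $v_{\ell,1}=\cdots=v_{\ell,\ell-1}=0$, and the angles $\theta_{i_1(\ell)},\ldots,\theta_{i_2(\ell)}$ can be assigned any values in $[-\pi,\pi]$ without altering $\bV$. Similarly, if $\cos\phi_k=0$ at an intermediate step, the subsequent $\phi$'s become unconstrained. In either situation, the parametrization still reproduces $\bV$, so the map from the parameter space to the set of valid Cholesky factors (and hence correlation matrices) is surjective.

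The main subtlety will be bookkeeping with signs: the proposed form enforces $v_{\ell,\ell}\geq 0$ through $v_{\ell,\ell}=\sqrt{1-b_{\ell-1}^2}$, but this is precisely the convention used in the Cholesky factorization, so no generality is lost. Negative entries in the off-diagonal positions of $\bV$ are accommodated by allowing $b_{\ell-1}\in[-1,1]$ (with the sign freedom of $b_{\ell-1}$ doubling the sign freedom available from the spherical angles) and $\theta_s\in[-\pi,\pi]$ (so that both $\sin\theta_s$ and $\cos\theta_s$ can take arbitrary signs). A routine calculation using $\sin^2+\cos^2=1$ telescopically verifies $\sum_{k=1}^{\ell-1}(v_{\ell,k}/b_{\ell-1})^2=1$, closing the argument.
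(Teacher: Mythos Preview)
Your proof is correct and follows essentially the same route as the paper: take the Cholesky factor $\bV$ of $\bR$, observe that each row is a unit vector with non-negative last entry, and then represent each row via spherical coordinates to recover the $b_t$'s and $\theta_s$'s. Your version is in fact slightly more direct than the paper's, which detours through an auxiliary covariance matrix $\bSigma=\bD\bR\bD$ and its Cholesky factor $\bL$ before normalizing back to $\bV=\bD^{-1}\bL$; you also handle the degenerate cases ($b_{\ell-1}=0$, intermediate $\cos\phi_k=0$) and the angle ranges more carefully than the paper does.
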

\begin{proof}
For any correlation matrix $\bR^{p\times p}$, consider, without loss of generality, an associated covariance matrix $\bSigma=((\sigma_{ij}))=\bD\bR\bD$ where $\bD=\diag(\sigma_{1},\dots,\sigma_{p})$ with $\sigma_{ii}=\sigma_{i}^{2}$. 
Let $\bSigma=\bL\bL\trans$ be the Cholesky decomposition of $\bSigma$ 
where $\bL=((L_{ij}))$ is a lower triangular matrix with $L_{ij}=0$ for all $i<j$ and positive diagonal elements $L_{ii}>0$. 
The Cholesky decomposition of $\bR$ is then 
$\bR = \bV\bV\trans$ where $\bV=\bD^{-1}\bL$. 
The elements of each row are otherwise unrestricted and can be represented using spherical coordinates as 
\vspace{-4ex}\\
\bse
&& L_{1,1}=\sigma_{1}, \\
&& L_{2,1}=\sigma_{2} \sin\phi_{2,1}, ~ L_{2,2}=\sigma_{2} \cos\phi_{2,1},\\
&& L_{3,1} =\sigma_{3}\sin\phi_{3,1}, ~L_{3,2}=\sigma_{3}\cos\phi_{3,1}\sin\phi_{3,2},~L_{3,3}=\sigma_{3}\cos\phi_{3,1}\cos\phi_{3,2},\\
%&& L_{4,1} =\sigma_{4}\sin\phi_{4,1}, ~L_{4,2}=\sigma_{4}\cos\phi_{4,1}\sin\phi_{4,2},~ \dots, ~ L_{4,4}=\sigma_{4}\cos\phi_{4,1}\cos\phi_{4,2}\cos\phi_{4,3},\\
&& \vdots \\
&& L_{p,1}=\sigma_{p}\sin\phi_{p,1}, ~ L_{p,2}=\sigma_{p}\cos\phi_{p,1}\sin\phi_{p,2}, \dots, ~ L_{p,p}=\sigma_{p}\cos\phi_{p,1}\dots\cos\phi_{p,p-1},
\ese
\vspace{-4ex}\\
with $\{\phi_{i,j}\}_{i=1,j=1}^{p,i-1}\in(-\pi/2,\pi/2)$. 
The elements of $\bV=\bD^{-1}\bL=((v_{ij}))$ are thus given by 
\vspace{-4ex}\\
\bse
&& v_{1,1}=1, \\
&& v_{2,1}=\sin\phi_{2,1}, ~ v_{2,2}=\cos\phi_{2,1},\\
&& v_{3,1} =\sin\phi_{3,1}, ~v_{3,2}=\cos\phi_{3,1}\sin\phi_{3,2},~v_{3,3}=\cos\phi_{3,1}\cos\phi_{3,2},\\
%&& v_{4,1} =\sin\phi_{4,1}, ~v_{4,2}=\cos\phi_{4,1}\sin\phi_{4,2},~ \dots, ~ v_{4,4}=\cos\phi_{4,1}\cos\phi_{4,2}\cos\phi_{4,3},\\
&& \vdots \\
&& v_{p,1}=\sin\phi_{p,1}, ~v_{p,2}=\cos\phi_{p,1}\sin\phi_{p,2}, \dots, ~ v_{p,p}=\cos\phi_{p,1}\dots\cos\phi_{p,p-1}.
\ese
\vspace{-4ex}\\
The above representation is clearly over-parameterized. 
Setting 
\vspace{-4ex}\\
\bse
&& v_{1,1}=1, \\
&& v_{2,1}=\sin\phi_{2,1}=b_{1}, ~ v_{2,2}=\cos\phi_{2,1}=\sqrt{1-b_{1}^{2}},\\
&& v_{3,1}=\sin\phi_{3,1}=b_{2}\sin\theta_{1}, ~v_{3,2}=\cos\phi_{3,1}\sin\phi_{3,2}=b_{2}\cos\theta_{1},~v_{3,3}=\cos\phi_{3,1}\cos\phi_{3,2}=\sqrt{1-b_{2}^{2}},\\
%&& v_{4,1} =\sin\phi_{4,1}, ~v_{4,2}=\cos\phi_{4,1}\sin\phi_{4,2},~ \dots, ~ v_{4,4}=\cos\phi_{4,1}\cos\phi_{4,2}\cos\phi_{4,3},\\
&& \vdots \\
&& v_{p,1}=\sin\phi_{p,1}=b_{p-1}\sin\theta_{(p^{2}-5p+8)/2}, ~v_{p,2}=\cos\phi_{p,1}\sin\phi_{p,2}=b_{p-1}\cos\theta_{(p^{2}-5p+8)/2}\sin\theta_{(p^{2}-5p+8)/2+1}, \\
&& \hspace{9cm}\dots, ~ v_{p,p}=\sqrt{1-b_{p-1}^{2}}
\ese
removes the redundancies and results in the parametrization of Section \ref{sec: mvt copula density of interest}
\end{proof}

\section{Proof of Theorem \ref{thm: identifiability}}\label{appendix: proof of identifiability}
\begin{proof}
The first part of the proof proceeds along the lines of \cite{hu2008instrumental} with some important differences.  We first show that 
$f_{\bY_{1}\mid \wt\bX}(\bY_{1}\mid \wt\bX), f_{\bY_{2}\mid \wt\bX}(\bY_{2}\mid \wt\bX), f_{\wt\bX}(\wt\bX \mid \bY_{3})$ are recoverable from the conditional density $f_{\bY_{1}, \bY_{2} \mid \bY_{3}}(\bY_{1}, \bY_{2} \mid \bY_{3})$, given by 
\vspace{-4ex}\\
\bse
f_{\bY_{1}, \bY_{2} \mid \bY_{3}}(\bY_{1}, \bY_{2} \mid \bY_{3}) = \int f_{\bY_{1}\mid \wt\bX}(\bY_{1}\mid \wt\bX) f_{\bY_{2}\mid \wt\bX}(\bY_{2}\mid \wt\bX) f_{\wt\bX}(\wt\bX \mid \bY_{3}) d\wt\bX. 
\ese
\vspace{-4ex}\\
For any $\bZ_{1}, \bZ_{2}, \bZ_{3}, \bX$, define a collection of operators $T_{\bZ_{1}; \bZ_{2}\mid \bZ_{3}}: L_{1} \to L_{1}$  indexed by $\bZ_{1}$ an operator $T_{\bZ_{1}\mid \bZ_{3}}: L_{1} \to L_{1}$ and a collection of diagonal operators indexed by $\bZ_{2}$ as  $D_{\bZ_{1}; \bX}: L_{1} \to L_{1}$.  
\vspace{-4ex}\\
\bse
T_{\bZ_{1}; \bZ_{2}\mid \bZ_{3}} g(\bZ_{2}) &=& \int f_{\bZ_{1} \mid \bX}(\bZ_{1} \mid \bX) f_{\bZ_{2} \mid \bX}(\bZ_{2} \mid \bX) f_{\bX \mid \bZ_{3}}(\bX \mid \bZ_{3}) g(\bZ_{3}) d\bX d\bZ_{3}, \\
T_{\bZ_{2}\mid \bZ_{3}} g(\bZ_{2}) &=& \int  f_{\bZ_{2} \mid \bX}(\bZ_{2} \mid \bX) f_{\bX \mid \bZ_{3}}(\bX \mid \bZ_{3}) g(\bZ_{3}) d\bX d\bZ_{3}, \quad 
D_{\bZ_{1}; \bX} g(\bX) = f_{\bZ_{1} \mid \bX }(\bZ_{1} \mid \bX)  g(\bX)
\ese
\vspace{-4ex}\\
Note that 
\vspace{-6ex}\\
\be
T_{\bY_{1}; \bY_{2}\mid \bY_{3}} &=& T_{\bY_{2} \mid \wt\bX} D_{\bY_{1}; \wt\bX} T_{\wt\bX \mid \bY_{3}} \label{op1}, \\
T_{\bY_{2}\mid \bY_{3}} &=& T_{\bY_{2}\mid \wt\bX} T_{\wt\bX \mid \bY_{3}}  \label{op2}.  
\ee  
\vspace{-4ex}\\
In the following, we prove that the operators $T_{\bY_{2}\mid \wt\bX}$ and $T_{\bY_{2}\mid \bY_{3}}$ are invertible. 
Then, \eqref{op1}-\eqref{op2} will imply that 
\vspace{-6ex}\\
\be
T_{\bY_{1}; \bY_{2}\mid \bY_{3}} T^{-1}_{\bY_{2}\mid \bY_{3}}  &=&  T_{\bY_{2} \mid \wt\bX} D_{\bY_{1}; \wt\bX} T^{-1}_{\bY_{2} \mid \wt\bX}. \label{opdecomp}
\ee  
\vspace{-4ex}\\
It is easy to see that $T_{\bY_{2}\mid \bY_{3}}$ is invertible if and only if $T_{\bY_{2}\mid \wt\bX}$ and $T_{\bY_{3}\mid \wt\bX}$ both are invertible. 
Now since $\bY_{j}, j=1,\dots,3$, are identically distributed conditioned on  $\wt\bX$, then it is enough to show that $T_{\bY_{1}\mid \wt\bX}$ is invertible. 
Note that
\vspace{-4ex}\\
\bse
\textstyle T_{\bY_{1}\mid \wt\bX}(g) (\bY_{1}) = \int f_{\bU \mid \wt\bX} ( \bY_{1} - \wt\bX \mid \wt\bX) g(\wt\bX) d \wt\bX 
= \int \frac{1}{\prod_{\ell} s_\ell(\widetilde{X}_\ell)}f_{\bepsilon}\big[ \mbox{diag}\{\bS(\wt\bX)\}^{-1}(\bY_{1} - \wt\bX)\big] g(\wt\bX) d \wt\bX.
\ese
\vspace{-4ex}\\
If  $s_\ell(\widetilde{X}_\ell) = 0$ for some $\ell$ and for $\widetilde{X}_\ell \in \mathcal{Z}$ for some set $\mathcal{Z}$, then 
$f_{\bY_1 \mid \wt\bX} (\bY_1 \mid \wt\bX) = \delta_{\wt\bX}$ for $\widetilde{X}_\ell \in \mathcal{Z}$, where $\delta_{\wt\bX}$ is a degenerate probability measure at the point $\wt\bX$. In this case, for $\widetilde{X}_\ell \in \mathcal{Z}$, the $f_{\bY_1 \mid \wt\bX}$ is known and hence recoverable. Hence, we can assume $\bS(\wt\bX) > 0$. Then by {(A2)} the Fourier transform of  $f_{\bepsilon}$ is non-vanishing everywhere, by Wiener's Theorem \citep{goldberg1962}, the closed linear span of $f_{\bU \mid \wt\bX}$ is $L_{1}$. By Hahn-Banach Theorem, the dual space of $L_{1}$ is $L_{\infty}$ and there is an isometric isomorphism from $L_\infty$
to $L_{1}$ by $\Phi: g  \mapsto T_{\bY_{1}\mid \wt\bX}(g)$. 
Since the closed 
linear span of $f_{\bU \mid \wt\bX}$ is $L_{1}$,  $T_{\bY_{1}\mid \wt\bX}(g) = 0$ for all $\wt\bX$ implies  that the mapping $\Phi$ is identically equal to zero.  This proves that $T_{\bY_{1}\mid \wt\bX}$ is invertible. 

By {(A2)},  for all $\bX_{1}, \bX_{2}$, the set $\{\bY:f_{\bY_{1} \mid \bX_{1}}(\bY \mid \bX_{1})  \neq f_{\bY_{2} \mid \bX_{2}}(\bY \mid \bX_{2})  \}$ has a positive probability whenever $\bX_{1} \neq \bX_{2}$. Note that this includes the case when $s_{\ell}(X_{\ell})$ is a constant function for all $\ell$. In that case, the variation in the conditional density is caused by the location term $\wt\bX$ and the above-mentioned set has thus a positive probability. 
Hence, by the proof of Theorem 1 in \cite{hu2008instrumental}, \eqref{opdecomp} is a unique decomposition. 
Therefore, $f_{\bY_{1}\mid \wt\bX}$ can be recovered from $f_{\bY_{1}, \bY_{2} \mid \bY_{3}}$ and hence from $f_{\bY_{1}, \bY_{2}, \bY_{3}}$.   Now since $T_{\bY_{2}\mid \wt\bX} ^{-1}T_{\bY_{2}\mid \bY_{3}} = T_{\wt\bX \mid \bY_{3}}$,  $f_{\wt\bX \mid \bY_{3}}$ is identifiable. This implies  $f_{\wt\bX}(\wt\bX) = \int f_{\wt\bX \mid \bY_{3}}(\wt\bX \mid \bY_{3}) f_{\bY_{3}}(\bY_{3}) d\bY_{3}$ is identifiable, completing the proof. 

To prove the second part, observe that it suffices to show that $f_{\bX}(\bX)$ 
%for $\ell=1,\ldots,q$ 
%$X_{\ell}$, for $\ell=1,\ldots,q$,
is uniquely identified from $f_{\bY_{j}\mid \wt\bX}(\bY_{j}\mid \wt\bX)$ for $j=1,\ldots,3$ and $f_{\wt\bX}(\wt\bX)$. 
Since $h_\ell(X_\ell) = \wt X_{\ell}$, its distribution is uniquely identified from $\wt X_{\ell}$ for $\ell=1, \ldots, q$. 
It thus remains to show that the distribution of $P_{\ell}(X_{\ell})$ is uniquely identified for $\ell=1, \ldots, q$. 
Since $P_{\ell}(X_{\ell}) = \Phi\{h_{\ell}(X_{\ell})\}$, this follows immediately. 
\end{proof}

\end{appendix}

%\newpage
\baselineskip=14pt

\bibliographystyle{natbib}
\bibliography{ME,Copula}

\clearpage\pagebreak\newpage
\pagestyle{fancy}
\fancyhf{}
\rhead{\bfseries\thepage}
\lhead{\bfseries SUPPLEMENTARY MATERIAL}

\begin{center}
\baselineskip=27pt
{\LARGE{\bf Supplementary Material for\\ Bayesian Copula Density Deconvolution for Zero Inflated Data in Nutritional Epidemiology}}
\end{center}
%\vskip0.25cm

\vskip 2mm
\begin{center}
Abhra Sarkar\\
abhra.sarkar@utexas.edu \\
Department of Statistics and Data Sciences,
The University of Texas at Austin\\
2317 Speedway D9800, Austin, TX 78712-1823, USA\\
\hskip 5mm \\
Debdeep Pati and Bani K. Mallick\\
debdeep@stat.tamu.edu and 
bmallick@stat.tamu.edu\\
Department of Statistics, Texas A\&M University\\ 
3143 TAMU, College Station, TX 77843-3143, USA\\
\hskip 5mm \\
Raymond J. Carroll\\
carroll@stat.tamu.edu\\
Department of Statistics, Texas A\&M University\\ 
3143 TAMU, College Station, TX 77843-3143, USA\\
School of Mathematical and Physical Sciences, University of Technology Sydney\\ 
Broadway NSW 2007, Australia\\
\end{center}

\setcounter{equation}{0}
\setcounter{page}{1}
\setcounter{table}{1}
\setcounter{figure}{0}
\setcounter{section}{0}

\sectionfont{\fontsize{15}{15}\selectfont}

\numberwithin{table}{section}
\renewcommand{\theequation}{S.\arabic{equation}}
\renewcommand{\thesubsection}{S.\arabic{section}.\arabic{subsection}}
\renewcommand{\thesection}{S.\arabic{section}}
\renewcommand{\thepage}{S.\arabic{page}}
\renewcommand{\thetable}{S.\arabic{table}}
\renewcommand{\thefigure}{S.\arabic{figure}}
\baselineskip=17pt

\section{Review of Copula Basics} \label{sec: copula basics}
The literature on copula models is enormous. See, for example, \citelatex{nelsen2007introduction, joe2015dependence, shemyakin2017introduction} and the references therein.
For easy reference, we provide a brief review of the basics here. 

A function $\cC(\bu) = \cC (u_{1},\dots,u_{p}) : [0,1]^p \rightarrow [0,1]$ is called a copula 
if $\cC(\bu)$ is a continuous cumulative distribution function (cdf) on $[0,1]^p$ 
such that each marginal is a uniform cdf on $[0,1]$. 
That is, for any $\bu \in [0,1]^p$,
$\cC(\bu) = \cC (u_{1},\dots,u_{p}) = \Pr(U_{1}\leq u_{1}, \dots,U_{p}\leq u_{p})$ with 
$\cC(1,\dots,1,u_{i},1,\dots,1) = \Pr(U_{i} \leq u_{i}) = u_{i},  i=1,\dots,p$.
If $\{X_i\}_{i=1}^{p}$ are absolutely continuous random variables having marginal cdf $\{H_{i}(x_{i})\}_{i=1}^{p}$ 
and marginal probability density functions (pdf) $\{h_{i}(x_{i})\}_{i=1}^{p}$,  joint cdf $H(x_{1},\dots,x_{p})$ and joint pdf $h(x_{1},\dots,x_{p})$, 
then a copula $\cC$ can be defined in terms of $H$ as 
$\cC(u_{1},\dots,u_{p}) = H \left(x_{1}, \dots, x_{p}\right)$ where $u_{i} = H_{i}(x_{i}), i=1,\dots,p$.
It follows that
$h(x_{1},\dots,x_{p}) = c(u_{1},\dots,u_{p})\prod_{i=1}^{p}h_{i}(x_{i})$, 
where $c(u_{1},\dots,u_{p}) = {\partial^p \cC(u_{1},\dots,u_{p})}  /  {(\partial u_{1} \dots \partial u_{p})}$.
This defines a copula density $c(\bu)$ in terms of the joint and marginal pdfs of $\{X_{i}\}_{i=1}^{p}$ as
\vspace{-4ex}\\
\be
\textstyle c(u_{1},\dots,u_{p}) = h(x_{1},\dots,x_{p}) / \prod_{i=1}^{p}h_{i}(x_{i}). \label{eq:Copula A2}
\ee
\vspace{-4ex}\\
Conversely, if $\{V_i\}_{i=1}^{p}$ are continuous random variables having fixed marginal cdfs $\{F_{i}(v_i)\}_{i=1}^{p}$, 
then their joint cdf $F(v_{1},\dots,v_{p})$, with a dependence structure introduced through a copula $\cC$, can be defined as
\vspace{-4ex}\\
\be
F(v_{1},\dots,v_{p}) = \cC\{F_{1}(v_{1}),\dots,F_{p}(v_{p})\} =\cC(u_{1},\dots,u_{p}),         \label{eq:Copula A3} 
\ee
\vspace{-4ex}\\
where $u_{i} = F_{i}(v_{i}), i=1,\dots,p$.
If $\{V_i\}_{i=1}^{p}$ have marginal densities $\{f_{i}(v_i)\}_{i=1}^{p}$, 
then from (\ref{eq:Copula A3}) it follows that the joint density $f(v_{1},v_{2},\dots,v_{p}) $ is given by
\vspace{-4ex}\\
\be
f(v_{1},\dots,v_{p}) &= c(u_{1},\dots,u_{p})\prod_{i=1}^{p}f_{i}(v_i).  \label{eq:Copula A4}
\ee
\vspace{-4ex}\\
With $F_i(v_i) = u_{i} = H_i(x_{i}), i=1,\dots,p$, substitution of the copula density (\ref{eq:Copula A2}) into (\ref{eq:Copula A4}) gives 
\vspace{-6ex}\\
\be
&& f(v_{1},\dots,v_{p}) = c(u_{1},\dots,u_{p})\prod_{i=1}^{p}f_{i}(v_i) = \bigg\{\frac{h(x_{1},\dots,x_{p})}{\prod_{i=1}^{p}h_{i}(x_{i})}\bigg\}  \prod_{i=1}^{p}f_{i}(v_i). \label{eq:Copula A5}
\ee
\vspace{-4ex}\\

\iffalse
\begin{Thm}[Sklar's Theorem]
Let $H$ be a p-dimensional cdf with margins $H_{1},H_{2},\dots,H_{p}$. Then there exists a copula $\cC$ such that for all $\bu \in [0,1]^p$,
\vspace{-4ex}\\
\bse
\cC(u_{1},\dots,u_{p}) &=& H \{H_{1}^{-1}(u_{1}), H_{2}^{-1}(u_{2}), \dots, H_{p}^{-1}(u_{p})\},  \\
\text{where } \hspace{.3cm} H_i^{-1}(u_{i}) &=& \inf_x \left\{x : H_i(x) \geq u_{i}\right\}.
\ese
\vspace{-4ex}\\
If $H_{1},\dots,H_{p}$ are all absolutely continuous then $\cC$ is unique; otherwise $\cC$ is uniquely determined on $range(H_{1})\times \dots \times range(H_{p})$.\\
\indent Conversely, given a copula $\cC$ on $[0,1]^p$ and marginals $F_{1},F_{2},\dots,F_{p}$ there exists a joint cdf $F$ such that,
\vspace{-4ex}\\
\bse
F(x_{1}, x_{2}, \dots, x_{p}) = \cC\{F_{1}(x_{1}),F_{2}(x_{2}),\dots,F_{p}(x_{p})\} .
\ese
\end{Thm}
\fi

%\subsection{Gaussian Copula}

Equation (\ref{eq:Copula A3}) can be used to define flexible multivariate dependence structure using standard known multivariate densities \citeplatex{sklar1959}. 
Let $\MVN_{p}(\bmu,\bSigma)$ denote a $p$-variate normal distribution with mean vector $\bmu$ and positive semidefinite covariance matrix $\bSigma$. 
An important case is $\bX=(X_{1},\dots,X_{p})\trans \sim \MVN_{p}(\bzero,\bR)$, where $\bR$ is a correlation matrix. 
In this case, $\cC (u_{1},\dots,u_{p}|\bR) = \Phi_{p} \{\Phi^{-1}(u_{1}),\dots,\Phi^{-1}(u_{p})\mid \bR\}$, 
where $\Phi(x) = \Pr\{X \leq x \vert X\sim \Normal(0,1)\}$ and 
$\Phi_{p}(x_{1},\dots,x_{p}|\bR) = \Pr\{X_{1} \leq x_{1}, \dots,X_{p} \leq x_{p} \vert \bX \sim \MVN_{p}(\bzero,\bR)\}$. 
If $\bX \sim N_{p}(\bzero,\bSigma)$, where $\bSigma = ((\sigma_{i,j}))$ is a covariance matrix with $\sigma_{ii}=\sigma_i^2$, then defining $\bLambda=\diag(\sigma_{1}^2,\dots,\sigma_{p}^2)$ and $\bY = \bLambda^{-\frac{1}{2}}\bX$ and noting that $\bSigma=\bLambda^{1/2} \bR \bLambda^{1/2}$, we have  
\vspace{-4ex}\\
\bse
c(u_{1},\dots,u_{p}) = {\MVN_{p}(\bx\mid\bzero,\bSigma)}   /   {\MVN_{p}(\bx\mid\bzero,\bLambda)}
= |\bLambda|^{1/2} |\bSigma|^{-1/2} \exp\left\{-\bx\trans(\bSigma^{-1}-\bLambda^{-1})\bx/2\right\} \\
= |\bR|^{-1/2} \exp\{-\by\trans(\bR^{-1}-\bI_{p})\by/2\} 
=  {\MVN_{p}(\by\mid \bzero,\bR)}  /  {\MVN_{p}(\by\mid \bzero, \bI_{p})}.
\ese 
\vspace{-4ex}\\
Sticking to the standard normal case, 
a flexible dependence structure between random variables $\{V_i\}_{i=1}^{p}$ with given marginals $\{F_i(v_i)\}_{i=1}^{p}$ may thus be obtained  
assuming a Gaussian distribution on the latent random variables $\{Y_{i}\}_{i=1}^{p}$ 
obtained through the transformations $F_{i}(v_{i}) = u_{i} = \Phi(y_{i}), i=1,\dots,p$. 
The joint density of $\bV=(V_{1},\dots,V_{p})\trans$ is then given by
\vspace{-4ex}\\
\bse
&& \hspace{-1cm} f(v_{1},\dots,v_{p}) = c(u_{1},\dots,u_{p})\prod_{i=1}^{p}f_{i}(v_i) 
= \frac{\MVN_{p}(\by\mid \bzero,\bR)}   {\MVN_{p}(\by\mid \bzero, \bI_{p})}  \prod_{i=1}^{p}f_{i}(v_i). 
\ese
\vspace{-4ex}\\
We have 
\vspace{-6ex}\\
\bse
\Pr(V_{1} \leq v_{1},\dots,V_{p} \leq v_{p}) = \Pr[Y_{1} \leq \Phi^{-1}\{F_{1}(v_{1})\},\dots,Y_{p} \leq \Phi^{-1}\{F_{p}(v_{p})\}\mid \bY \sim \MVN_{p}(\bzero,\bR)].
\ese
\vspace{-4ex}\\
For $q\leq p$, with $(Y_{1},\dots,Y_{q})\trans \sim \MVN_{q}(\bzero,\bR_{q})$, we then have 
\vspace{-4ex}\\
\bse
\Pr(V_{1} \leq v_{1},\dots,V_{q} \leq v_{q}) = \Pr[Y_{1} \leq \Phi^{-1}\{F_{1}(v_{1})\},\dots,Y_{q} \leq \Phi^{-1}\{F_{q}(v_{q})\}\mid \bY \sim \MVN_{q}(\bzero,\bR_{q})],
\ese
\vspace{-4ex}\\
implying that the density of $(V_{1},\dots,V_{q})$ will be 
\vspace{-4ex}\\
\bse
&& \hspace{-1cm} f(v_{1},\dots,v_q) = c(u_{1},\dots,u_q)\prod_{i=1}^{q}f_{i}(v_i) 
= \frac{\MVN_{q}(\by\mid \bzero,\bR_{q})}   {\MVN_{q}(\by\mid \bzero, \bI_{q})}  \prod_{i=1}^{q}f_{i}(v_i). 
\ese
\vspace{-4ex}

\section{Quadratic B-splines}\label{sec: Quadratic B-splines}

Consider knot-points $t_{1} = t_{2} = t_{3} = A < t_{4} < \dots < B = t_{K+3} = t_{K+4} = t_{K+5}$,
where $t_{3:(K+3)}$ are equidistant with $\delta = (t_{4} - t_{3})$.
For $j=3,4,\dots,(K+2)$, quadratic B-splines $b_{2,j}$ are defined as 
\vspace{-4ex}\\
\bse
 b_{2,j}(X) &= \left\{\begin{array}{ll}
        \{(X-t_{J-1})/\delta\}^{2}/2  				& ~~~~\text{if } t_{J-1} \leq X < t_{J},  \\
        -\{(X-t_{J})/\delta\}^{2} + (X-t_{J})/\delta + 1/2	     	& ~~~~\text{if } t_{J} \leq X < t_{j+2},  \\
        \{1-(X-t_{j+2})/\delta\}^{2}		       	    	& ~~~~\text{if } t_{j+2} \leq X < t_{j+3},  \\
        0  							& ~~~~ \text{otherwise}.
        \end{array}\right.
\ese
The components at the ends are likewise defined as 
\bse
b_{2,1}(X) &=&  \left\{\begin{array}{ll}
        \{1-(X-t_{1})/\delta\}^{2} /2           	& ~~~~~~~~~~~~~~~~~~~~~~~\text{if } t_{3} \leq X < t_{4},  \\
        0  						& ~~~~~~~~~~~~~~~~~~~~~~~ \text{otherwise}.
        \end{array}\right.\\
b_{2,2}(X) &=&  \left\{\begin{array}{ll}
        -\{(X-t_{3})/\delta\}^2 + (X-t_{4})/\delta + 1/2     	& ~~~~\text{if } t_{3} \leq X < t_{4},  \\
        \{1-(X-t_{4})/\delta\}^{2} /2            			& ~~~~\text{if } t_{4} \leq X < t_{5},  \\
        0  							& ~~~~ \text{otherwise}.
        \end{array}\right.\\
b_{2,K+1}(X) &=&  \left\{\begin{array}{ll}
        \{(X-t_{K+1})/\delta\}^{2} /2  					& ~~~~\text{if } t_{K+1} \leq X < t_{K+2},  \\
        -\{(X-t_{K+2})/\delta\}^{2} + (X-t_{K+2})/\delta + 1/2     	& ~~~~\text{if } t_{K+2} \leq X < t_{K+3},  \\
        0  								& ~~~~ \text{otherwise}.
        \end{array}\right.\\
b_{2,K+2}(X) &=&  \left\{\begin{array}{ll}
        \{(X-t_{K+2})/\delta\}^{2} /2  		& ~~~~~~~~~~~~~~~~~~~~~~~~~\text{if } t_{K+2} \leq X < t_{K+3},  \\
        0  					& ~~~~~~~~~~~~~~~~~~~~~~~~~ \text{otherwise}.
        \end{array}\right.
\ese

\newpage
%\newpage
\section{Comparison with Previous Works} \label{sec: mvt copula comparison with NCI}
As discussed briefly in the introduction of the main paper, 
the problem of estimating nutritional intakes from zero-inflated data has previously been considered by a few, including \citelatex{Tooze2002, Tooze2006, Kipnis_etal:2009, Zhang2011a, Zhang2011b}. 
%These approaches all made clever use of additional latent variables to transform the zero recalls to latent continuous surrogates. 
We review here, in greater details, the methodology of \citelatex{Zhang2011a,Zhang2011b}. %which, adapted to our setting, can be stated as follows. 
To the best of our knowledge, the main working principles outlined below, often referred to as the {\it NCI method}, are common to all previous approaches on zero-inflated data. 

Let $\bW_{tr,i,j} = (W_{tr,1,i,j},\dots,W_{tr,2q+p,i,j})\trans$ be made of all continuous components, as our $\bW_{i,j}$ before, but are now related to the observed data $\bY_{i,j}$ as 
\vspace{-4ex}\\
\bse
& Y_{\ell,i,j} = \Ind(W_{\ell,i,j}>0),~~~~~W_{tr,\ell,i,j} = \wt{X}_{tr,\ell,i} + U_{tr,\ell,i,j}, ~~~~~\hbox{for}~\ell=1,\dots,q,\\
& \{g_{tr}(Y_{\ell,i,j},\lambda_{\ell-q}) \mid Y_{\ell-q,i,j}=1\} = W_{tr,\ell,i,j} = \wt{X}_{tr,\ell,i} + U_{tr,\ell,i,j}, ~~~~~\hbox{for}~\ell=q+1,\dots,2q, \\
& g_{tr}(Y_{\ell,i,j},\lambda_{\ell-q}) = W_{tr,\ell,i,j} = \wt{X}_{tr,\ell,i} + U_{tr,\ell,i,j}, ~~~~~\hbox{for}~\ell=2q+1,\dots,2q+p.
\ese 
\vspace{-4ex}\\
Here, $g_{tr}(Y,\lambda) = \sqrt{2}\{g(Y,\lambda)-\mu(\lambda)\}/\sigma(\lambda)$, where $g(Y,\lambda)$ is the usual Box-Cox transformation
\vspace{-6ex}\\
\bse
& g(Y, 0) = \log~Y,~~~\lambda=0,\\
& g(Y, \lambda) = \frac{Y^{\lambda}-1}{\lambda},~~~\lambda \neq 0.
\ese
\vspace{-4ex}\\
The transformation parameter $\lambda$ as well as $\mu(\lambda),\sigma(\lambda)$, the mean and standard deviation of $g(Y,\lambda)$, are all calculated using positive recall data only 
and then kept fixed for the rest of the analysis. 
Here, $\wt{\bX}_{tr,i}=(\wt{X}_{tr,1,i},\dots,\wt{X}_{tr,2q+p,i})\trans$ are random effects for the $i\th$ subject 
and $\bU_{tr,i,j}=(U_{tr,1,i,j},\dots,U_{tr,2q+p,i,j})\trans$ are errors and pseudo-errors for the $j\th$ recall of the $i\th$ subject. 
The components $\wt{\bX}_{tr,i}$ and $\bU_{tr,i,j}$ are assumed to be independently distributed as $\MVN_{2q+p}(\wt{\bX}_{tr,i} \mid \bmu_{\bX,tr},\bSigma_{\bX,tr})$ 
and $\MVN_{2q+p}(\bU_{tr,i,j} \mid \bzero,\bSigma_{\bU,tr})$, respectively. 
For identifiability etc., $\bSigma_{\bU,tr}$ is restricted to have the special structure 
\vspace{-4ex}\\
\bse
\bSigma_{\bU,tr} = ((\sigma_{u,r,s})),~~~\sigma_{u,r,r} =1,~\sigma_{u,r,r+q}=0,~\text{for}~r=1,\dots,q.
\ese
\vspace{-4ex}\\
A parametrization similar to one we used for modeling the correlation matrices $\bR$ was developed to enforce these restrictions. 
Appropriate priors were assigned on the parameters and an MCMC algorithm was used to draw samples from the posterior. 
Based on estimates of these parameters, the true intakes $T_{\ell,i}$ were {defined} as 
\vspace{-4ex}\\
\bse
&& X_{\ell,i} = \Phi(\wt{X}_{tr,\ell,i}) ~ g_{tr}^{\star}(\wt{X}_{tr,q+\ell,i}, \lambda_{\ell},\sigma_{u,q+\ell,q+\ell}), ~~~~~\ell=1,\dots,q,\\
&& X_{\ell,i} = g_{tr}^{\star}(\wt{X}_{tr,q+\ell,i}, \lambda_{\ell},\sigma_{u,q+\ell,q+\ell}), ~~~~~~\ell=q+1,\dots,q+p,
\ese
\vspace{-6ex}\\
where 
\vspace{-6ex}\\
\bse
& g_{tr}^{\star}(X, \lambda,\sigma) = g_{tr}^{-1}(X, \lambda) + \frac{1}{2}\sigma \frac{\partial^{2} g_{tr}^{-1}(X,\lambda)}{\partial X^{2}}, \\
&g_{tr}^{-1}(X, 0) = \exp\{\mu(0) +  \frac{1}{\sqrt{2}} \sigma(0) X\}, ~~~\frac{\partial^{2} g_{tr}^{-1}(X,0)}{\partial X^{2}} = \frac{\sigma^{2}(0)}{2}g_{tr}^{-1}(X,0), ~\text{when}~\lambda=0, \\
&g_{tr}^{-1}(X, \lambda) = [1+\lambda\{\mu(\lambda) +  \frac{1}{\sqrt{2}} \sigma(\lambda) X\}]^{\frac{1}{\lambda}}, ~\text{and}~\\
& \frac{\partial^{2} g_{tr}^{-1}(X,\lambda)}{\partial X^{2}} = \frac{\sigma^{2}(\lambda)}{2} (1-\lambda) [1+\lambda\{\mu(\lambda) + \frac{1}{\sqrt{2}}\sigma(\lambda) X\}]^{-2+\frac{1}{\lambda}}, ~\text{when}~\lambda \neq 0.
\ese
Finally, the distributions $f_{X,\ell}$ and $f_{\bX}$ are obtained by applying conventional (measurement error free) density estimation techniques on the estimates $X_{\ell,i}$'s and $\bX_{i}$'s, respectively, and then adjusting their supports to be restricted to the positive real line only.

\begin{figure}[!ht]
\centering
\includegraphics[height=5.95cm, width=9.5cm, trim=2cm 1cm 1cm 1cm, clip=true]{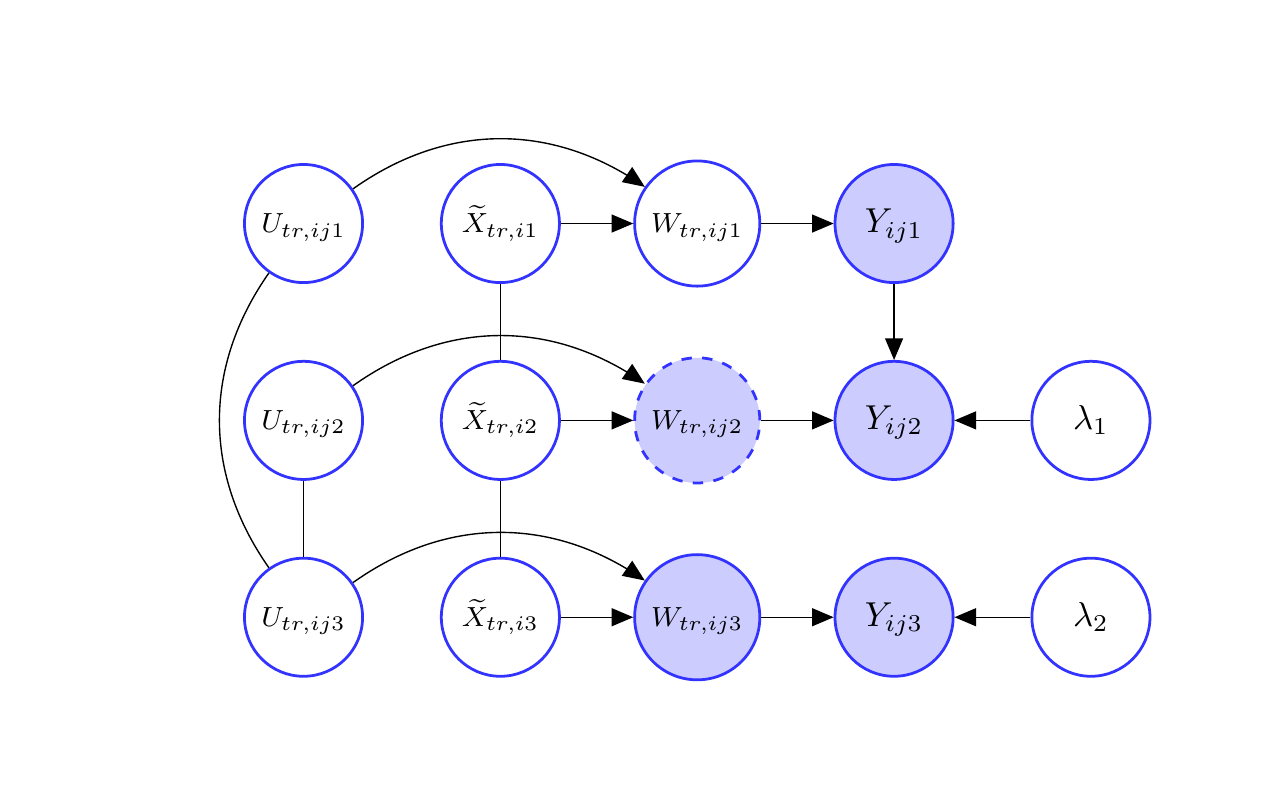}
\label{fig: Graphical model}
\caption{\baselineskip=10pt Graph depicting the dependency structure of the model developed in \cite{Zhang2011b} 
for one episodically consumed component and one regularly consumed component. 
While other parameters are suppressed, the Box-Cox parameters $\lambda_{1}$ and $\lambda_{2}$ are shown to highlight nonlinear transformations of the surrogates. 
Compare with our model depicted in Figure \ref{fig: graph our model} in the main paper. 
}
\label{fig: graph Zhang 2011b}
\end{figure}

%Our latent variable framework builds on the ideas presented in the previous literature. 
There are many fundamental aspects where our proposed approach, 
including our latent variable framework described in Section \ref{sec: mvt copula latent variable framework} in the main paper, 
differs from these previous works which we highlight below.  

First, consider the probabilities of reporting positive consumptions. 
Previous approaches allow these probabilities to only indirectly depend on the associated $X_{\ell}$'s which, as can be seen in the right panels of Figure \ref{fig: EATS Milk & Whole Grains} in the main paper, 
are certainly informative about these probabilities.  
We make use of this information by modeling $P_{\ell}(X_{\ell})$ to be a function of $X_{\ell}$.  

Previous methods also require transformation of the surrogates to a different scale where assumptions such as normality, homoscedasticity, independence etc. are expected to hold, 
and then transformation the results back to the original scale. 
Box-Cox transformations that make the transformed surrogates conform to all the desired parametric assumptions, however, almost never exist \citep{Sarkar_etal:2014}. 
Such transformations and retransformations, when applied to surrogates, thus result in loss of information, introducing bias. 
%
%Our method, in contrast, is statistically more principled in that %completely avoids inefficient nonlinear transformations and 
%it is highly flexible even though it addresses the modeling challenges more directly where the modeling assumptions are more meaningful. 
Our method, in contrast, does not rely on restrictive parametric assumptions even though it addresses the modeling challenges more directly 
where the modeling assumptions of zero mean errors etc. are more meaningful. 

%Importantly, we are not only solving the problem more directly on the originally observed scale, 
%but our models for the functions of primary and secondary importance are all highly flexible. 
%In stark contrast, NCI methods all make highly restrictive parametric assumptions. 

The assumption of unbiasedness of the recalls for the true latent consumptions also makes the most sense in the original observed scale as our proposed method assumes, 
and not in any arbitrarily defined nonlinearly transformed scale as all previously existing methods for zero-inflated data, including \cite{Zhang2011a,Zhang2011b}, assume.

Previous literature on estimating $f_{\bX}$ from zero-inflated data also do not model $f_{\bX}$ directly but rely on 
first estimating $\bX^{+}$, the long-term daily intakes on consumption days, 
and the probabilities of reporting positive consumptions, and then combining them to arrive at estimates of $\bX$, 
and finally using these estimates to construct an estimate of $f_{\bX}$. 
The novel design of our hierarchical latent variable framework, on the other hand, allows us to model $f_{\bX}$ directly. 
We rather leave the distribution of $\bX^{+}$ unspecified, which is usually not of much interest. 
If needed, it can be obtained as 
\vspace{-4ex}\\
\bse
\textstyle f_{\bX^{+}}(\bX^{+}) = f_{\bX}(\bX) \prod_{\ell=1}^{q+p}  \abs{J(X_{\ell} \to X_{\ell}^{+})}.  \label{eq: mixture model for f_X+} 
%\textstyle f_{\bX^{+}}(\bX^{+}) = f_{\bX}(\bX) \left[\prod_{\ell=1}^{q+p}  \frac{d}{dX_{\ell}} \left\{\frac{X_{\ell}}{P_{\ell}(X_{\ell})}\right\}\right]^{-1}.  \label{eq: mixture model for f_X+} 
\ese
\vspace{-4ex}\\
Here, the right hand side, including the Jacobians of transformations $J(X_{\ell} \to X_{\ell}^{+})$, is implicitly understood to be evaluated at $\bX^{+}$. 
Since $X_{\ell}$ and $X_{\ell}^{+}$ are not guaranteed to be strictly one-one, the Jacobians need to be carefully calculated. 
An easy-to-implement %but somewhat ad hoc 
alternative would be to apply (measurement error free) density estimation algorithms to the estimates of the $X_{\ell}^{+}$'s. 

%Unlike previous approaches, we assume the pseudo-errors $U_{\ell}, \ell=1,\dots,q$, introduced to model the daily consumption probabilities, to be independent of the actual measurement errors $U_{\ell'},\ell'=q+1,\dots,q+p, \ell' \neq \ell$. 
%In numerical experiments, allowing them to be correlated did not result in any practically significant improvement. 

As discussed in Section \ref{sec: mvt copula density of energy adjusted intakes} in the main paper, 
our approach also allows us to estimate the distribution of energy adjusted long-term intakes, namely $f_{\bZ}$, straightforwardly from $f_{\bX}$ via a simple one-dimensional integration. 
This is again in contrast with previous approaches where such estimates are constructed applying (measurement error free) density estimation methods to estimated values of the $\bZ_{i}$'s.

Aside the probabilities of consumptions $P_{\ell}(X_{\ell})$, our method also produces estimates of the densities of the scaled errors $f_{\epsilon,\ell}$ 
as well as estimates of the measurement errors' conditional variability $s_{\ell}^{2}(X_{\ell})$ 
which may be of some interest to nutritionists but are not available from previously existing methods. 

Finally, previous approaches for zero-inflated data could not handle multiplicative measurement errors. 
Building on the general recipe outlined in \cite{sarkar2018bayesian}, our model, on the other hand, automatically accommodates both conditionally heteroscedastic additive measurement errors as well as multiplicative measurement errors. 
%We come back to this point again in Section \ref{}. 

\section{Hyper-parameter Choices and Posterior Computation}  \label{sec: posterior computation}

Samples from the posterior can be drawn using the MCMC algorithm described below. 
In what follows, $\bzeta$ denotes a generic variable that collects the data as well as all parameters of the model, 
including the imputed values of $\bX_{1:n}$ and $\bepsilon_{1:N}$, that are not explicitly mentioned.
Also, the generic notation $p_{0}$ is sometimes used for specifying priors and hyper-priors. 

We now discuss our choices for the prior hyper-parameters and the initial values of the MCMC sampler. 
%Carefully chosen starting values facilitate convergence of our sampler.
The starting values of some of the parameters for the multivariate problem are determined by first running samplers for the univariate marginals.
We thus describe the hyper-parameter choices and the initial values for the sampler for the marginal univariate models first. 
Unless otherwise mentioned, the prior hyper-parameter choices for similar model components for the multivariate model remain the same as that used for the univariate marginal models. 
We only detail the sampling steps for the multivariate method. 
The steps for the univariate method were straightforwardly adapted from the multivariate sampler. 

To make the recalls for all the components to be unit free and have a shared support, we transformed the recalls as 
%\vspace{-4ex}\\
%\bse
$Y_{\ell,i,j} = 20 \times \frac{Y_{\ell,i,j}}{\max\{Y_{\ell,i,j}\}}$.
%\ese 
%\vspace{-4ex}\\
The latent $X_{\ell,i}$'s can then be safely assumed to lie in $[0,10]$, greatly simplifying model specification and hyper-parameter selection. 
As opposed to the non-linear Box-Cox transformations used in the previous literature, including  \cite{Zhang2011b}, which often result in loss of information and introduce bias, 
we only make linear scale transformations here that preserve all features of the original data points. 

For the univariate samplers for the marginal components, we then set $W_{\ell,i,j} = Y_{\ell,i,j}$. 
We used the subject-specific sample means $\overline{W}_{\ell,1:n}$ as the starting values for $X_{\ell,1:n}$.
The appropriate number of mixture components in a mixture model depends on the flexibility of the component mixture kernels as well as on specific demands of the particular application at hand. 
With appropriately chosen mixture kernels, univariate mixture models with 5-10 components have often been found to be sufficiently flexible. 
Similar claims can also be made for normalized mixtures of B-splines. 
Detailed guidelines on selecting the number of mixture components for the specific context of deconvolution problems can be found in Section S.1 and S.6 in the Supplementary Materials of \cite{sarkar2018bayesian}. 
Based on such guidelines and extensive numerical experiments, 
we used $J_{\ell}=12$ equidistant knot points for the B-splines supported on $[A_{\ell},B_{\ell}] = [0,10]$ for modeling the densities and the probabilities of reporting non-consumptions of the episodic components, 
as well as the variance functions of both episodic and regular components. 
For regular components, we allowed $K_{X,\ell}=10$ mixture components for the truncated normal mixtures modeling their densities. 
We also allowed $K_{\epsilon,\ell}=10$ mixture components for the mixtures modeling the densities of the scaled errors. 
For the Dirichlet prior hyper-parameters, we set $\alpha_{X} = 1/K_{X}$, $\alpha_{\epsilon} = 1/K_{\epsilon}$. 
The hyper-parameters for the smoothness inducing parameters are set to be mildly informative as 
$a_{\xi}=a_{\beta}=a_{\vartheta}=10, b_{\vartheta}=b_{\beta}=b_{\xi}=1$. 
Introducing latent mixture component allocation variables $\bC_{X,1:(q+p),1:n}$, $\bC_{\epsilon,(q+1):(2q+p),1:N}$ and $\bC2_{\epsilon,(q+1):(2q+p),1:N}$, we can write  
\vspace{-4ex}\\
\bse
& (X_{\ell,i} \mid C_{X,\ell,i}=k,\mu_{X,\ell,k},\sigma_{X,\ell,k}^{2}) \sim \TN(X_{\ell,i} \mid \mu_{X,\ell,k},\sigma_{X,\ell,k}^{2},[A_{\ell},B_{\ell}]), ~~~\ell=1,\dots,q+p,~~\hbox{and}\\
& (\epsilon_{\ell,i,j} \mid C_{\epsilon,\ell,i,j}=k,C2_{\epsilon,\ell,i,j}=t,\mu_{\epsilon,\ell,k,t},\sigma_{\epsilon,\ell,k,t}^{2}) \sim \Normal(\epsilon_{\ell,i,j} \mid \mu_{\epsilon,\ell,k,t},\sigma_{\epsilon,\ell,k,t}^{2}),\\
&\hspace{11cm}~\ell=q+1,\dots,2q+p.
\ese
\vspace{-4ex}\\
The mixture labels $C_{X,\ell,i}$'s, and the component specific parameters $\mu_{X,\ell,k}$'s and $\sigma_{X,\ell,k}$'s are initialized by fitting a $k$-means algorithm with $k=K_{X}$.
The parameters of the distribution of scaled errors are initialized at values that correspond to the special standard normal case.
The initial values of the smoothness inducing parameters are set at $\sigma_{\vartheta,\ell}^{2} = \sigma_{\xi,\ell}^{2} = \sigma_{\xi,\ell}^{2} = 0.1$. 
The associated mixture labels $C_{\epsilon,\ell,i,j}$'s are thus all initialized at $C_{\epsilon,\ell,i,j}=1$.
The initial values of $\bvartheta_{\ell}$'s are obtained by maximizing 
%$\ell(\bvartheta_{\ell}\mid 0.1,\overline{\bW}_{\ell,1:n})$ with respect to $\bvartheta_{\ell}$,
%where $\ell(\bvartheta_{\ell}\mid \sigma_{\vartheta,\ell}^{2},\bX_{\ell,1:n})$ denotes 
%the conditional log-posterior of $\bvartheta_{\ell}$ under the normality assumption for the errors, namely 
\vspace{-4ex}\\
\bse
\ell(\bvartheta_{\ell}\mid \sigma_{\vartheta,\ell}^{2},\overline{\bW}_{\ell,1:n}) = -\frac{\bvartheta_{\ell}\trans \bP_{\ell} \bvartheta_{\ell}}{2\sigma_{\vartheta,\ell}^{2}} - \sum_{i=1}^{n} \frac{1}{2 s_{\ell}^{2}(\overline{W}_{\ell,i},\bvartheta_{\ell})} \sum_{j=1}^{m_{i}}(W_{\ell,i,j}-\overline{W}_{\ell,i})^{2}
\ese
\vspace{-4ex}\\
with respect to $\bvartheta_{\ell}$. 
Likewise, the parameters $\bxi_{\ell}$'s specifying the densities of the episodic components are initialized by maximizing 
\vspace{-4ex}\\
\bse
\ell(\bbeta_{\ell}\mid \sigma_{\xi,\ell}^{2},\overline{\bW}_{\ell,1:n}) = - \frac{\bxi_{\ell}\trans \bP_{\ell} \bxi_{\ell}}{2\sigma_{\xi,\ell}^{2}} - \sum_{i=1}^{n}\left\{\wh{f}_{X,\ell,i}^{Kern}-\wh{f}_{X,\ell,i}(\overline{W}_{\ell,i},\bxi_{\ell})\right\}^{2}, 
\ese
\vspace{-4ex}\\
where $\wh{f}_{X,\ell,i}^{Kern}$ is an off-the-shelf kernel density estimator based on $\overline{\bW}_{\ell,1:n}$ as the data points 
and $\wh{f}_{X,\ell,i}(\cdot, \bxi_{\ell})$ is the normalized mixtures of B-splines based estimator proposed in Section \ref{sec: mvt copula density of interest} of the main article. 
Finally, the parameters $\bbeta_{\ell}$'s specifying the probabilities of reporting non-consumptions $P_{\ell}(X_{\ell})$ for the episodic components are initialized by maximizing  
\vspace{-4ex}\\
\bse
\ell(\bbeta_{\ell}\mid \sigma_{\beta,\ell}^{2},\overline{\bW}_{\ell,1:n}, \wh{\bP}_{\ell,1:n}) = -\frac{\bbeta_{\ell}\trans \bP_{\ell} \bbeta_{\ell}}{2\sigma_{\beta,\ell}^{2}} - \sum_{i=1}^{n}\left[\wh{P}_{\ell,i}-\Phi\{\bB_{d,\ell,J_{\ell}}(\overline{W}_{\ell,i}) \bbeta_{\ell}\}\right]^{2}
\ese
\vspace{-4ex}\\
with respect to $\bbeta_{\ell}$, where $\wh{P}_{\ell,i}$ is the proportion of zero recalls for the $\ell\th$ episodic component for $i\th$ individual.

We now discuss how we set the initial values of the sampler for the multivariate method. 
The starting values of the $W_{\ell,i,j}$'s, $X_{\ell,i}$'s, $U_{\ell,i,j}$'s, $\bxi_{\ell}$'s, $\bvartheta_{\ell}$'s, $\bbeta_{\ell}$'s were all set at the corresponding estimates returned by the univariate samplers. 
We set the number of shared atoms of the mixture models for the densities $f_{X,\ell}$ and $f_{\epsilon,\ell}$ at $K_{X} = K_{\epsilon} = \max\{(q+p)\times 5,20\}$.  
We set $\alpha_{X,\ell}=\alpha_{\epsilon,\ell}=1$. 
The atoms of the mixtures of truncated normals for the marginal densities $f_{X,\ell}$ of the regular components are shared, 
so are the atoms of the mixture models for the univariate marginals $f_{\epsilon,\ell}$ of the scaled errors, 
and hence these parameters could not be initialized directly using the univariate model output. 
%We first set the shared $\mu_{X,k}$'s to be equally spaced in $[\min_{\ell,k} \mu_{X,\ell,k},\max_{\ell,k} \mu_{X,\ell,k}]$, where the $\mu_{X,\ell,k}$'s are the component specific mean parameters estimated by the univariate models. 
%The mixture labels $C_{X,\ell,i}$'s are then set at $\arg_{k} \min |\mu_{X,k}-\mu_{\ell,X,C_{X,\ell,i}^{univ}}|$ where the $C_{X,\ell,i}^{univ}$'s are the mixture tables returned by the univariate samplers. 
%We initialized the shared $\sigma_{X,k}^{2}$'s at $2 \times \var(\bX_{1:(q+p),1:n})/K_{X}$. 
%The mixture probabilities $\bpi_{X,\ell}$ are initialized by sampling from $\Dir\{\alpha_{X,\ell}/K_{X}+n_{X,\ell}(1),\dots,\alpha_{X,\ell}/K_{X}+n_{X,\ell}(K_{X})\}$ where $n_{X,\ell}(k) = \sum_{i=1}^{n}1(C_{X,\ell,i}=k)$. 
%Before starting the main sampler, 
We initialized these parameters by iteratively sampling them from their posterior full conditionals $100$ times, keeping the estimated $X_{\ell,i}$'s fixed.   
Adopting a similar strategy, we initialized the parameters specifying the densities $f_{\epsilon,\ell}$ of the scaled errors 
by iteratively sampling them from their posterior full conditionals $100$ times, keeping the estimated errors $U_{\ell,i,j}$'s fixed.   
Finally, the parameters specifying $\bR_{\bX}$ and $\bR_{\bepsilon}$ were set at values that correspond to the special case $\bR_{\bX}=\bR_{\bepsilon}=\bI_{q+p}$.

In our sampler for the multivariate problem, 
we first update the parameters specifying the different marginal densities using a pseudo-likelihood that ignores the contribution of the copula. 
The parameters characterizing the copula and the latent $\bX_{i}$'s are then updated using the exact likelihood function conditionally on the parameters obtained in the first step. 
We then update the parameters of the marginal densities again and so forth. 
%This is a commonly used practice for measurement error free models \citeplatex{joe2005asymptotic}. 
A more appealing approach would have been to perform joint estimation of the marginal distributions and the copula functions. 
Joint estimation algorithms, most involving carefully designed Metroplis-Hastings (M-H) moves, have been proposed in much simpler settings in \citelatex{pitt2006efficient, wu2014bayesian, wu2015bayesian} etc. 
Designing such moves for our complex deconvolution problem is a daunting task. 
Importantly, the results of \citelatex{dos2008copula} suggest that two-stage approaches often perform just as good as joint estimation procedures, validating their use for practical reasons.

We are now ready to detail our sampler for the multivariate model which iterates between the following steps. 

\begin{enumerate}[topsep=0ex,itemsep=2ex,partopsep=2ex,parsep=0ex, leftmargin=0cm, rightmargin=0cm, wide=3ex]
\item {\bf Updating the parameters specifying $f_{X,\ell}, \ell=1,\dots,q+p$:}
	We modeled the marginals densities of the episodic components $\ell=1,\dots,q$ using normalized mixtures of B-splines 
	and the marginals densities of the regular components $\ell=q+1,\dots,q+p$ using mixtures of truncated normals with shared atoms. 
	\begin{enumerate}[topsep=0ex,itemsep=2ex,partopsep=2ex,parsep=0ex, leftmargin=0cm, rightmargin=0cm, wide=3ex]
	\item {\bf Updating the parameters specifying $f_{X,\ell}, \ell=1,\dots,q$:} 
	The full conditional for each $\bxi_{\ell}$ is given by
	$p(\bxi_{\ell} \mid \bX_{\ell,1:n},\bzeta) \propto p_{0}(\bxi_{\ell}) \times \prod_{i=1}^{n}\prod_{j=1}^{m_{i}}f_{X_{\ell}}(X_{\ell,i}\mid \bxi_{\ell})$.
	We use M-H sampler to update $\bxi_{\ell}$ with random walk proposal
	$q(\bxi_{\ell} \rightarrow \bxi_{\ell,new}) = \MVN(\bxi_{\ell,new}\mid \bxi_{\ell},\bSigma_{\xi,\ell})$.
	We then update the hyper-parameter $\sigma_{\xi,\ell}^{2}$ using its closed-form full conditional
	$(\sigma_{\xi,\ell}^{2}\mid \bxi_{\ell},\bzeta) = \hbox{IG}\{a_{\xi}+(J_{\ell}+2)/2,b_{\xi}+\bxi\trans_{\ell}\bP\bxi_{\ell}/2\}$.
	
	\item {\bf Updating the parameters specifying $f_{X,\ell}, \ell=q+1,\dots,q+p$:}
	The full conditional of $\pi_{X,\ell,k}$ is given by
	\vspace{-4ex}\\
	\bse
	 p(\bpi_{X,\ell} \mid \bzeta) &=& \textstyle \Dir\{\alpha_{X,\ell}+n_{X,\ell}(1),\dots,\alpha_{X,\ell}+n_{X,\ell}(K_{X,\ell})\}.
	\ese
	\vspace{-4ex}\\
	where $n_{X,\ell}(k) = \sum_{i=1}^{n}1(C_{X,\ell,i}=k)$ as before.
	The full conditional of $C_{X,\ell,i}$ is given by  
	\vspace{-4ex}\\
	\bse
	p(C_{X,\ell,i}=k \mid \bzeta) &\propto& \pi_{X,\ell,k} \times \TN(X_{\ell,i}\mid \mu_{X,k},\sigma_{X,k}^{2},[A_{\ell},B_{\ell}]), 
	\ese
	\vspace{-4ex}\\
	a standard multinomial. 
	%For all $k \in \cup_{\ell=q+1}^{q+p}\bC_{X,1:n,\ell}$, 
	The full conditional of  $\mu_{X,k}$ is given by
	\vspace{-4ex}\\
	\bse
	\textstyle p(\mu_{X,k}\mid \bzeta) \propto p_{0}(\mu_{X,k}) \times \prod_{\ell=q+1}^{q+p}\prod_{\{i: C_{X,\ell,i}=k\}}\TN(X_{\ell,i}\mid \mu_{X,k},\sigma_{X,k}^{2},[A_{\ell},B_{\ell}]),
	\ese
	\vspace{-4ex}\\
	and the full conditional of  $\sigma_{X,k}^{2}$ is given by
	\vspace{-4ex}\\
	\bse
	\textstyle p(\sigma_{X,k}^{2} \mid \bzeta) \propto p_{0}(\sigma_{X,k}^{2}) \times \prod_{\ell=q+1}^{q+p} \prod_{\{i: C_{X,\ell,i}=k\}}\TN(X_{\ell,i}\mid \mu_{X,k},\sigma_{X,k}^{2},[A_{\ell},B_{\ell}]).
\ese
	\vspace{-4ex}\\
	These parameters are updated by Metropolis-Hastings (MH) steps with the proposals  
	$q(\mu_{X,k}  \to  \mu_{X,k,new})=\Normal(\mu_{X,k,new} \mid \mu_{X,k},\sigma_{X,\ell,\mu}^{2})$ 
	and $q(\sigma_{X,k}^{2}  \to  \sigma_{X,k,new}^{2})=\TN(\sigma_{X,k,new}^{2}\mid \sigma_{X,k}^{2},\sigma_{X,\ell,\sigma}^{2},[\max\{0,\sigma_{X,k}^{2}-1\},\sigma_{X,k}^{2}+1])$, respectively.
	\end{enumerate}

\item {\bf Updating the parameters specifying $f_{\epsilon,\ell}, \ell=1,\dots,2q+p$:} 
For $\ell=1,\dots,q$, $f_{\epsilon,\ell} = \Normal(0,1)$. 
So we only need to update the parameters specifying $f_{\epsilon,\ell}$ for $\ell=q+1,\dots,2q+p$.
With $n_{\epsilon,\ell}(k)=\sum_{i=1}^{n}\sum_{j=1}^{m_{i}}1(C_{\epsilon,\ell,i,j}=k)$, we have
\vspace{-4ex}\\
\bse
(\bpi_{\epsilon,\ell} \mid \bzeta) &\sim& \textstyle \Dir\{1+n_{\epsilon,\ell}(1),\dots,\alpha_{\epsilon,\ell}+n_{\epsilon,\ell}(K_{\epsilon,\ell})\},\\
p(C_{\epsilon,\ell,i,j}=k \mid \bzeta) &\propto& \pi_{\epsilon,\ell,k} \cdot f_{W_{\ell}\mid \wt{X}_{\ell}}(W_{\ell,i,j}\mid p_{\epsilon,k},\tmu_{\epsilon,k},\sigma_{\epsilon,k,1}^{2},\sigma_{\epsilon,k,2}^{2},\bzeta). 
\ese
\vspace{-4ex}\\
For $\ell=q+1,\dots,2q+p$, 
the component specific parameters $(p_{\epsilon,k},\tmu_{\epsilon,k},\sigma_{\epsilon,k,1}^{2},\sigma_{\epsilon,k,2}^{2})$ are updated using M-H steps. 
%For all $k\in \cup_{\ell=q+1}^{2q+p}\bC_{\epsilon,\ell,1:N}$, 
We propose a new $(p_{\epsilon,k},\tmu_{\epsilon,k},\sigma_{\epsilon,k,1}^{2},\sigma_{\epsilon,k,2}^{2})$ with the proposal
$q\{ \btheta_{\epsilon,k} = (p_{\epsilon,k},\tmu_{\epsilon,k},\sigma_{\epsilon,k,1}^{2},\sigma_{\epsilon,k,2}^{2})  \rightarrow (p_{\epsilon,k,new},\tmu_{\epsilon,k,new},\sigma_{\epsilon,k,1,new}^{2},\sigma_{\epsilon,k,2,new}^{2}) = \btheta_{\epsilon,k,new}\} =
\hbox{TN}(p_{\epsilon,k,new}\mid p_{\epsilon,k},\sigma_{p,\epsilon,\ell}^{2},[0,1])~\times~
\Normal(\tmu_{\epsilon,k,new}\mid \tmu_{\epsilon,k}, \sigma_{\epsilon,\ell,\tmu}^{2})~\times~
\hbox{TN}(\sigma_{\epsilon,k,1,new}^{2}\mid \sigma_{\epsilon,k,1}^{2},\sigma_{\epsilon,\ell,\sigma}^{2},[\max\{0,\sigma_{\epsilon,k,1}^{2}-1\},\sigma_{\epsilon,k,1}^{2}+1])~\times~
\hbox{TN}(\sigma_{\epsilon,k,2,new}^{2}\mid \sigma_{\epsilon,k,2}^{2},\sigma_{\epsilon,\ell,\sigma}^{2},[\max\{0,\sigma_{\epsilon,k,2}^{2}-1\},\sigma_{\epsilon,k,2}^{2}+1])$.
We update $\btheta_{k}$ to the proposed value $\btheta_{k,new}$ with probability 
\vspace{-4ex}\\
\bse
\min\bigg\{1, \frac{q(\btheta_{\epsilon,k,new} \rightarrow \btheta_{\epsilon,k})}
		      {q(\btheta_{\epsilon,k} \rightarrow \btheta_{\epsilon,k,new})}
	    \frac{\prod_{\ell=q+1}^{2q+p}\prod_{\{i,j: C_{\epsilon,\ell,i,j}=k\}} f_{W_{\ell}\mid \wt{X}_{\ell}}(W_{\ell,i,j}\mid \btheta_{\epsilon,k,new},\bzeta) ~ p_{0}(\btheta_{\epsilon,k,new})}
	      {\prod_{\ell=q+1}^{2q+p} \prod_{\{i,j: C_{\epsilon,\ell,i,j}=k\}}f_{W_{\ell}\mid \wt{X}_{\ell}}(W_{\ell,i,j}\mid \btheta_{\epsilon,k},\bzeta) ~ p_{0}(\btheta_{\epsilon,k})}\bigg\}.
\ese
\vspace{-5ex}

\item {\bf Updating the parameters specifying $s_{\ell}$ for $\ell=q+1,\dots,2q+p$: } 
The full conditional for each $\bvartheta_{\ell}$ is given by
$p(\bvartheta_{\ell} \mid \bW_{\ell,1:N},\bzeta) \propto p_{0}(\bvartheta_{\ell}) \times \prod_{i=1}^{n}\prod_{j=1}^{m_{i}}f_{W_{\ell}\mid \wt{X}_{\ell}}(W_{\ell,i,j}\mid \bvartheta_{\ell},\bzeta)$.
We use M-H sampler to update $\bvartheta_{\ell}$ with random walk proposal
$q(\bvartheta_{\ell} \rightarrow \bvartheta_{\ell,new}) = \MVN(\bvartheta_{\ell,new}\mid \bvartheta_{\ell},\bSigma_{\vartheta,\ell})$.
We then update the hyper-parameter $\sigma_{\vartheta,\ell}^{2}$ using its closed-form full conditional
$(\sigma_{\vartheta,\ell}^{2}\mid \bvartheta_{\ell},\bzeta) = \hbox{IG}\{a_{\vartheta}+(J_{\ell}+2)/2,b_{\vartheta}+\bvartheta\trans_{\ell}\bP\bvartheta_{\ell}/2\}$.

\item {\bf Updating latent $W_{\ell,i,j}$'s for $\ell=1,\dots,2q$:}
For $\ell=1,\dots,q$, $W_{\ell,i,j}$ are all latent, 
whereas for $\ell=q+1,\dots,2q$, the variables $W_{\ell,i,j}$ are not observed when $Y_{\ell,i,j} = 0$. %, or equivalently, when $W_{\ell-q,i,j} \leq 0$. 
\begin{enumerate}[topsep=2ex,itemsep=0ex,partopsep=2ex,parsep=0ex, leftmargin=0cm, rightmargin=0cm, wide=3ex]
\item {\bf Updating $W_{\ell,i,j}$ for $\ell=1,\dots,q$:}
The log full conditional of $W_{\ell,i,j}$ is given by
\vspace{-4ex}\\
\bse
&& \hspace{-1cm} \log (W_{\ell,i,j}\mid \bzeta) = \log \{Y_{\ell,i,j} ~ 1(W_{\ell,i,j}>0) + (1-Y_{\ell,i,j}) ~ 1(W_{\ell,i,j}<0)\} - \{W_{\ell,i,j}- h({X}_{\ell,i})\}^{2}/2.
\ese
\vspace{-4ex}\\
It thus follows that 
\vspace{-4ex}\\
\bse
(W_{\ell,i,j}\mid \bzeta) \sim Y_{\ell,i,j} ~ \TN\{W_{\ell,i,j}\vert h({X}_{\ell,i}), 1, [0,\infty)\} + (1-Y_{\ell,i,j}) ~ \TN\{W_{\ell,i,j} \vert h({X}_{\ell,i}), 1, (-\infty,0]\}.
\ese

\item {\bf Updating latent $W_{\ell,i,j}$ for $\ell=(q+1),\dots,2q$:}
Given $C_{\epsilon,\ell,i,j}=k$ and $p_{\epsilon,k}$, we sample $C2_{\epsilon,\ell,i,j}$ as 
\vspace{-6ex}\\
\bse
C2_{\epsilon,\ell,i,j} \sim \Bernoulli(p_{\epsilon,k})+1.
\ese
\vspace{-4ex}\\
Given $C_{\epsilon,\ell,i,j}=k$ and $C2_{\epsilon,\ell,i,j}=t$, keeping the dependence on $k$ and $t$ implicit, 
define 
$X_{\ell,i,j}^{\nabla} = \wt{X}_{\ell,i}+s_{\ell}(\wt{X}_{\ell,i},\bvartheta_{\ell})\mu_{\epsilon,k,t}$ and 
$\sigma_{u,\ell,i,j}^{2\nabla} = s_{\ell}^{2}(\wt{X}_{\ell,i},\bvartheta_{\ell})\sigma_{\epsilon,k,t}^{2}$. 
The full conditional of $W_{\ell,i,j}$ is given by
\vspace{-6ex}\\
\bse
&& \hspace{-1cm} (W_{\ell,i,j} \mid \bzeta, C_{\epsilon,\ell,i,j}=k, C2_{\epsilon,\ell,i,j}=t) \sim \Normal (W_{\ell,i,j} \mid X_{\ell,i,j}^{\nabla},\sigma_{u,\ell,i,j}^{2\nabla}).
\ese
\end{enumerate}

\item {\bf Updating $\bbeta_{\ell}$ for $\ell=1,\dots,q$:}
The full conditionals of $\bbeta_{\ell}$ are available in closed form as 
\vspace{-7ex}\\
\bse
&& p(\bbeta_{\ell} \mid \bzeta) = \MVN_{J_{\ell}}(\bmu_{\beta,\ell},\bSigma_{\beta,\ell}), ~~~\text{where}\\
&& \bmu_{\beta,\ell} = \bSigma_{\beta,\ell} \{ \bSigma_{\beta,\ell,0}^{-1} \bmu_{\beta,\ell,0} + \textstyle\sum_{i=1}^{n} \sum_{j=1}^{m_{i}} W_{\ell,i,j} \bB_{d,\ell,J_{\ell}}(\wt{X}_{\ell,i})\trans \},  \\
&& \bSigma_{\beta,\ell}= \{ \sigma_{\beta,\ell}^{-2}\bP_{\ell} + \bSigma_{\beta,\ell,0}^{-1} + \textstyle\sum_{i=1}^{n} m_{i} \bB_{d,\ell,J_{\ell}}(\wt{X}_{\ell,i})\trans \bB_{d,\ell,J_{\ell}}(\wt{X}_{\ell,i}) \}^{-1}.
\ese 
\vspace{-4ex}

\item {\bf Updating the values of $\bX$:}
The full conditionals for $\bX_{i}$ are given by
\vspace{-4ex}\\
\bse
&&\hspace{-1cm}(\bX_{i}\mid \bzeta) 
\propto f_{\bX}(\bX_{i}\mid \bzeta) \times \textstyle\prod_{j=1}^{m_{i}} f_{\bW\mid \wt\bX}(\bW_{i,j} \mid \wt\bX_{i}, \bzeta)  \\
&& \textstyle = |\bR_{\bX}|^{-1/2} \exp\left\{-\frac{1}{2}\bY_{\bX,i}\trans(\bR_{\bX}^{-1}-\bI_{q+p})\bY_{\bX,i}\right\}\prod_{\ell=1}^{q+p}f_{X,\ell}(X_{\ell,i}\mid \bzeta)  \\
&& ~~~~~~ \textstyle\times ~ \prod_{j=1}^{m_{i}} \left[ |\bR_{\bepsilon}|^{-1/2} \exp\left\{-\frac{1}{2}\bY_{\epsilon,i,j}\trans(\bR_{\bepsilon}^{-1}-\bI_{2q+p})\bY_{\epsilon,i,j}\right\}\prod_{\ell=1}^{2q+p}f_{W_{\ell}\mid \wt{X}_{\ell}}(W_{\ell,i,j}\mid \wt{X}_{\ell,i},\bzeta) \right],
\ese
\vspace{-4ex}\\
where $F_{X,\ell}(X_{\ell,i}\mid\bzeta)=\Phi(Y_{X,\ell,i})$ and $F_{\epsilon,\ell}\{(W_{\ell,i,j}-\wt{X}_{\ell,i})/s_{\ell}(\wt{X}_{\ell,i})\mid\bzeta\}=\Phi(Y_{\epsilon,\ell,i,j})$. 
The full conditionals do not have closed forms. 
MH steps with independent truncated normal proposals for each component are used within the Gibbs sampler.

\item {\bf Updating the parameters specifying the copula: } 
We have $F_{X,\ell}(X_{\ell,i}\mid\bzeta)=\Phi(Y_{X,\ell,i})$ for all $i=1,\dots,n$ and $\ell=1,\dots,(q+p)$.
Conditionally on the parameters  specifying the marginals, $\bY_{\bX,1:(q+p),1:n}$ are thus known quantities. 
We plug-in these values and use that $(\bY_{\bX,i}\mid\bR_{\bX}) \sim \MVN_{q+p}(\bzero,\bR_{\bX})$ to update $\bR_{\bX}$.  
The full conditionals of the parameters specifying $\bR_{\bX}$ do not have closed forms. 
We use M-H steps to update these parameters. 
\begin{enumerate}[topsep=2ex,itemsep=0ex,partopsep=2ex,parsep=0ex, leftmargin=0cm, rightmargin=0cm, wide=3ex]
\item
For $t=1,\dots,(q+p-1)$, we discretized the values of $b_{X,t}$ to the set $\{-0.99+2\times 0.99(m-1)/(M-1)\}$, where $m=1,\dots,M$ and we chose $M=41$. 
A new value $b_{X,t,new}$ is proposed at random from the set comprising the current value of $b_{X,t}$ and its two neighbors. 
The proposed value is accepted with probability $\min\{1,a(b_{X,t,new})/a(b_{X,t})\}$, where 
\vspace{-4ex}\\
\bse
\textstyle a(b_{X,t}) = (1-b_{X,t}^{2})^{-n/2} \times \exp\left\{- (1/2)\sum_{i=1}^{n}\sum_{j=1}^{m_{i}}\bY_{\bX,i,j}\trans\{\bSigma_{\bX}(b_{X,t},\bzeta)\}^{-1}\bY_{\bX,i,j}\right\}. 
\ese
\vspace{-8ex}\\
\item
For $s=1,\dots,(q+p-1)(q+p-2)/2$, we discretized the values of $\theta_{X,s}$ to the set $\{-3.14+2\times 3.14 (m-1)/(M-1)\}$, where $m=1,\dots,M$ and $M=41$. 
A new value $\theta_{s,new}$ is proposed at random from the set comprising the current value and its two neighbors. 
The proposed value is accepted with probability $\min\{1,a(\theta_{X,s,new})/a(\theta_{X,s})\}$, where 
\vspace{-4ex}\\
\bse
\textstyle a(\theta_{X,s}) = \exp\left\{- (1/2)\sum_{i=1}^{n}\sum_{j=1}^{m_{i}}\bY_{\bX,i,j}\trans\{\bSigma_{\bX}(\theta_{X,s},\bzeta)\}^{-1}\bY_{\bX,i,j}\right\}. 
\ese
\vspace{-7ex}
\end{enumerate}
The parameters specifying $\bR_{\bepsilon}$ are updated in a similar fashion.

\end{enumerate}

With carefully chosen initial values and proposal densities for the MH steps, we were able to achieve quick convergence for the MCMC samplers. 
For our proposed method, $5,000$ MCMC iterations were run in each case with the initial $3,000$ iterations discarded as burn-in. 
The remaining samples were further thinned by a thinning interval of $5$. 
We programmed in {R}.
With $n=1000$ subjects and $m_{i}=3$ proxies for each subject, on an ordinary desktop, $5,000$ MCMC iterations required approximately $3$ hours to run.

\clearpage
\section{Additional Figures}\label{sec: additional figs}

\begin{figure}[!ht]
\centering
\includegraphics[height=17cm, width=17cm, trim=0cm 0cm 0cm 0cm, clip=true]{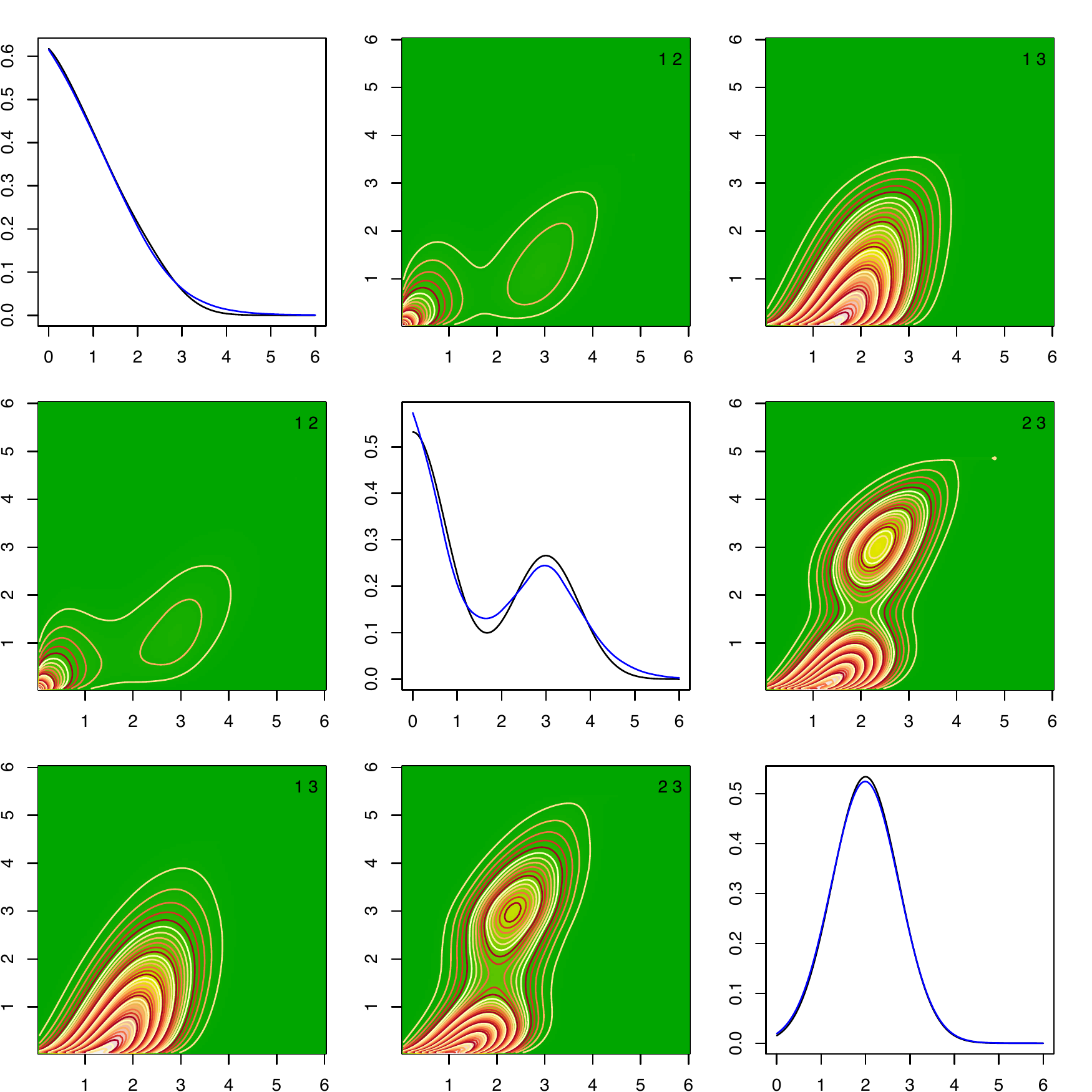}
\caption{\baselineskip=10pt 
Results for simulated data sets with sample size $n = 1000$, $q=2$ episodic components and $p=1$ regular components, each subject having $m_{i}=3$ replicates, for the data set corresponding to the 25th percentile $3$-dimensional ISE. 
The off-diagonal panels show the contour plots of the true two-dimensional marginals (upper triangular panels) and the corresponding estimates obtained by our method (lower triangular panels).
The numbers $i,j$ at the top right corners indicate which marginal densities $f_{X_{i},X_{J-1}}$ are plotted in those panels. 
The diagonal panels show the true (in black) one dimensional marginal densities and the corresponding estimates (in blue) produced by our method. 
Compare with Figure \ref{fig: SimStudy5}.
}
\label{fig: SimStudy4}
\end{figure}

\begin{figure}[!ht]
\centering
\includegraphics[height=17cm, width=17cm, trim=0cm 0cm 0cm 0cm, clip=true]{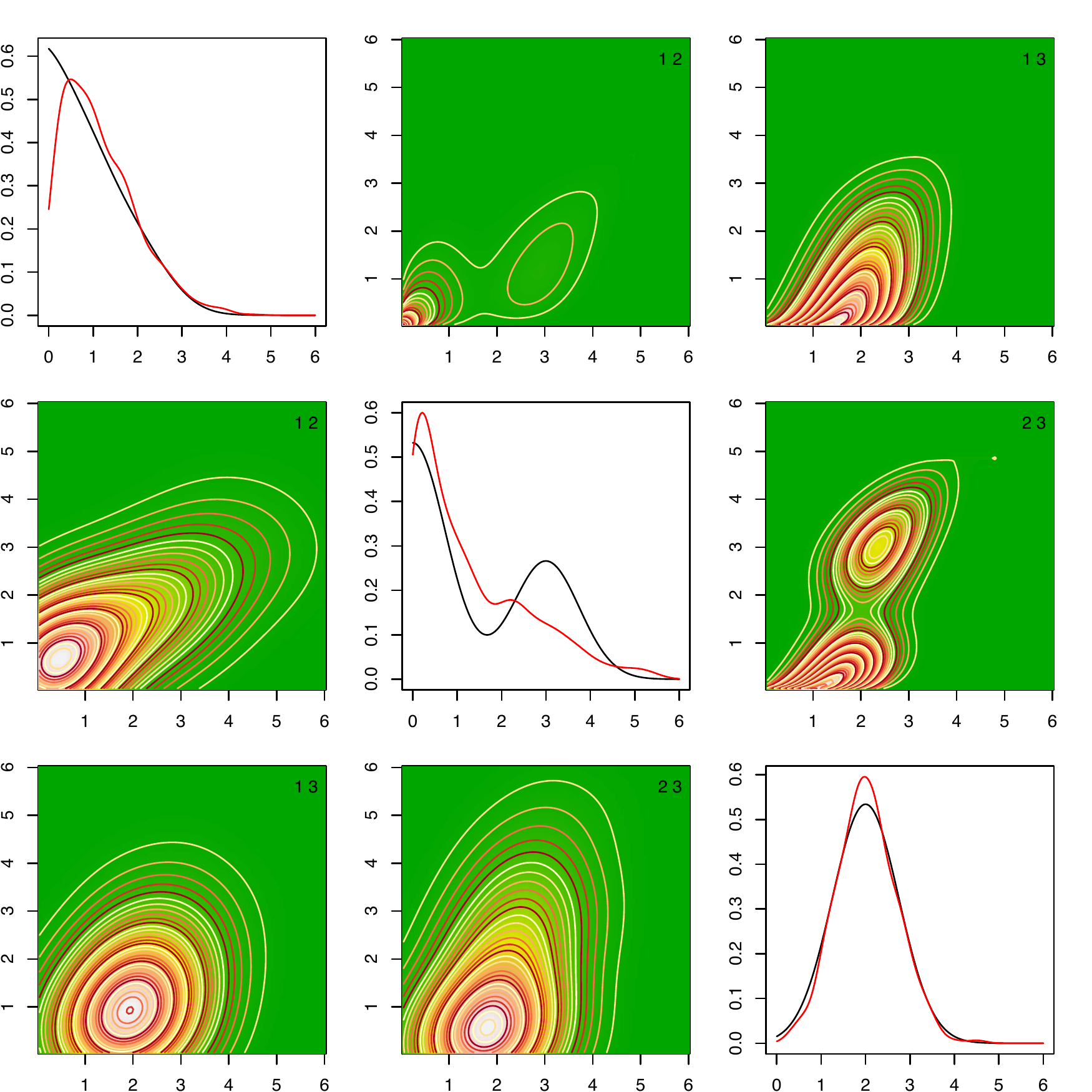}
\caption{\baselineskip=10pt 
Results for simulated data sets with sample size $n = 1000$, $q=2$ episodic components and $p=1$ regular components, each subject having $m_{i}=3$ replicates, for the data set corresponding to the 25th percentile $3$-dimensional ISE. 
The off-diagonal panels show the contour plots of the true two-dimensional marginals (upper triangular panels) and the corresponding estimates obtained by the method of \cite{Zhang2011b} (lower triangular panels).
The numbers $i,j$ at the top right corners indicate which marginal densities $f_{X_{i},X_{J-1}}$ are plotted in those panels. 
The diagonal panels show the true (in black) one dimensional marginal densities and the corresponding estimates (in red) produced by the method of \cite{Zhang2011b}.
Compare with Figure \ref{fig: SimStudy4}.
}
\label{fig: SimStudy5}
\end{figure}

\begin{figure}[!ht]
\centering
\includegraphics[height=17cm, width=17cm, trim=0cm 0cm 0cm 0cm, clip=true]{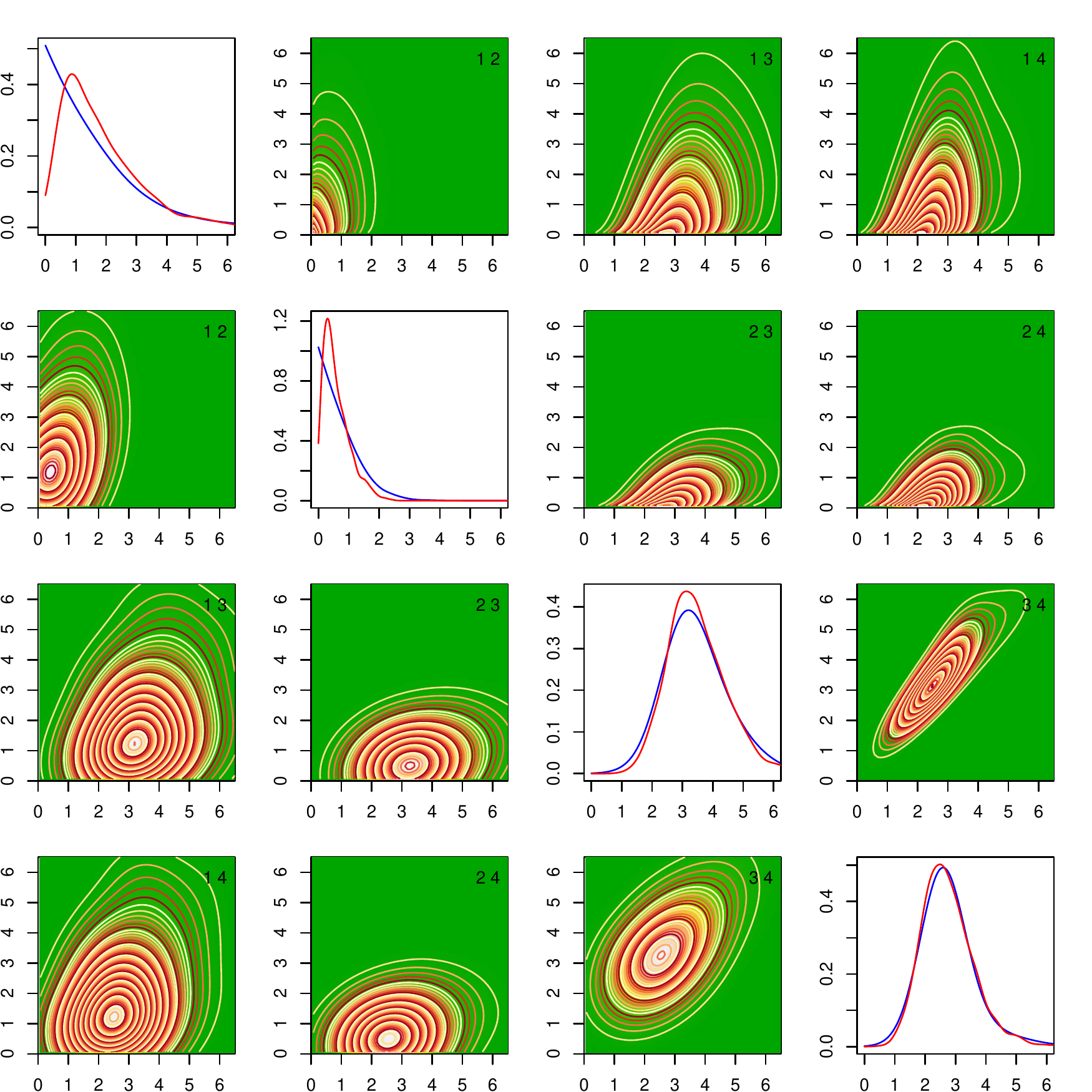}
\caption{\baselineskip=10pt 
Results for the EATS data sets with sample size $n = 965$, $q=2$ episodic components, milk and whole grains, and $p=2$ regular components, sodium and energy, each subject having $m_{i}=4$ replicates. 
The off-diagonal panels show the contour plots of two-dimensional marginals estimated by our method (upper triangular panels) and the method of \cite{Zhang2011b} (lower triangular panels).
The numbers $i,j$ at the top right corners indicate which marginal densities $f_{X_{i},X_{J-1}}$ are plotted in those panels. 
The diagonal panels show the one dimensional marginal densities estimated by our method (in blue) 
and the method of \cite{Zhang2011b} (in red).
}
\label{fig: EATS4}
\end{figure}

\begin{figure}[!ht]
\centering
\includegraphics[height=8cm, width=16cm, trim=0cm 0cm 0cm 0cm, clip=true]{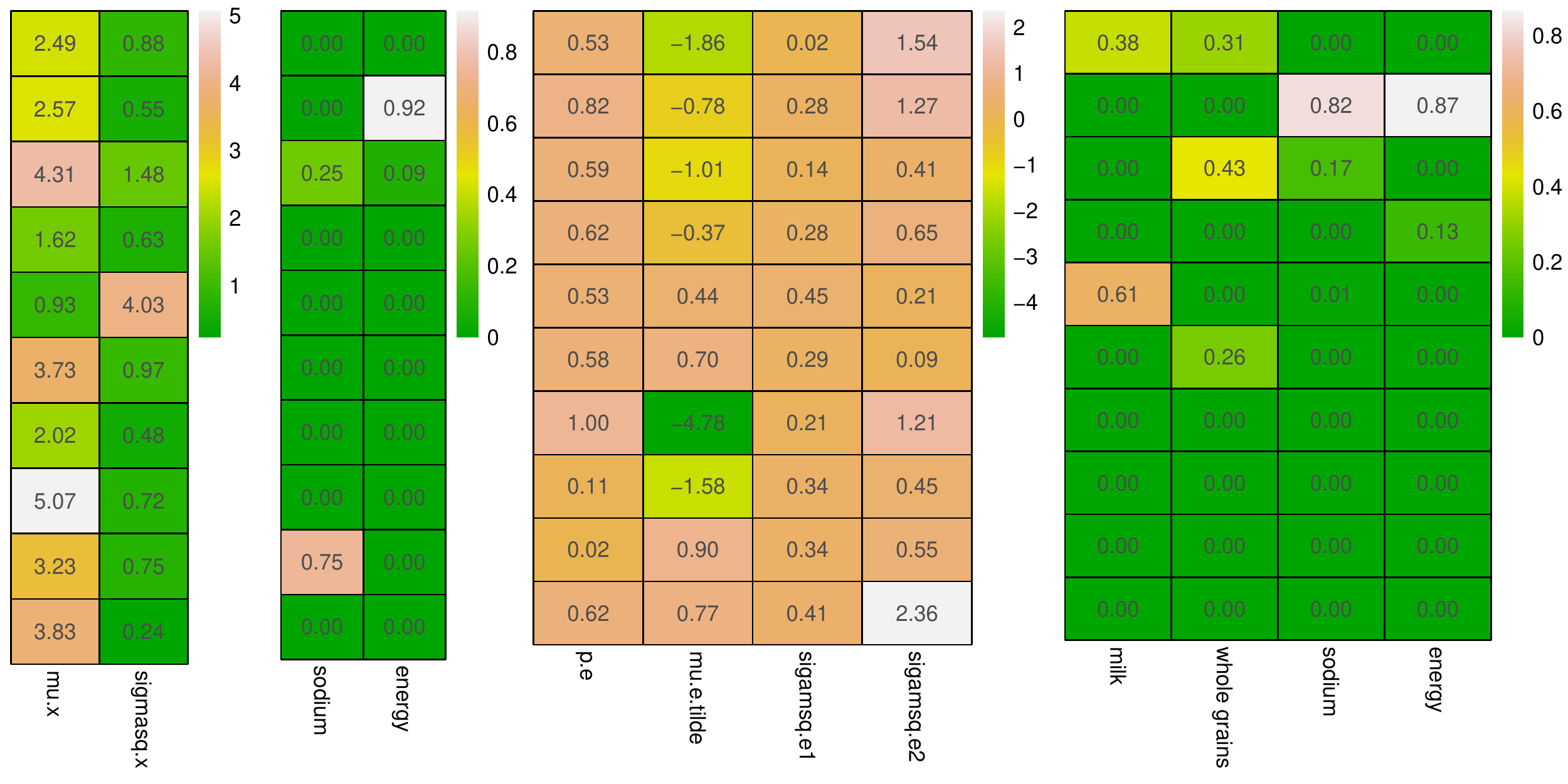}
\caption{\baselineskip=10pt 
Results for the EATS data sets with sample size $n = 965$, $q=2$ episodic components, milk and whole grains, and $p=2$ regular components, sodium and energy, each subject having $m_{i}=4$ replicates. 
These results correspond to the final MCMC iteration but are representative of other iterations in steady state. 
The left panel shows the mixture component specific parameters $\{\mu_{X,k}, \sigma_{X,k}^{2}\}_{k=1}^{K_{X}=10}$ 
used to model the marginal densities $f_{X,\ell}(X_{\ell})$ of the two regular dietary components.  
The middle left panel shows the associated `empirical' mixture probabilities $\wh\pi_{X,\ell,k} = \sum_{i=1}^{n}1\{C_{X,\ell,i}=k\}/n$. 
Only the mixture components $2,3$ and $9$ were actually used to model the densities 
and these mixture components were shared between the two dietary components, 
the other mixture components were redundant. 
The middle right panel shows the mixture component specific parameters $\{p_{\epsilon,k}, \mu_{\epsilon,k}, \sigma_{\epsilon,k,1}^{2}, \sigma_{\epsilon,k,2}^{2}\}_{k=1}^{K_{\epsilon}=10}$ used to model the marginal densities $f_{\epsilon,\ell}(\epsilon_{\ell})$ of the scaled errors of all dietary components. 
The right panel shows the associated `empirical' mixture probabilities $\wh\pi_{\epsilon,\ell,k} = \sum_{i=1}^{n}\sum_{j=1}^{m_{i}}1\{C_{\epsilon,\ell,i,j}=k\}/\sum_{i=1}^{n}m_{i}$. 
Only the mixture components $1,2,3,4,5$ and $6$ were actually used to model the densities 
and these mixture components were shared between the four dietary components, 
the other mixture components were redundant.  
}
\label{fig: EATS5 Z Tables}
\end{figure}

\begin{figure}[!ht]
\centering
\includegraphics[height=16cm, width=16cm, trim=0cm 0cm 0cm 0cm, clip=true]{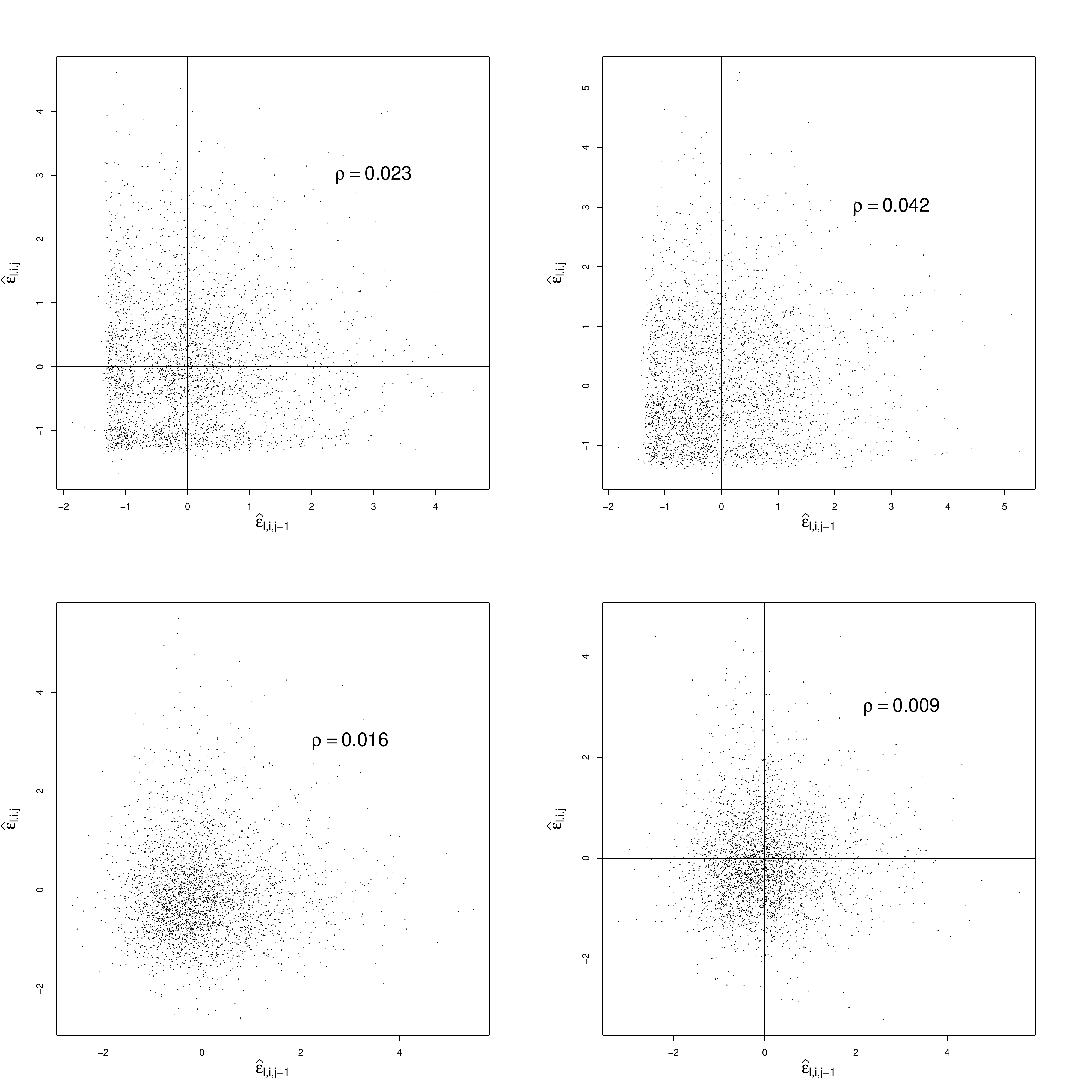}
\caption{\baselineskip=10pt 
Results for the EATS data sets with sample size $n = 965$, $q=2$ episodic components, milk and whole grains, and $p=2$ regular components, sodium and energy, each subject having $m_{i}=4$ replicates. 
These results correspond to the final MCMC iteration but are representative of other iterations in steady state. 
The panels show the scatterplots of scaled `residuals' ($\wh\epsilon_{\ell,i,j-1}$ vs $\wh\epsilon_{\ell,i,j}$, $j=2,3,4$) at adjacent sampling occasions 
and their estimated correlation coefficients for milk (top left), whole grains (top right), sodium (bottom left) and energy (bottom right).  
}
\label{fig: EATS6 residual Tables}
\end{figure}

\clearpage
\section{Additional Simulation Experiments}\label{sec: additional sims}

In this section, we discuss our findings in some additional small scale numerical experiments 
when we tried to simulate from the model of \cite{Zhang2011b}. 

Mimicking the \emph{exact} \cite{Zhang2011a,Zhang2011b} models is, however, a daunting, if not impossible, task. 
Even \cite{Zhang2011a,Zhang2011b} themselves did attempt this. 
We explain below. 

We recall from Section \ref{sec: mvt copula comparison with NCI} that \cite{Zhang2011b} is a transformation-retransformation based method
that assumes additivity of Box-Cox transformed latent consumptions $\bX_{tr}$ and associated errors $\bU_{tr}$, 
hence unbiasedness of the transformed proxies $\bW_{tr}$ for $\bX_{tr}$, 
multivariate normality of $\bX_{tr}$ and $\bU_{tr}$, 
homoscedasticity of $\bU_{tr}$, 
independence of $\bU_{tr}$ from $\bX_{tr}$ etc. 
As discussed in detail in \cite{Sarkar_etal:2014}, there may be at least three different ways to make these transformations, 
none of which may achieve exact multivariate normality, 
additivity, homoscedasticity and independence simultaneously.
Also, the transformation parameters, namely the Box-Cox coefficients $\lambda$, the means $\mu(\lambda)$ and the scales $\sigma(\lambda)$, 
are all determined using the observed recalls. 
It is not clear how we can reverse engineer the process with predetermined values of these parameters that will result in realistic dietary recall data. 

%There are other issues that make a simulating from the model of  \cite{Zhang2011b}. 
Additionally, the assumption of unbiasedness is most meaningful in the original observed scale, as our method assumes, 
and not in any arbitrarily chosen nonlinear scale, as \cite{Zhang2011b} assumes. 
Simulating from scenarios that assume additivity in the transformed scale will thus be unfair to our proposed model. 
It is mathematically impossible to design a model that satisfies the assumptions of unbiasedness and additivity in both the original and a nonlinearly transformed scale.

\vspace{-10pt}
\subsubsection*{\bf \cite{Zhang2011b} 
Model in a Log-Transformed Scale:} To alleviate these issues while trying to simulate scenarios that conform to both the assumptions of \cite{Zhang2011b} %as well as the assumptions of our proposed model 
as closely as possible, 
we let $\bW_{tr,i,j} = (W_{tr,i,j,1},\dots,W_{tr,i,j,2q+p})\trans$ be related to the observed recalls $\bY_{i,j}$ as 
\vspace{-4ex}\\
\be
& \hskip -20pt Y_{\ell,i,j} = \Ind(W_{\ell,i,j}>0),~~~~~W_{tr,\ell,i,j} = \wt{X}_{tr,\ell,i} + U_{tr,\ell,i,j}, ~~~\hbox{for}~\ell=1,\dots,q, \nonumber\\
& \hskip -25pt \{g_{tr}(Y_{\ell,i,j}) \mid Y_{\ell-q,i,j}=1\} = W_{tr,\ell,i,j} = \wt{X}_{tr,\ell,i} + U_{tr,\ell,i,j} - \sigma_{tr,u,\ell}^{2}/2, ~~~\hbox{for}~\ell=q+1,\dots,2q,    \label{eq: Zhang 2011b sim model}\\
& \hskip -20pt g_{tr}(Y_{\ell,i,j}) = W_{tr,\ell,i,j} = \wt{X}_{tr,\ell,i} + U_{tr,\ell,i,j} - \sigma_{tr,u,\ell}^{2}/2, ~~~~~~~\hbox{for}~\ell=2q+1,\dots,2q+p, \nonumber
\ee 
\vspace{-4ex}\\
where, $g_{tr}(Y) = \log Y$, 
$\wt\bX_{tr,i,j} = (\wt{X}_{tr,1,i,j},\dots,\wt{X}_{tr,2q+p,i,j})\trans \sim \MVN_{2q+p}(\bmu_{tr,x},\bSigma_{tr,x})$ 
with $\bmu_{tr,x} = (\mu_{tr,x,1},\dots,\mu_{tr,x,2q+p})\trans$ and $\bSigma_{tr,x} = ((\sigma_{tr,x,\ell,\ell'})$,  
and $\bU_{tr,i,j} = (U_{tr,1,i,j},\dots,U_{tr,2q+p,i,j})\trans \sim \MVN_{2q+p}(\bzero,\bSigma_{tr,u})$ 
where $\bSigma_{tr,u}=((\sigma_{tr,u,\ell,\ell'}))$, $\diag(\bSigma_{tr,u}) = (\sigma_{tr,u,1}^{2},\dots,\sigma_{tr,u,2q+p}^{2})\trans$ 
with $\sigma_{tr,u,1}^{2}=\dots=\sigma_{tr,u,q}^{2}=1$, %and \sigma_{tr,u,q+1}^{2}=\dots=\sigma_{tr,u,2q+p}^{2}=0.125$ 
independently of $\wt\bX_{tr,i,j}$. 
We adjust for the terms $\sigma_{tr,u,\ell}^{2}/2$ while transforming back to the original scales. 
To make the situation further favorable for \cite{Zhang2011b}, 
we also assume that the transformation $g_{tr}(Y) = \log Y$ to be known.

\vspace{-10pt}
\subsubsection*{\bf An Equivalent Model in the Original Scale:} 
The log transformation plays a special role here as we can reformulate model (\ref{eq: Zhang 2011b sim model}) above as  
\vspace{-4ex}\\
\bse
&& \hskip -20pt W_{\ell,i,j} = W_{tr,\ell,i,j} = \wt{X}_{tr,\ell,i} + U_{tr,\ell,i,j} = \wt{X}_{\ell,i} + U_{\ell,i,j},~~~\hbox{for}~\ell=1,\dots,q,\\
&& \hskip -20pt W_{\ell,i,j} = \exp(W_{tr,\ell,i,j}) = \exp(\wt{X}_{tr,\ell,i}) \exp(U_{tr,\ell,i,j} - \sigma_{tr,u,\ell}^{2}/2) = \wt{X}_{\ell,i} \wt{U}_{\ell,i,j},~~~\hbox{for}~\ell=q+1,\dots,2q,\\
&& \hskip -20pt W_{\ell,i,j} = \exp(W_{tr,\ell,i,j}) = \exp(\wt{X}_{tr,\ell,i}) \exp(U_{tr,\ell,i,j} - \sigma_{tr,u,\ell}^{2}/2) = \wt{X}_{\ell,i} \wt{U}_{\ell,i,j},~~~\hbox{for}~\ell=2q+1,\dots,2q+p.  
\ese
\vspace{-4ex}\\
Suppressing the indices $i,j$ for cleaner notation, we have $\eE(\wt{U}_{\ell}) = \eE\{\exp(U_{tr,\ell}-\sigma_{tr,u,\ell}^{2}/2)\}=1$ for $\ell=q+1,\dots,2q+p$. 
The $W_{\ell}$'s, as defined above, may thus be viewed as surrogates for $\wt{X}_{\ell} = \wt{X}_{tr,\ell}, \ell=1,\dots,q$, with additive errors $U_{\ell}$, 
and $\wt{X}_{\ell} = \exp(\wt{X}_{tr,\ell}), \ell=q+1,\dots,2q+p$ 
with multiplicative measurement errors $\wt{U}_{\ell}$. 
As shown in \cite{sarkar2018bayesian}, multiplicative measurement error models can be reformulated as additive models with conditionally heteroscedastic errors as 
$W_{\ell} = \wt{X}_{\ell} \wt{U}_{\ell} =  \wt{X}_{\ell} + \wt{X}_{\ell}(\wt{U}_{\ell}-1) = \wt{X}_{\ell} + s_{\ell}(\wt{X}_{\ell})\epsilon_{\ell} =  \wt{X}_{\ell} + U_{\ell}$ 
with $U_{\ell} = s_{\ell}(\wt{X}_{\ell})\epsilon_{\ell}$, $s_{\ell}(\wt{X}_{\ell}) = \wt{X}_{\ell}$, 
$\epsilon_{\ell}=(\wt{U}_{\ell}-1)$, $\epsilon_{\ell}$ independent from $\wt{X}_{\ell}$ and $\eE(\epsilon_{\ell})=0$. 
We have $\var(U_{\ell} \vert X_{\ell}) = X_{\ell}^{2}\var(\wt{U}_{\ell}) = X_{\ell}^{2}\{\exp(\sigma_{tr,u,\ell}^{2})-1\}$. 
We can therefore reformulate model (\ref{eq: Zhang 2011b sim model}) to closely resemble our proposed model from Section \ref{sec: models} in the main paper as 
\vspace{-4ex}\\
\bse
&& \hskip -20pt Y_{\ell,i,j} = \Ind(W_{\ell,i,j}>0), ~~~ W_{\ell,i,j} = \wt{X}_{\ell,i} + U_{\ell,i,j},~~~\hbox{for}~\ell=1,\dots,q,\\
&& \hskip -20pt Y_{\ell,i,j} = Y_{\ell-q,i,j} W_{\ell,i,j}, ~~~ W_{\ell,i,j} = \wt{X}_{\ell,i} + U_{\ell,i,j},~~~\hbox{for}~\ell=q+1,\dots,2q,\\
&& \hskip -20pt Y_{\ell,i,j} = W_{\ell,i,j}, ~~~W_{\ell,i,j} = \wt{X}_{\ell,i} + U_{\ell,i,j},~~~\hbox{for}~\ell=2q+1,\dots,2q+p.  
\ese
\vspace{-6ex}\\
In the original scale, we now also have  
\vspace{-4ex}\\
\bse
&& \hskip -20pt X_{\ell} = \eE(Y_{\ell+q} \vert \wt{X}_{tr,\ell},\wt{X}_{tr,q+\ell}) = \Phi(\wt{X}_{tr,\ell}) \exp(\wt{X}_{tr,q+\ell}) \eE(U_{tr,q+\ell} - \sigma_{tr,u,q+\ell}^{2}/2) = \Phi(\wt{X}_{tr,\ell}) \exp(\wt{X}_{tr,q+\ell}),\\
&& \hspace{13cm}\ell=1,\dots,q,\\
&& \hskip -20pt X_{\ell} = \eE(Y_{\ell+q} \vert \wt{X}_{tr,q+\ell}) = \exp(\wt{X}_{tr,q+\ell}) \eE(U_{tr,q+\ell} - \sigma_{tr,u,q+\ell}^{2}/2) = \exp(\wt{X}_{tr,q+\ell}), ~~~\ell=q+1,\dots,q+p.
\ese
\vspace{-4ex}\\
One difference with our proposed model that still remains is that, for $\ell=1,\dots,q$, the probability of reporting a positive recall is now $\Phi(\wt{X}_{\ell})$, 
where $\wt{X}_{\ell}=\wt{X}_{tr,\ell}$ is correlated with %the latent true average long-term consumption 
$X_{\ell}$, now a function of $\wt{X}_{tr,\ell}$ and $\wt{X}_{tr,q+\ell}$, 
but the probability does not directly depend on $X_{\ell}$.
%Expressions for $\bX$ %, the latent true average long-term consumptions in the original scale, 
%in terms of $\bX_{tr}$ %, the latent average long-term consumptions in the transformed scale, 
%are thus available in closed form. 

To find out the true joint and marginal densities of $\bX$, we first note that, by construction, $\exp(\bX_{tr}) \sim \MVLN_{2q+p}(\bmu_{tr,x},\bSigma_{tr,x})$, 
where $\MVLN_{d}(\bmu,\bSigma)$ denotes a $d$-dimensional multivariate lognormal distribution with mean $\bmu$ and covariance $\bSigma$ in the log scale. 
This implies that, marginally, $X_{\ell} = \exp(X_{tr,q+\ell}) \sim \LN(\mu_{tr,x,q+\ell},\sigma_{tr,x,q+\ell,q+\ell})$ for the regular components $\ell=q+1,\dots,q+p$, 
where $\LN(\mu,\sigma^{2})$ denotes a univariate lognormal distribution with mean $\mu$ and variance $\sigma^{2}$ in the log scale. 
Finding the marginal distributions for the episodic components $X_{\ell}, \ell=1,\dots,q$, 
and the joint distribution of the episodic and the regular components $\bX=(X_{1},\dots,X_{q+p})\trans$ in the original scale is, however, not so straightforward. 

Taking the transformation $\wt\bX_{tr} \to \bZ_{tr} = (Z_{tr,1},\dots,Z_{tr,q},Z_{tr,q+1},\dots,Z_{tr,2q+p})\trans$ with 
\vspace{-4ex}\\
\bse
&& Z_{tr,\ell} = \wt{X}_{tr,\ell}, ~~~~~~~\ell=1,\dots,q,\\ 
&& Z_{tr,\ell} = X_{\ell-q} = \Phi(\wt{X}_{tr,\ell-q})\exp(\wt{X}_{tr,\ell}), ~~~~~~~\ell=q+1,\dots,2q,\\ 
&& Z_{tr,\ell} = X_{\ell-q} = \exp(\wt{X}_{tr,\ell}), ~~~~~~~\ell=2q+1,\dots,2q+p
\ese
\vspace{-4ex}\\
with $J(\wt\bX_{tr} \to \bZ_{tr}) = (Z_{tr,q+1} \cdots Z_{tr,2q+p})^{-1} = (X_{1} \cdots X_{q+p})^{-1}$, 
we then have
\vspace{-4ex}\\
\bse
& f_{\bZ_{tr}}(Z_{tr,1},\dots,Z_{tr,q},X_{1},\dots,X_{q+p}) = \frac{1}{(\sqrt{2\pi})^{2q+p} {\abs{\bSigma_{tr,x}}}^{1/2} \prod_{\ell=1}^{q+p} X_{\ell}} \\
&\exp\left\{ - \frac{1}{2} 
\left(\begin{array}{c}
Z_{tr,1}-\mu_{tr,x,1}\\
\cdots \\
Z_{tr,q}-\mu_{tr,x,q}\\
\log \frac{X_{1}}{\Phi(Z_{tr,1})}-\mu_{tr,x,q+1}\\
\cdots\\
\log \frac{X_{q}}{\Phi(Z_{tr,q})}-\mu_{tr,x,2q}\\
\log X_{q+1}-\mu_{tr,x,2q+1}\\
\cdots\\
\log X_{q+p}-\mu_{tr,x,2q+p}\\
\end{array}\right)\trans 
\bSigma_{tr,x}^{-1}
\left(\begin{array}{c}
Z_{tr,1}-\mu_{tr,x,1}\\
\cdots \\
Z_{tr,q}-\mu_{tr,x,q}\\
\log \frac{X_{1}}{\Phi(Z_{tr,1})}-\mu_{tr,x,q+1}\\
\cdots\\
\log \frac{X_{q}}{\Phi(Z_{tr,q})}-\mu_{tr,x,2q}\\
\log X_{q+1}-\mu_{tr,x,2q+1}\\
\cdots\\
\log X_{q+p}-\mu_{tr,x,2q+p}\\
\end{array}\right) 
\right\}
\ese 
\vspace{-4ex}\\
The true joint distribution of the long-term latent consumptions of the episodic and the regular components $\bX=(X_{1},\dots,X_{q+p})\trans$ in the original scale may then be obtained as  
\vspace{-4ex}\\
\bse
&& f_{\bX}(X_{1},\dots,X_{q+p}) = \int_{\Z_{tr,1}} \cdots \int_{\Z_{tr,q}} f_{\bZ_{tr}}(Z_{tr,1},\dots,Z_{tr,q},X_{1},\dots,X_{q+p}) dZ_{tr,1}\cdots dZ_{tr,q}.
\ese
\vspace{-4ex}\\
It is not possible to evaluate this integral in a closed form. 
We can numerically estimate $f_{\bX}(\bX)$ using importance sampling as 
\vspace{-4ex}\\
\bse
&& \hskip -20pt f_{\bX}(\bX) = \int_{\Z_{tr,1}} \cdots \int_{\Z_{tr,q}} \frac{f_{\bZ_{tr}}(Z_{tr,1},\dots,Z_{tr,q},X_{1},\dots,X_{q+p})} {g_{\bZ_{tr,1:q}}(Z_{tr,1},\dots,Z_{tr,q})}  g_{\bZ_{tr,1:q}}(Z_{tr,1},\dots,Z_{tr,q}) dZ_{tr,1}\cdots dZ_{tr,q} \\
&& \wh{=} \frac{1}{M}\sum_{m=1}^{M} \frac{f_{\bZ_{tr,1:q},\bX}(\bZ_{tr,1:q}^{(m)},\bX)} {g_{\bZ_{tr,1:q}}(\bZ_{tr,1:q}^{(m)})},  
\ese
\vspace{-4ex}\\
where $g_{\bZ_{tr,1:q}}(\cdot)$ is an importance sampling density 
and $\{\bZ_{tr,1:q}^{(m)}\}_{m=1}^{M} \sim g_{\bZ_{tr,1:q}}(\cdot)$ independently.

By design, we also have, for $\ell=1,\dots,q$, $(\wt{X}_{tr,\ell},\wt{X}_{tr,q+\ell})\trans \sim \MVN_{2}(\bmu_{tr,x,\ell,q+\ell},\bSigma_{tr,x,\ell,q+\ell})$ 
where $\bmu_{tr,x,\ell,q+\ell},\bSigma_{tr,x,\ell,q+\ell}$ are obtained from $\bmu_{tr,x}$ and $\bSigma_{tr,x}$ as $\bmu_{tr,x,\ell,q+\ell}=(\mu_{tr,x,\ell},\mu_{tr,x,q+\ell})\trans$ etc. 
Proceeding as above, we thus have 
\vspace{-4ex}\\
\bse
& f_{Z_{tr,\ell},Z_{tr,q+\ell}}(Z_{tr,\ell},X_{\ell}) = \frac{1}{(\sqrt{2\pi})^{2} {\abs{\bSigma_{tr,x,\ell,q+\ell}}}^{1/2} X_{\ell}} \\
&\exp\left\{ - \frac{1}{2} 
\left(\begin{array}{c}
Z_{tr,\ell}-\mu_{tr,x,\ell}\\
\log \frac{X_{\ell}}{\Phi(Z_{tr,\ell})}-\mu_{tr,x,q+\ell}\\
\end{array}\right)\trans 
\bSigma_{x,tr,\ell,q+\ell}^{-1}
\left(\begin{array}{c}
Z_{tr,\ell}-\mu_{tr,x,\ell}\\
\log \frac{X_{\ell}}{\Phi(Z_{tr,\ell})}-\mu_{tr,x,q+\ell}\\
\end{array}\right)
\right\}.
\ese 
\vspace{-4ex}\\
The marginal densities of the episodic components $X_{\ell},\ell=1,\dots,q$, can therefore be estimated as before as 
\vspace{-4ex}\\
\bse
&& \hskip -20pt f_{X_{\ell}}(X_{\ell}) = \int_{\Z_{tr,\ell}} \frac{f_{Z_{tr,\ell}, Z_{tr,q+\ell}}(Z_{tr,\ell},X_{\ell})} {g_{Z_{tr,\ell}}(Z_{tr,\ell})}  g_{Z_{tr,\ell}}(Z_{tr,\ell}) dZ_{tr,\ell} 
\wh{=} \frac{1}{M}\sum_{m=1}^{M} \frac{f_{Z_{tr,\ell},X_{\ell}}(Z_{tr,\ell}^{(m)},X_{\ell})} {g_{Z_{tr,\ell}}(Z_{tr,\ell}^{(m)})},  
\ese
\vspace{-4ex}\\
where $g_{Z_{tr,\ell}}(\cdot)$ is an importance sampling density 
and $\{Z_{tr,\ell}^{(m)}\}_{m=1}^{M} \sim g_{Z_{tr,\ell}}(\cdot)$ independently.

\iffalse
\vspace{0.25cm}
\begin{table}[!ht]
\begin{center}
\footnotesize
\begin{tabular}{|c|c|c|c|c|c|}
\hline
\multirow{2}{24pt}{No of ECs} &  \multirow{2}{24pt}{No of RCs} 	& \multirow{2}{28pt}{Sample Size}	& \multicolumn{2}{|c|}{Median ISE $\times 1000$} 	\\ \cline{4-5}
					&						& 		  		&{Zhang, et al. (2011)} 		& Our Method	\\ \hline \hline
%\multirow{1}{*}{0}		& \multirow{1}{*}{3}			& 1000			& 						& 	\\\cline{1-5}
\multirow{1}{*}{2}		& \multirow{1}{*}{1}			& 1000			& 						& 	\\\cline{1-5}
%\multirow{1}{*}{3}		& \multirow{1}{*}{0}			& 1000			& 						& 	\\\cline{1-6}
\hline
\end{tabular}
\caption{\baselineskip=10pt 
Median integrated squared error (MISE) performance 
of density deconvolution models described in Section \ref{sec: models} of this article 
compared with the methods of \cite{Zhang2011b} 
when we simulated closely mimicking the model of \cite{Zhang2011b}. 
See Section \ref{sec: additional sims} in the Supplementary Material for additional details.
Here, EC and RC are abbreviations for episodic and regular components, respectively.
We have reported here the MISEs for estimating the three-dimensional joint densities as well as the three univariate marginals (in parenthesis). 
}
\label{tab: MISEs 2}
\end{center}
\end{table}
\vspace{-40pt}
\fi

\vspace{-10pt}
\subsubsection*{Implementation of \cite{Zhang2011b}:}
Apart from closely mimicking the model of \cite{Zhang2011b}, 
we assume that the transformation $g_{tr}(Y) = \log Y$ is known, making the situation further favorable for \cite{Zhang2011b}. 
In our implementation, as described in Section \ref{sec: mvt copula comparison with NCI}, 
we set $g_{tr}(Y,\lambda) = \sqrt{2}\{g(Y,\lambda)-\mu(\lambda)\}/\sigma(\lambda) = \log~Y$ with $\lambda=0, \mu(\lambda)=0$ and $\sigma(\lambda)=\sqrt{2}$.
We also recall from Section \ref{sec: mvt copula comparison with NCI} that \cite{Zhang2011b} relies on 
first estimating the latent consumptions $\wt\bX_{tr,\ell,i}$'s in the transformed scale, 
then transforming them back to $\bX_{\ell,i}$'s in the original scale,
and then applying separate univariate and multivariate kernel density estimation methods to these estimates of $\bX_{\ell,i}$'s to approximate the marginal and joint densities of interest.  
Applied to data generated from model (\ref{eq: Zhang 2011b sim model}), 
the method would, however, estimate 
$\wt{X}_{tr,\ell,i}^{\prime} = \wt{X}_{tr,\ell,i}, \ell=1,\dots,q$, 
$\wt{X}_{tr,\ell,i}^{\prime} = \wt{X}_{tr,\ell,i}-\sigma_{tr,u,\ell}^{2}/2, \ell=q+1,\dots,2q+p$ for all $i=1,\dots,n$. 
To adjust for this bias, in the reverse transformation, we set 
\vspace{-4ex}\\
\bse
&& X_{\ell,i} = \Phi(\wt{X}_{tr,\ell,i}) \exp(\wt{X}_{tr,q+\ell,i}) = \Phi(\wt{X}_{tr,\ell,i}) \exp(\wt{X}_{tr,q+\ell,i}^{\prime}+\sigma_{tr,u,q+\ell}^{2}/2), ~~~~~\ell=1,\dots,q,\\
&& X_{\ell,i} = \exp(\wt{X}_{tr,q+\ell,i}) = \exp(\wt{X}_{tr,q+\ell,i}^{\prime}+\sigma_{tr,u,q+\ell}^{2}/2), ~~~~~~\ell=q+1,\dots,q+p.
\ese

\vspace{-20pt}
\subsubsection*{Parameter Choices:}
We focused on a case with dietary components of mixed types, $q=2$ episodic and $p=1$ regular, $i=1,\dots,n=1000$ subjects with $m_{i}=3$ surrogates for each $i$. 
We set $\bmu_{tr,x} = (0.75,1.00,0.15,0.15,1.00)\trans$, 
$\bSigma_{tr,x} = ((\sigma_{tr,x,\ell,\ell'}))$ as $\sigma_{tr,x,1,1}=0.25, \sigma_{tr,x,2,2}=0.15, \sigma_{tr,x,3,3}=\sigma_{tr,x,4,4}=0.25, \sigma_{tr,x,5,5}=0.05$, 
and $\sigma_{tr,x,\ell,\ell'}=0.7^{\abs{\ell-\ell'}}$ for all $\ell \neq \ell'$. 
We set $\bSigma_{tr,u} = ((\sigma_{tr,u,\ell,\ell'}))$ as $\sigma_{tr,u,\ell,\ell}=1$ for $\ell=1,\dots,q$, 
$\sigma_{tr,u,\ell,\ell}=0.125$ for $\ell=q+1,\dots,2q+p$. 
$\sigma_{tr,u,\ell,\ell'}=0.5^{\abs{\ell-\ell'}}$ for all $\ell \neq \ell'$. 
We had on average approximately $25\%$ and $18\%$ zero recalls for the two episodic components in the simulated data sets.

\vspace{-20pt}
\subsubsection*{Summary of Findings:}

Despite being very careful that the model we simulated from match the assumptions of \cite{Zhang2011b} in a transformed scale 
but also closely conform to our proposed model in the original scale, the method of \cite{Zhang2011b} outperformed our proposed method in the simulation scenario considered here, often significantly. 
A typical case is depicted in Figures \ref{fig: SimStudy from Zhang2011b 1} and \ref{fig: SimStudy from Zhang2011b 2} below.

\begin{figure}[!ht]
\begin{center}
\includegraphics[height=10cm, width=17cm, trim=0cm 0cm 0cm 0cm, clip=true]{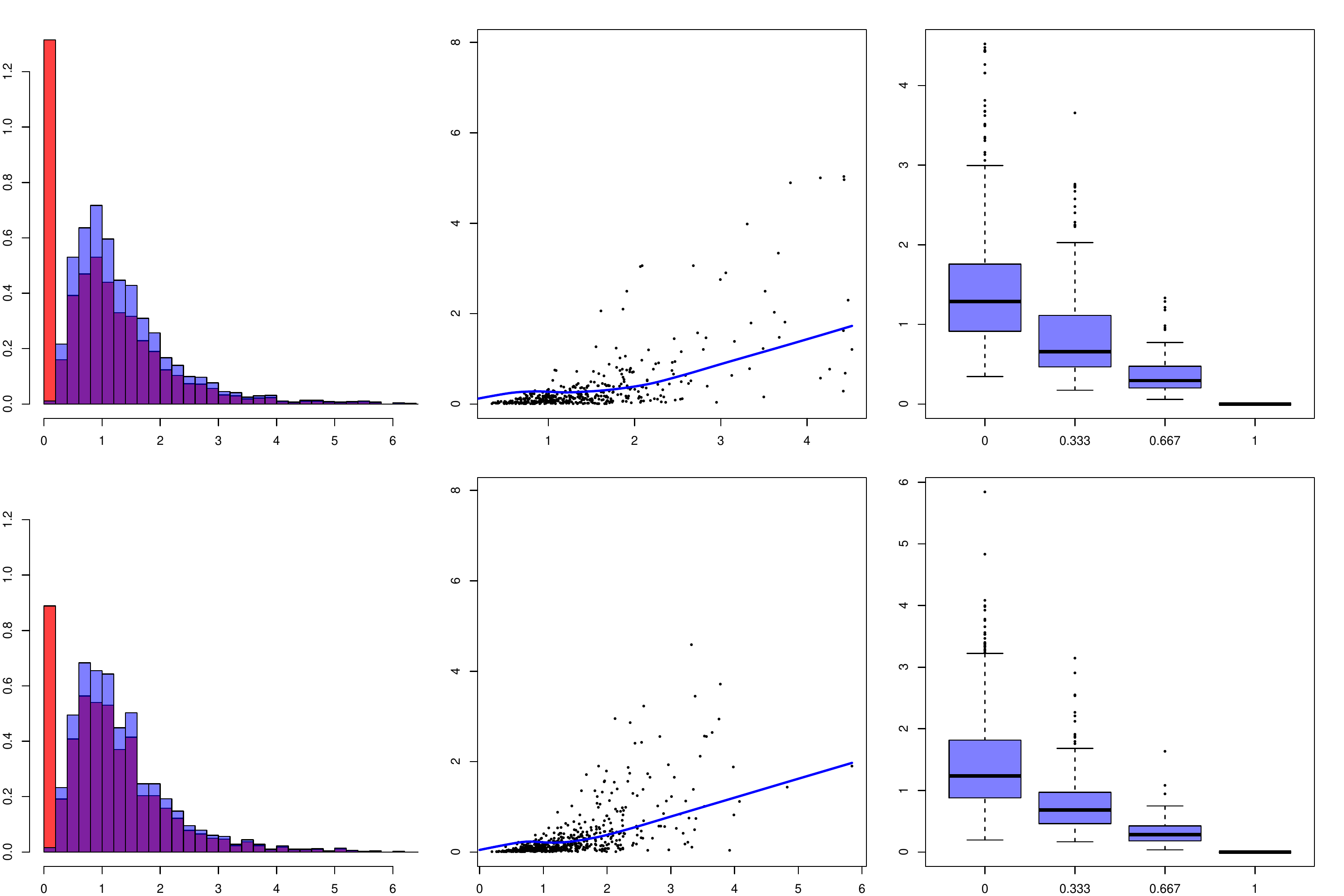}
\end{center}
\caption{\baselineskip=10pt Exploratory a data set simulated according to the process detailed in Section \ref{sec: additional sims} in the Supplementary Material 
with sample size $n = 1000$, $q=2$ episodic components and $p=1$ regular components, each subject having $m_{i}=3$ replicates. 
Left panels: histogram of recalls $Y_{\ell,i,j}$ (red) and histogram of strictly positive recalls $Y_{\ell,i,j}(>0)$ (blue) superimposed on each other; 
middle panels: subject-specific means $\overline{Y}_{\ell,i}$ vs subject-specific variances $S_{Y,\ell,i}^{2}$ when multiple strictly positive recalls are available; 
right panels: box plots of proportion of zero recalls vs corresponding subject-specific means $\overline{Y}_{\ell,i}$.
}
\label{fig: SimStudy from Zhang2011b 1}
\end{figure}

\begin{figure}[!ht]
\centering
\includegraphics[height=15cm, width=16cm, trim=0cm 0cm 0cm 0cm, clip=true]{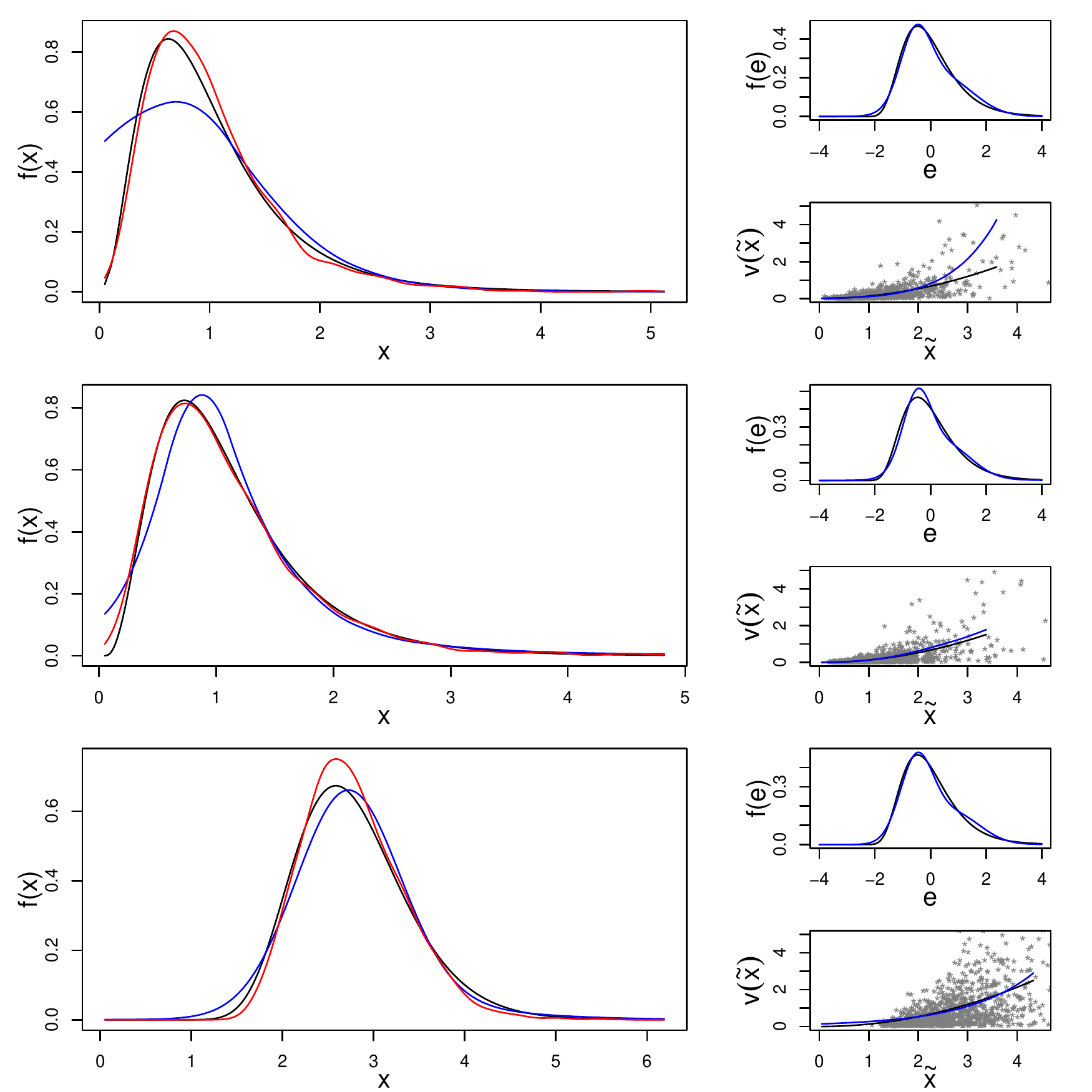}
\caption{\baselineskip=10pt 
Results for a data set simulated according to the process detailed in Section \ref{sec: additional sims} in the Supplementary Material 
with sample size $n = 1000$, $q=2$ episodic components and $p=1$ regular components, each subject having $m_{i}=3$ replicates. 
From top to bottom, the left panels show the estimated densities $f_{X,\ell}(X_{\ell})$ of the two episodic components and the one regular component, respectively, 
obtained by our method (in blue) and the method of \cite{Zhang2011b} (in red).  
The right panels show the estimated distributions of the scaled errors $f_{\epsilon,q+\ell}(\epsilon_{q+\ell})$ and the estimated variance functions $v_{\ell}(\wt{X}_{\ell}) = s_{\ell}^{2}(\wt{X}_{\ell})$, estimated by our method. 
The black lines represent the truths (the right panels) 
or their importance sampling based approximations (the left panels). 
}
\label{fig: SimStudy from Zhang2011b 2}
\end{figure}

The main challenge in modeling dietary recall data for episodic dietary components with exact zero recalls is again the sparsity of informative recalls near the left boundary, 
most of them being exact zeros. 
Our method is well suited to model reflected J-shaped densities for episodic components 
with discontinuities at the left boundaries observed in real data sets 
as well as multimodality, heavy tails etc. for both regular and episodic components. 
Our method, however, is less suited to model unimodal left skewed densities generated in the simulation scenario considered here, as is reflected in Figure \ref{fig: SimStudy from Zhang2011b 2}. 
\cite{Zhang2011b}, on the other hand, assume the densities in Box-Cox transformed scales to be exactly normal, thus perfectly symmetric, unimodal and bell-shaped. 
This means, that these densities, when transformed back to the original scale, will always have a similar shape but only more left skewed with longer right tails. 
This is clearly evident from the left panels of Figure \ref{fig: SimStudy from Zhang2011b 2}. 
This is not specific to the logarithmic transformations assumed here but is more generally true for any Box-Cox transformation. 
The method of \cite{Zhang2011b}, therefore, can never capture the discontinuities at the left boundaries for episodic dietary components. 
As a result, recall data simulated from the model of \cite{Zhang2011b} will also never closely mimick 24 hour recalls for episodic components observed in real data sets. 
This is again clearly evident from the left panels of Figure \ref{fig: SimStudy from Zhang2011b 1}, 
especially in comparison with the left panels of Figure \ref{fig: EATS Milk & Whole Grains} from the main paper. 
The parametric assumptions of \cite{Zhang2011b} are also highly restrictive for modeling other departures from normality in the transformed and hence also the original scale, 
including multimodality, heavy tails etc. 
The simulation scenario described here is thus highly restrictive and unrealistic. 

In real world dietary recall data sets, including our motivating EATS data set, the true densities of the average long-term consumptions for episodic components 
are extremely left skewed reflected J-shaped with discontinuities at the left boundaries.  
In more realistic simulation scenarios considered in Section \ref{sec: simulation studies} of the main paper, 
where the true densities conformed to such shapes, our method vastly outperformed the method of \cite{Zhang2011b}.

%\clearpage
\bibliographystylelatex{natbib}
\bibliographylatex{ME,Copula}

\end{document}